\documentclass{article}
\usepackage{arxiv}

\usepackage[utf8]{inputenc}
\usepackage[T1]{fontenc}
\usepackage{amsmath,amssymb,amsthm}
\usepackage{graphicx,subcaption}
\usepackage{booktabs}
\usepackage{microtype}
\usepackage[numbers]{natbib}
\usepackage{hyperref}   
\hypersetup{colorlinks=true, linkcolor=blue, citecolor=blue, urlcolor=blue}

\theoremstyle{plain}

\newtheorem{theorem}{Theorem}[section]
\newtheorem{lemma}[theorem]{Lemma}
\newtheorem{proposition}{Proposition}

\theoremstyle{definition}
\newtheorem{definition}[theorem]{Definition}

\theoremstyle{remark}
\newtheorem{case}{Case}

\newcommand{\PSD}{f}
\newcommand{\vect}[1]{\boldsymbol{#1}}
\newcommand{\matr}[1]{\mathbf{#1}}

\newcommand{\Real}{\operatorname{Re}}
\newcommand{\Imag}{\operatorname{Im}}
\newcommand{\mean}{\mathbb E}
\newcommand{\vari}{\mathbb Var}

\title{Decomposing the spectral density function of an ARMA process}

\author{
  Darlin Soto-V\'{a}squez \\
  Departamento de Estad\'{\i}stica\\
  Universidad del B\'{\i}o-B\'{\i}o\\
  \texttt{dmsoto@ubiobio.cl}
  \And
  Francisco Cuevas-Pacheco \\
  Departamento de Matem\'{a}tica\\
  Universidad T\'{e}cnica Federico Santa Mar\'{i}a\\
  \texttt{francisco.cuevas@usm.cl}
  \AND
  Giovanni Motta \\
  Department of Mathematics and Statistics\\
  The City University of New York\\
  \texttt{gm444@hunter.cuny.edu}
  \And
  Malgorzata Sobolewska \\
  Harvard-Smithsonian Center for Astrophysics\\
  \\
  \texttt{msobolewska@cfa.harvard.edu}
}

\date{}

\begin{document}
\maketitle

\begin{abstract}
Quasi-periodic oscillations are characterized by fitting Lorentzian functions to a nonparametric estimate of the power spectral density, a two-step procedure that separates estimation from modeling and does not accommodate measurement error. We develop a parametric alternative based on autoregressive moving-average (ARMA) processes observed with measurement error. We derive an exact decomposition of the ARMA spectral density into real-valued components, one for each class of autoregressive roots under complex conjugation, plus a polynomial term when the moving-average order is at least the autoregressive order. Each component is a Wrapped Cauchy kernel, and for complex roots it is accompanied by a conjugate-Poisson term that makes the peak asymmetric and displaces its maximum from the frequency given by the argument of the root. We obtain closed-form expressions for the location, height, width, and asymmetry of each peak, and a matching variance decomposition. Parameters are estimated by maximum likelihood through the Kalman filter, and all descriptors follow by plug-in. Simulations show that the closed-form peak location is recovered without appreciable bias. An X-ray binary light curve illustrates the methodology.
\end{abstract}

\keywords{ARMA process \and Spectral density \and Measurement error}

\par\noindent{\bfseries\itshape MSC2020}\enspace Primary 62M15, 62M10; secondary 62P35, 85A35.

\section{Introduction}\label{intro}

Time series analysis is a fundamental tool in many scientific disciplines, from
astrophysics and geophysics to neuroscience and economics. A central objective is
to characterize the oscillatory behavior of a process, often by studying its power
spectral density (PSD). In particular, identifying and quantifying periodic or
quasi-periodic oscillations (QPOs) embedded in noisy, broadband signals is a
common challenge across fields. Traditionally, practitioners have relied on a
two-step approach: first, compute a nonparametric estimator of the PSD, such as
the periodogram or the Welch estimator \citep{Welch1967}; second, fit a sum of
phenomenological functions, typically Lorentzians, to the estimated spectrum to
extract peak frequencies, widths, and amplitudes. While widely adopted, this
methodology suffers from several statistical limitations: it separates estimation
from modeling, leading to loss of efficiency; it lacks a coherent probabilistic
framework for uncertainty quantification; and it does not naturally accommodate
measurement error or complex dependence structures.

An alternative and more principled approach is to represent the observed time
series using a parametric model that directly captures its second-order
structure. Among parametric models, autoregressive moving-average (ARMA)
processes \citep{Brockwell2001,brockwell1991} occupy a central role in time
series analysis due to their flexibility and parsimony. An ARMA($p,q$) process is
defined by a finite set of autoregressive and moving-average parameters, and its
spectral density has a known rational form. When the process is stationary and
causal, the shape of the PSD is governed by the roots of the autoregressive
polynomial. This suggests that, rather than first estimating the spectrum
nonparametrically and then fitting Lorentzians, one could estimate the ARMA
parameters directly and derive the spectral peaks analytically from those roots.

The idea of attributing spectral features to individual autoregressive roots is
not new. In continuous time, the spectral density of a continuous-time
autoregressive moving average (CARMA) process admits a decomposition into
Lorentzian-type components, one per autoregressive root, a property exploited by
\cite{Kelly2014} to analyze quasi-periodic oscillations in astrophysical light
curves. In discrete time, \cite{Pelikan84} expressed the ARMA spectral density as
a sum of root-associated terms. What is missing is a decomposition of the ARMA
spectral density into real-valued components, valid for arbitrary orders $p$ and
$q$, together with closed-form descriptors of the peaks those components produce.
The present paper fills this gap.

Our first contribution is an exact decomposition of the spectral density of a
causal ARMA($p,q$) process. We show that the PSD splits into $p$ components, one
for each root of $\phi(z^{-1})=0$, together with a polynomial term
$\PSD_{X,0}(\omega)$ that appears only when $q\ge p$ and which we characterize in
closed form. Individual components associated with complex roots are
complex-valued, so we group the roots into equivalence classes under conjugation
and show that each class contributes a real-valued function: a real root
contributes a single rescaled Wrapped Cauchy kernel, equivalently a rescaled
Poisson kernel, centered at frequency $0$ or at the Nyquist frequency
$1/(2\Delta)$ according to its sign, while a complex-conjugate pair contributes
the sum of a Wrapped Cauchy kernel and a sine-modulated conjugate-Poisson term,
together with their reflections about the origin. Writing $n_R$ and $n_C$ for the
numbers of real and complex classes, the spectral density is thus a sum of
$n_R+2n_C$ real-valued terms, plus one further term when $q\ge p$. The Wrapped
Cauchy kernel carries the entire spectral mass of a class, with concentration
governed by the modulus $s$ of the root, while the conjugate-Poisson term
integrates to zero and affects only the shape of the peak.

Our second contribution concerns the peaks themselves. In the phenomenological
approach, the oscillation frequency of a QPO is read off as the centroid
$\omega_{j,0}/(2\pi)$ of a Lorentzian fitted to the periodogram, and in the
CARMA representation of \cite{Kelly2014} this centroid is given by the imaginary
part of an autoregressive root. Our decomposition yields the discrete-time
analogue: each complex class carries a Wrapped Cauchy kernel centred at
$\alpha/(2\pi\Delta)$, where $\alpha$ is the argument of the representative root,
so that $\alpha/(2\pi\Delta)$ plays the role of the Lorentzian centroid. We show,
however, that this centroid is not where the class component peaks: the
conjugate-Poisson term is antisymmetric about $\alpha/(2\pi\Delta)$ and tilts the
kernel, so the maximum is displaced. We derive the exact location and height of
that maximum in closed form, together with the frequencies at which the
component falls to a fixed fraction $k$ of its peak value, the resulting width,
which at $k=1/2$ is the full width at half maximum, and an asymmetry index
measuring the imbalance between the two half-widths. All of these are explicit
functions of the model parameters, requiring no numerical search over frequency.
The displacement from $\alpha/(2\pi\Delta)$ is negligible for sharply
concentrated roots but substantial for broad ones, so reading the peak frequency
from the argument of the root, as the analogy with the Lorentzian centroid
suggests, is justified only in the former regime.

Our third contribution is a decomposition of the process variance that matches
the spectral decomposition term by term: a real class contributes $2\pi K(r)$, a
complex class contributes $4\pi\,\Real(K(r))$, and measurement error contributes
$\sigma^2_\eta$. Because these contributions are signed, they are interpreted as
components of the variance rather than as proportions, and they allow the
relative importance of broadband and oscillatory features to be quantified.
Throughout, the decomposition is developed for ARMA processes observed with
additive measurement error, which in applications sets the level of the
high-frequency floor and therefore the prominence of any peak above it.

Two strands of prior work are close to ours and help delimit our contribution.
The first is the frequency-domain decomposition of \cite{Pelikan84}, who obtained
$$\PSD_X(\omega) = \frac{1}{2\pi}\sum_{m=1}^{p} d_m\,
\frac{1 - r_m^2}{(1 - r_m e^{i 2\pi\omega \Delta})(1 - r_m e^{-i2\pi\omega \Delta})},$$
where $r_m$ are the roots of the autoregressive polynomial and $d_m$ are complex
coefficients obtained from the autocovariance function. This decomposition is
derived under three restrictions: the autoregressive polynomial must have
distinct roots; the autoregressive and moving-average polynomials must share no
common roots; and, critically, the autoregressive order must strictly exceed the
moving-average order, $p>q$. Under the last condition the moving-average roots do
not contribute, as the contour integral in the derivation is evaluated only at the
poles of the autoregressive polynomial. Moreover, the components are given in
complex form, without a separation into real-valued, individually interpretable
spectral contributions, and no descriptors of the resulting peaks are derived.

The second strand concerns continuous time and resonator-based models. The
spectral density of a CARMA$(p,q)$ process is a sum of Lorentzian-type terms, one
for each pair of complex-conjugate autoregressive roots and one for each real
root, whose centroid and width are determined by the imaginary and real parts of
the corresponding root \citep{Brockwell2001,Kelly2014}. Our decomposition plays
the same role in discrete time, but differs in three respects. First, the
discrete-time components are Wrapped Cauchy kernels, periodic on the frequency
band, rather than Lorentzians. As in continuous time when $q>0$, the complex
weights add an antisymmetric term that tilts the peak; we characterize the
resulting displacement, width and asymmetry in closed form, effects that the
reading of the peak frequency from the imaginary part of the root does not
account for. Second, a real root may be negative, producing a component centred
at the Nyquist frequency, which has no counterpart in continuous time. Third, a
CARMA$(p,q)$ process requires $q<p$, whereas our result treats arbitrary $q$
relative to $p$ and characterizes explicitly the polynomial term $\PSD_{X,0}$
that arises when $q\ge p$. A related line of work represents the spectrum
directly as a superposition of elementary resonators. In the Gaussian-process
literature, the celerite model of \cite{foremanmackey2017} builds covariance
functions as sums of stochastically-driven damped harmonic oscillators with
peaked, Lorentzian-type spectra, chosen so that likelihood evaluation scales
linearly in the sample size; and \cite{granadosgarcia2022} model a spectrum as a
Bayesian mixture of AR$(2)$ components, each contributing a single peak of given
location and bandwidth. These approaches posit resonators as modeling primitives
to be fitted, whereas we obtain the resonator structure as an exact algebraic
identity for the spectral density of an already-estimated ARMA$(p,q)$ process,
with each component tied to a specific autoregressive root.

Relative to \cite{Pelikan84}, we generalize in four ways. First, we remove the
restriction $p>q$, so that the decomposition covers all ARMA orders, with the
additional component $\PSD_{X,0}(\omega)$ characterized explicitly when $q\ge p$.
Second, we express the decomposition entirely through real-valued functions, each
tied to an equivalence class of autoregressive roots. Third, we derive the peak
location, width and asymmetry of each class in closed form, quantities that the
earlier decomposition does not provide and that are precisely what a
phenomenological Lorentzian fit is used to obtain. Fourth, we derive the
accompanying variance decomposition and extend the whole framework to processes
observed with measurement error.

To make the methodology operational, we estimate ARMA models by maximum
likelihood through a state-space representation and the Kalman filter
\citep{hagiwara2021time}, which accommodates measurement error naturally and
evaluates the likelihood at a cost linear in the sample size \citep{Gardner1980}.
Plug-in estimators of every spectral component and of every peak descriptor
follow by substituting the estimated parameters into the analytical expressions.
A simulation study covering an AR($2$) process with a single sharply concentrated
class, an AR($4$) process combining a sharp and a broad class, and an ARMA($3,2$)
process with one complex and one real class shows that the closed-form peak
location is recovered without appreciable bias whenever the underlying root is
identified. It also shows that the two natural alternatives are inadequate
outside the sharply concentrated regime: the nominal centroid
$\alpha/(2\pi\Delta)$ carries a bias equal to the displacement our result
predicts, which for a class of modulus $0.70$ is more than three times the
sampling standard deviation and does not diminish with the sample size; and a
numerical maximization of the fitted spectral density, which locates the maximum
of the total density rather than of the individual component, has a root mean
squared error an order of magnitude larger for the same class.

Finally, we illustrate the approach by analyzing an X-ray binary light curve from
the source XTE 1550$-$564, observed by the Rossi X-ray Timing Explorer (RXTE).
Following \cite{Kelly2014}, we apply a logarithmic transformation to
approximately Gaussianize the data and incorporate a known measurement error
variance estimated by the method of \cite{Vaughan2003}. The fitted ARMA($3,2$)
model captures the dominant quasi-periodic oscillation at approximately $2.84$~Hz
through a single complex-conjugate class, and the variance decomposition
attributes $55.5\%$ of the total variance to the broadband component associated
with the real class, $26.8\%$ to the oscillatory class, and $17.7\%$ to
measurement noise. Residual diagnostics support the adequacy of the fit.

While our application is motivated by astrophysics, the framework is general and
applicable wherever spectral peaks are of interest, including neuroscience,
geophysics and engineering. By providing an analytically tractable and
interpretable decomposition of the ARMA spectral density, together with
closed-form descriptors of the peaks it contains, this work connects parametric
time series modeling to the phenomenological peak fitting that is standard
practice in those fields.

The rest of the paper is organized as follows. Section~\ref{ARMA_models} reviews
ARMA processes with measurement error and their spectral density, and situates
the Lorentzian description of QPOs.
Section~\ref{decomp_PSD_ARMA} develops the spectral decomposition, the closed
forms for the peak location, width and asymmetry, and the variance decomposition.
Section~\ref{statespace_estimation} describes estimation by maximum likelihood
through the Kalman filter and the resulting plug-in estimators.
Section~\ref{simulations} reports the simulation study,
Section~\ref{application} the analysis of the RXTE light curve, and
Section~\ref{conclusion} concludes.

\section{Autoregressive moving-average process and spectral density}\label{ARMA_models}

This section develops the ARMA($p,q$) process with measurement error, which serves as the foundation for the modeling framework used throughout this work. We establish the model definition, its autocovariance structure, and its spectral density, which together provide the tools needed for the decomposition results that follow. Readers seeking a more exhaustive treatment of the classical theory may consult \cite{Brockwell2001} and \cite{brockwell1991}.

An ARMA$(p,q)$ process with measurement error \citep{liu2011}, denoted by $\{Y_t\}$, is a discrete-time real-valued stochastic process indexed by $t \in \mathbb{Z}$ with constant sampling interval $\Delta > 0$. It has autoregressive order $p$, moving average order $q$, autoregressive coefficient $\vect{\phi}=(\phi_1,\dots,\phi_p)$, moving average coefficient $\vect{\theta}=(\theta_1,\dots,\theta_q)$, and is driven by white noise $\{Z_t\}$. The process is defined by
\begin{align}
\phi(B) X_{t} &= \theta(B) Z_{t} \label{ARMA_def_sin_me}\\
Y_{t} &= X_{t} + \eta_{t}, \label{ARMA_def_with_me}
\end{align}
where $B$ denotes the backward shift operator, defined by $B X_t = X_{t-1}$, and the polynomials $\phi(\cdot)$ and $\theta(\cdot)$ are given by
\begin{align*}
    \phi(z) &= 1 - \phi_1 z - \cdots - \phi_p z^p, \\
    \theta(z) &= 1 + \theta_1 z + \cdots + \theta_q z^q,
\end{align*}
with $\phi_p \neq 0$ and $\theta_q \neq 0$. Here $\eta_{t}$ is the measurement error, $Y_{t}$ is the observed counterpart of $X_{t}$, and $Z_{t}$ and $\eta_{t}$ are mutually independent white noise sequences with zero means and variances $\sigma^2_Z$ and $\sigma^2_{\eta}$, respectively. We denote by $\vect{\beta}=(\vect{\phi},\vect{\theta},\sigma^2_Z)$ the vector of parameters of the underlying ARMA process $\{X_t\}$.

We assume throughout that the ARMA process is causal and stationary,
which requires every root of $\phi(z)$ to lie outside the unit circle,
i.e., $|z|>1$. Equivalently, the roots $r_1,\dots,r_p$ of
$\phi(z^{-1})$ satisfy $|r_j|<1$. We denote ARMA$(p,0)$ by AR$(p)$, and ARMA$(0,q)$ by MA$(q)$.

A causal ARMA($p,q$) process with measurement error has autocovariance function (ACVF) at lag $h$
\begin{align*}
\gamma_Y(0;\vect{\beta},\sigma^2_{\eta}) &= \gamma_X(0;\vect{\beta}) + \sigma^2_{\eta}, \\
\gamma_Y(h;\vect{\beta},\sigma^2_{\eta}) &= \gamma_X(h;\vect{\beta}), \qquad h \neq 0, \nonumber
\end{align*}
where the ACVF of $X_t$, $\gamma_X(h;\vect{\beta})$, is given by
\begin{equation}\label{ACVF_original}
\gamma_X(h;\vect{\beta}) = \sigma^2_{Z} \sum_{j=0}^{\infty} \xi_j\, \xi_{j+|h|}, \qquad h \in \mathbb{Z},
\end{equation}
with the coefficients $\xi_j$ obtained from the power series expansion
\begin{equation*}
\xi(z) = \sum_{j=0}^{\infty}\xi_j z^j = \frac{\theta(z)}{\phi(z)}, \qquad |z|\leq 1,
\end{equation*}
and $\xi_0=1$.

The spectral density of a causal ARMA($p,q$) process with measurement error is
\begin{align}
\PSD_Y(\omega;\vect{\beta},\sigma^2_{\eta}) &= \PSD_X(\omega;\vect{\beta}) + \PSD_{\eta}(\omega;\sigma^2_{\eta}), \qquad \frac{-1}{2\Delta} < \omega \leq \frac{1}{2\Delta} \label{PSD_ARMA_with_error}\\
\PSD_X(\omega;\vect{\beta}) &= \frac{\Delta \sigma^2_Z}{2\pi} \frac{\left| \theta(e^{-i2\pi \omega\Delta}) \right|^2}{\left| \phi(e^{-i2\pi \omega\Delta}) \right|^2} \label{PSD_ARMA_without_error}\\
\PSD_{\eta}(\omega;\sigma^2_{\eta}) &= \frac{\Delta \sigma^2_{\eta}}{2\pi}. \label{PSD_me}
\end{align}

Astronomers have long used the periodogram to study the PSD of astronomical sources \citep{PapadakisLawrence1993,Uttley2014,vanderKlis1989}, and in a variety of accreting systems this estimator reveals two recurring features: broad, low-frequency structures associated with noise, and narrow, finite-width peaks known as quasi-periodic oscillations. A QPO is usually modeled by a Lorentzian function (also known as the Cauchy distribution),
\begin{equation}\label{Lorenztian}
L(\omega) = \frac{\lambda}{(2\pi\omega-\omega_0)^2+(\lambda/2)^2},
\end{equation}
characterized by centroid frequency $\omega_0/(2\pi)$, full width at half maximum (FWHM) $\lambda/(2\pi)$, and quality factor $Q=\omega_0/\lambda$; signals are conventionally classified as QPOs when $Q>2$ \citep{vanderKlis2006}. Figure~\ref{fig:lorent_example} illustrates this parametrization, indicating the centroid frequency and FWHM of a Lorentzian function.

\begin{figure}[tb]
\centering
\includegraphics[width=10cm]{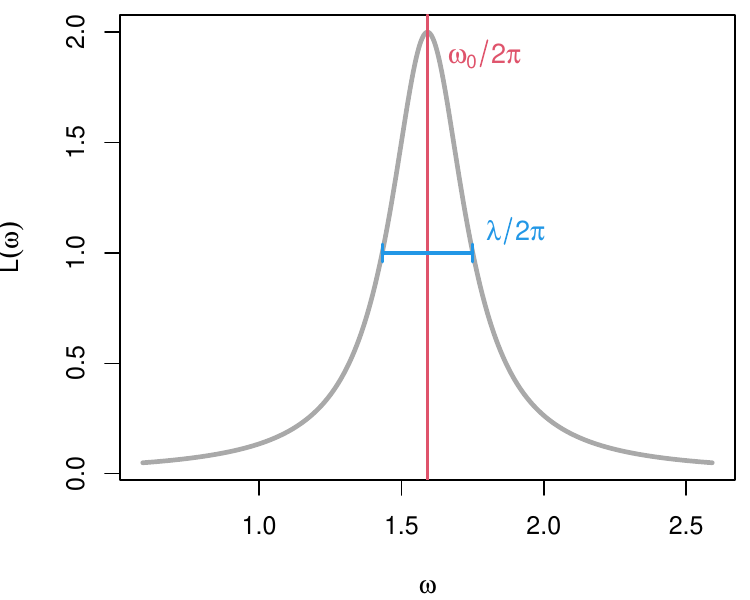}
\caption{Lorentzian function $L(\omega)$ in \eqref{Lorenztian}, with centroid at $\omega_0/2\pi=1.5915$ (red line) and FWHM equal to $\lambda/2\pi=0.3183$ (blue segment).}
\label{fig:lorent_example}
\end{figure}

Motivated by these observed peaks, a widely adopted approach assumes that the PSD itself can be represented as a sum of Lorentzian functions \citep{Belloni2002,Malzac2018,Nowak2000,pawar2015}:
\begin{equation*}
\PSD_X(\omega) = \sum_{j=1}^{J} L_j(\omega), \qquad
L_j(\omega)=\frac{\lambda_j}{(2\pi\omega-\omega_{j,0})^2+(\lambda_j/2)^2}, \quad j=1,\dots,J,
\end{equation*}
where each component $L_j$ is characterized by a centroid frequency $\omega_{j,0}/(2\pi)$ and a FWHM $\lambda_j/(2\pi)$. The parameters $(\omega_{j,0}, \lambda_j)$ are typically estimated by fitting this sum directly to the periodogram ordinates, via $\chi^2$ minimization or maximum likelihood \citep{Barret2012,Nowak2000}. This framework has been widely used to identify and classify QPOs and broadband noise components across accreting compact objects, most notably X-ray binaries \citep{IngramMotta2019}. Because the Lorentzian sum is fitted to an empirical, non-parametric spectral estimate rather than derived from an explicit generative model for the process, the resulting QPO parameters carry uncertainty inherited from the periodogram's sampling properties, without a direct link to a model-based likelihood for the underlying time series.

An alternative to this non-parametric route is to obtain the same Lorentzian decomposition directly from a parametric model of the process. In particular, \cite{Kelly2014} showed that the PSD of a CARMA($p,q$) process can be decomposed exactly as a sum of Lorentzian functions, one for each pair of complex-conjugate roots $r_j = a_j \pm ib_j$ of the autoregressive polynomial (plus an additional real-root term when $p$ is odd). Each root pair defines a Lorentzian $L_j(\omega)$ of the same form as above, whose centroid frequency is given by the imaginary part of the root and whose width is given by its real part, so that $\omega_{j,0}$ and $\lambda_j$, and hence the quality factor $Q_j=\omega_{j,0}/\lambda_j$, follow directly from the estimated model coefficients rather than from a separate fit to the periodogram. Unlike the periodogram-based fit, this route estimates the parameters of the CARMA process by maximizing the exact CARMA likelihood via the Kalman filter, so uncertainty in the QPO parameters is inherited from the model likelihood itself. As before, each centroid $\omega_{j,0}/(2\pi)$ is a candidate QPO frequency, with $Q_j$ determining whether it is retained as a genuine QPO or reflects broadband noise. Because the same root-decomposition argument applies to any process whose spectral density is a ratio of trigonometric polynomials, discrete-time ARMA models are, in principle, equally valid candidates for this Lorentzian-based QPO decomposition.

\section{Decomposition of the spectral density}\label{decomp_PSD_ARMA}

CARMA processes are the continuous-time analogs of the well-known discrete-time ARMA models, extending the classical framework to irregularly sampled data, as often encountered in astronomical observations. A key feature of CARMA models is that their PSD admits a closed-form decomposition as a sum of Lorentzian functions, each associated with a root of the autoregressive polynomial. This property makes CARMA modeling particularly effective at capturing QPOs \citep{Kelly2014}.

The following proposition establishes a discrete-time counterpart to this result: the PSD of a causal ARMA$(p,q)$ process can likewise be decomposed algebraically into a sum of
root-associated spectral components, one per root of the autoregressive polynomial, together with a residual polynomial term that arises whenever $q \geq p$. Although each component in the ARMA case is not an exact Lorentzian as in continuous time, it retains a similar peaked shape governed by the modulus and argument of the corresponding root, allowing the same modal interpretation of the spectral density used in CARMA modeling to be carried over to the discrete-time setting.

\begin{proposition}\label{prop:PSD_ARMA}
Let $\{X_t\}$ be a causal ARMA$(p,q)$ process defined in \eqref{ARMA_def_sin_me}, and suppose the
polynomial $\phi(z^{-1}) = 1 - \phi_1 z^{-1} - \cdots - \phi_p z^{-p}$ has $p$ simple
(distinct) roots, collected in $\mathcal{R}$. Then, the PSD given in
\eqref{PSD_ARMA_without_error} admits the decomposition
\begin{equation}\label{PSD_ARMA_sum}
\PSD_X(\omega;\vect{\beta})=\mathbb{I}_{\{q\geq p\}}\, \PSD_{X,0}(\omega;\vect{\beta})+\sum_{r \in \mathcal{R}} \PSD_{X,r}(\omega;\vect{\beta}),
\end{equation}
where $\mathbb{I}_{\{q\geq p\}}$ equals $1$ if $q\geq p$ and $0$ otherwise,
and functions $\PSD_{X,0}$ and $\PSD_{X,r}$ are given by
\begin{align}
    \PSD_{X,0}(\omega;\vect{\beta})&= \frac{\Delta\sigma_Z^2}{2\pi}\left[\, c_0 + 2\sum_{h=1}^{m} c_h \cos(2\pi h\,\omega\Delta)\right], \label{P0}\\
    \PSD_{X,r}(\omega;\vect{\beta})& = K(r)\,\frac{\Delta(1-r^2)}{1-2r \cos(2\pi \omega \Delta)+r^2},\label{Pr}
\end{align}
with $m=q-p$ and
\begin{equation*}
c_h = \frac{1}{(m-|h|)!}\left[\frac{d^{\,m-|h|}}{dz^{m-|h|}} \frac{z^{q-p}\theta(z)\theta(z^{-1})}{\phi(z)\phi(z^{-1})}\right]_{z=0}, \qquad |h| \le m,
\end{equation*}
and
\begin{equation}\label{Kr}
K(r) = -\frac{\sigma^2_Z}{2\pi}\,\frac{r\, \theta(r)\, \theta(r^{-1})}{\phi(r)\, \phi'(r^{-1})}.
\end{equation}
\end{proposition}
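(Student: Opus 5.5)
The plan is a partial-fraction expansion of the rational function
\[
G(z)=\frac{\theta(z)\theta(z^{-1})}{\phi(z)\phi(z^{-1})},
\]
which, since $\PSD_X$ in \eqref{PSD_ARMA_without_error} is real, gives $\PSD_X(\omega)=\frac{\Delta\sigma_Z^2}{2\pi}G(e^{-i2\pi\omega\Delta})$ on the unit circle. I will work with the Laurent structure of $G$ around its poles.

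\textbf{Locating the poles.} Write $\phi(z^{-1})=z^{-p}\prod_{j}(z-r_j)$ and $\phi(z)=\prod_j(1-r_jz)$. The finite nonzero poles of $G$ are then the $r_j$ (inside the unit disc) and the $r_j^{-1}$ (outside). All are simple, because the roots are distinct and $|r_j|<1$ forces $r_j\neq r_k^{-1}$. If a root of $\theta$ cancels a pole, the corresponding $K(r)$ below is simply zero, so cancellation is harmless.

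\textbf{Behaviour at $0$ and $\infty$.} Multiplying numerator and denominator by $z^{q}$ shows $G(z)=z^{p-q}\,\tilde\theta(z)\theta(z)/(\tilde\phi(z)\phi(z))$, where $\tilde\theta(z)=z^q\theta(z^{-1})$ and $\tilde\phi(z)=z^p\phi(z^{-1})$ are polynomials with $\tilde\phi(0)=-\phi_p\neq0$. Hence $G$ has a pole of order $m=q-p$ at $0$ when $q>p$, and by the symmetry $G(z)=G(z^{-1})$ a pole of the same order at $\infty$. When $q<p$ it is regular at both.

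\textbf{The decomposition.} $G$ is a rational function symmetric under $z\mapsto z^{-1}$, so it equals the sum of its principal parts plus a constant:
\[
G(z)=\sum_{|h|\le m}c_hz^{h}+\sum_j\Big(\frac{a_j}{z-r_j}+\frac{b_j}{z-r_j^{-1}}\Big).
\]
The Laurent polynomial collects the principal parts at $0$ and $\infty$ together with the constant, and it has symmetric coefficients $c_h=c_{-h}$. I will identify $c_h$ as the coefficient of $z^h$ in the Laurent expansion at $0$ of $G$ minus the pole contributions. More directly, $c_h$ is the Taylor coefficient of order $m-|h|$ of $z^{m}G(z)$ at $0$, which is the stated derivative formula.

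The step needing care is the constant $c_0$ and the case $q<p$, where the polynomial term must be shown to vanish. I will group each pair of pole terms into the form $\frac{A_j z}{(z-r_j)(1-r_jz)}$ scaled so that it tends to $0$ at both $0$ and $\infty$. A sum of such terms vanishes at $0$ and $\infty$, so no leftover constant is needed when $q<p$. When $q=p$, the constant is $G(0)=c_0$, and in general the pole terms contribute nothing at $z=0$ beyond the principal part. The factor $(1-r^2)$ must also be matched. Using $\frac{(1-r^2)}{(1-rz)(1-rz^{-1})}=1+\sum_{h\ge1}r^h(z^h+z^{-h})$ in place of the bare $\frac{z}{(z-r)(1-rz)}$ introduces a constant $1$, so I will recompute $c_0$ accordingly. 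This bookkeeping of the constant term is the main obstacle. I expect to resolve it by matching the coefficients of $z^0$ under the $m$-term formula, possibly by absorbing these constants into the definition of $c_h$ read off from the expansion of $z^mG$ at $0$.

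\textbf{Computing $K(r)$.} By symmetry, the residues at $r$ and $r^{-1}$ are related. Near $z=r$ the factor $\phi(z^{-1})$ vanishes, and $\frac{d}{dz}\phi(z^{-1})=-z^{-2}\phi'(z^{-1})$ evaluated at $z=r$ gives $-r^{-2}\phi'(r^{-1})$. Here $\phi'(r^{-1})$ must be read as the derivative of the map $w\mapsto\phi(w^{-1})$ at its root, which I will pin down against the paper's convention. The residue is therefore $-r^2\theta(r)\theta(r^{-1})/(\phi(r)\phi'(r^{-1}))$. Writing the simple-fraction pair in the kernel form $\frac{1-r^2}{1-2r\cos x+r^2}$ with $z=e^{-ix}$ costs a factor $r^{-1}$, together with the normalization $\sigma_Z^2/(2\pi)$ and the factor $\Delta$ placed inside \eqref{Pr}. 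This yields $K(r)$ in \eqref{Kr}. Finally, substituting $z=e^{-i2\pi\omega\Delta}$ turns $c_0+\sum_{h\neq0}c_hz^h$ into the cosine sum in \eqref{P0}, which completes \eqref{PSD_ARMA_sum}.
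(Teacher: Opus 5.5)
\textbf{Verdict.} Your partial-fraction route is sound, and it differs from the paper's. However, the step you flag as the main obstacle is left open, and the fix you sketch for it would give a wrong $c_0$.

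\textbf{Comparison with the paper.} The paper never decomposes $G$. It quotes the Brockwell--Davis residue formula for $\gamma_X(h)$, which collects the residues of $z^{|h|-1}G(z)$ at $0$ and at the $r_j$. It then substitutes this into $\frac{\Delta}{2\pi}\sum_h e^{-ih2\pi\omega\Delta}\gamma_X(h)$ and sums $\sum_h r^{|h|}e^{-ihx}$ geometrically. The part of $\gamma_X$ supported on $|h|\le q-p$ becomes $\PSD_{X,0}$, and each $r^{|h|}$ term becomes a kernel. Your argument stays in the frequency domain and needs no ACVF formula. It also shows directly that $\PSD_{X,0}$ is the principal part of $G$ at $0$ and $\infty$, and that it vanishes for $q<p$. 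The paper's route is shorter given the cited formula, and at $h=0$ it yields the variance decomposition immediately.

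\textbf{The constant term.} There is no constant to recompute or absorb. First state the residue relation that makes your pairing work. From $G(z)=G(z^{-1})$ and uniqueness of partial fractions, the residue at $r_j^{-1}$ is $b_j=-a_jr_j^{-2}$. Then, identically,
\begin{equation*}
\frac{a_j}{z-r_j}-\frac{a_jr_j^{-2}}{z-r_j^{-1}}
=\frac{a_j}{r_j}\,\frac{(1-r_j^2)\,z}{(z-r_j)(1-r_jz)}
=\frac{a_j}{r_j}\,\frac{1-r_j^2}{(1-r_jz)(1-r_jz^{-1})}.
\end{equation*}
So the kernel is exactly your ``bare'' form times $(1-r_j^2)$. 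It is analytic at $0$ and vanishes there and at $\infty$.

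The constant $1$ in $1+\sum_{h\ge1}r^h(z^h+z^{-h})$ is the constant coefficient of the expansion in the annulus $|r|<|z|<|r|^{-1}$. There the kernel also carries every negative power. It has nothing to do with the expansion in the punctured disc $0<|z|<\min_j|r_j|$, which is the one that defines $c_h$. In that disc each kernel contributes only powers $z^k$ with $k\ge1$.

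\textbf{Reading off $c_h$.} Consequently the coefficients of $z^{-h}$, $h\ge0$, in $G$'s expansion there are those of the Laurent-polynomial part. These are exactly the stated $c_h$, including $c_0$ unchanged. Symmetry then supplies the positive powers. For $q<p$ the constant equals $G(0)=0$. With $a_j=-r_j^2\theta(r_j)\theta(r_j^{-1})/(\phi(r_j)\phi'(r_j^{-1}))$, the weight $\frac{\sigma_Z^2}{2\pi}a_j/r_j$ is $K(r_j)$.

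\textbf{The $\phi'$ convention.} In that residue, $\phi'(r^{-1})$ is the derivative of $\phi$ evaluated at $r^{-1}$. This is what your chain-rule line $-z^{-2}\phi'(z^{-1})$ uses, and it is the convention the paper states. It is not the derivative of $w\mapsto\phi(w^{-1})$ at $r$. Under that reading, $K(r)$ would be multiplied by $-r^2$; the AR$(1)$ case confirms this.
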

The proof is given in Appendix \ref{proof_prop1}. Proposition~\ref{prop:PSD_ARMA}
shows that the decomposition consists of one component $\PSD_{X,r}$ for each
autoregressive root $r$, together with the term $\PSD_{X,0}$, which is
present whenever $q \geq p$ and vanishes when $q<p$, in which case the
spectral density is entirely represented by the $p$ root-associated
components. The component $\PSD_{X,0}$ is a real-valued function with the
form of the PSD of an MA($m$) process, with $m=q-p$, although the
coefficients $c_h$ are not the autocovariances of a genuine MA$(m)$ process,
since they depend jointly on $\vect{\phi}$ and $\vect{\theta}$ rather than on $\vect{\theta}$
alone. In contrast, each $\PSD_{X,r}$ has the functional form of the PSD of an AR$(1)$
process, weighted by $K(r)$, and is complex-valued or real-valued depending on whether the root
$r$ is complex or real. The moving-average polynomial affects the weights of
the root-associated components through the factors $\theta(r)$ and
$\theta(r^{-1})$, but does not change their number, which is always $p$. Although $\theta(z^{-1})$ and $\phi(z^{-1})$ are singular at the origin, the
ratio defining $c_h$ is not: writing $\theta^{*}(z)=z^{q}\theta(z^{-1})$ and
$\phi^{*}(z)=z^{p}\phi(z^{-1})$, the factor $z^{q-p}$ cancels the poles and
leaves $\theta(z)\theta^{*}(z)/\big(\phi(z)\phi^{*}(z)\big)$, which is analytic
at $z=0$ because $\phi(0)=1$ and $\phi^{*}(0)=-\phi_p\neq0$.

The assumption that the roots of $\phi(z^{-1})$ are simple is generic, in the sense
that it fails only on a set of autoregressive coefficients of Lebesgue measure zero,
and it is what allows the partial-fraction expansion to produce one first-order term
per root. If a root $r$ has multiplicity $\nu>1$, its contribution to the PSD is no
longer, in general, of the AR(1)-type form given in \eqref{Pr}, as additional terms depending
on the multiplicity of the root arise. We do not pursue this case here. It should be noted, however, that
$K(r)$ in \eqref{Kr} is singular when two roots coalesce, since
$\phi'(r^{-1})\to0$, so the individual weights become numerically unstable for models
with nearly repeated roots even though the total spectral density remains
well behaved.

In Proposition~\ref{prop:PSD_ARMA}, the components $\PSD_{X,r}(\omega;\vect{\beta})$ in \eqref{Pr} are indexed by the roots $r\in\mathcal{R}$, where $\mathcal{R}$ denotes the set of roots of $\phi(z^{-1})=0$, which may be real or complex. For real roots, $\PSD_{X,r}$ is real-valued and can be interpreted directly as a root-associated spectral contribution. In contrast, for complex roots, individual terms are generally complex-valued and do not constitute spectral densities on their own. Since $\phi$ has real coefficients, complex roots occur in conjugate pairs, whose corresponding terms combine into a real-valued contribution. We therefore group the roots according to complex conjugation
\begin{equation*}
    r \sim r' \iff r' \in \{r, \bar{r}\}.
\end{equation*}
\noindent We write $\mathcal{R}/\!\sim$ for the set of equivalence classes and denote a class by $\tau$. Each equivalence class consists either of a single real root or of a complex-conjugate pair. For each class, we choose the representative $r$ with nonnegative imaginary part.

Throughout, we identify each class $\tau\in\mathcal{R}/\!\sim$ with the polar representation of its representative root: we write $\tau=(s,\alpha)$, where $s=|r|\in(0,1)$ and $\alpha=\arg(r)\in[0,\pi]$ for the representative $r$ with nonnegative imaginary part. Thus $s$ and $\alpha$ always refer to the class under consideration, and in sums indexed by $\tau$ they vary with the summand.

As an example, consider an ARMA($3,2$) process with roots
$\mathcal{R} = \{0.5,\; 0.7\,e^{\pm i\pi/4}\}$, that is, one real root and one
complex-conjugate pair. Complex conjugation groups these into two classes,
$\tau_1 = \{0.5\}$ and $\tau_2 = \{0.7\,e^{\pm i\pi/4}\}$, so that
$\mathcal{R}/\!\sim\, = \{\tau_1, \tau_2\}$. The real class $\tau_1$ contributes
the single component $\PSD_{X,0.5}$, while the complex class $\tau_2$ contributes
the real sum of its two conjugate terms.

Grouping the terms of \eqref{PSD_ARMA_sum} by class rather than by root, and
writing $\widetilde{\PSD}_{X,\tau}$ for the contribution of the class $\tau$,
the decomposition becomes

\begin{equation*}
    \PSD_X(\omega;\vect{\beta}) = \mathbb{I}_{\{q \geq p\}}\, \PSD_{X,0}(\omega;\vect{\beta})
    + \sum_{\tau \in \mathcal{R}/\sim} \widetilde{\PSD}_{X,\tau}(\omega;\vect{\beta}),
\end{equation*}
\noindent where $\widetilde{\PSD}_{X,\tau}(\omega;\vect{\beta}) = \sum_{r \in \tau}\PSD_{X,r}(\omega;\vect{\beta})$
is real-valued for every class, by the argument above.

Naturally, the expression of $\widetilde{\PSD}_{X,\tau}$ depends on the nature of the roots,
and we distinguish two cases.

\begin{case}\label{case:complex}
When a class $\tau$ corresponds to a complex-conjugate pair, i.e. $\tau = \{r, \bar{r}\}$, $r = a + ib = s\,e^{i\alpha}$,
with $s = \sqrt{a^2 + b^2} = |r| \in (0,1)$ and $\alpha = \arg(r)$, the two
roots combine into the real mode:
\begin{equation}\label{Pcomplex}
    \widetilde{\PSD}_{X,\tau}(\omega;\vect{\beta}) = \PSD_{X,r}(\omega;\vect{\beta}) + \PSD_{X,\bar{r}}(\omega;\vect{\beta}) = \psi_{\tau}(\omega) + \psi_{\tau}(-\omega)
\end{equation}
where
\begin{equation}\label{Pcomplex_LS}
\psi_{\tau}(\omega) = \varphi_{s}(\omega,\alpha) + \zeta_{s}(\omega,\alpha)
\end{equation}
and
\begin{align}
    \varphi_{s}(\omega,\alpha)&= \Real \big(K(r)\big)\, \frac{\Delta(1-s^2)}{1-2s\cos(2\pi \omega \Delta - \alpha)+s^2}, \label{varphi} \\
    \zeta_{s}(\omega,\alpha)&= \Imag\big(K(r)\big)\, \frac{\Delta\, 2s \sin(2\pi \omega\Delta - \alpha)}{1-2s\cos(2\pi \omega \Delta - \alpha) + s^2}.\label{zeta}
\end{align}

The proof of \eqref{Pcomplex} is given in Appendix~\ref{proof_P_complex}. Since
$\varphi_{s}$ is even and $\zeta_{s}$ is odd in $2\pi\omega\Delta-\alpha$, and the
weights $\Real\big(K(r)\big)$ and $\Imag\big(K(r)\big)$ in \eqref{varphi} and
\eqref{zeta} are held fixed, the reflected term satisfies
$\psi_{\tau}(-\omega)=\varphi_{s}(\omega,-\alpha)-\zeta_{s}(\omega,-\alpha)$.
Equivalently, since $\Real\big(K(\bar r)\big)=\Real\big(K(r)\big)$ and
$\Imag\big(K(\bar r)\big)=-\Imag\big(K(r)\big)$, the reflected term is obtained
from $\psi_{\tau}$ by replacing $\alpha$ with $-\alpha$ and the weights of $K(r)$
with those of $K(\bar r)$, so that the conjugate root contributes the mirror image
of $\psi_\tau$ about the origin.

The term $\psi_{\tau}(\omega)$ decomposes as a weighted sum of a Wrapped Cauchy kernel $\varphi_{s}(\omega,\alpha)$, also known as a Wrapped Lorentzian function or rescaled Poisson kernel \citep{Mardia1999}, and a conjugate-Poisson-type term $\zeta_{s}(\omega,\alpha)$ \citep{Katznelson_2004}, both centred at the frequency $\alpha/(2\pi\Delta)$. The kernel $\varphi_{s}(\omega,\alpha)$, weighted by $\Real\big(K(r)\big)$, determines the principal spectral concentration, while $\zeta_{s}(\omega,\alpha)$, weighted by $\Imag\big(K(r)\big)$, supplies the signed correction required to reproduce the exact root contribution. The frequency $\alpha/(2\pi\Delta)$ is therefore a natural candidate for the centroid of the oscillatory component associated with $\tau$, in analogy with the centroid of the Lorentzian profiles commonly fitted to X-ray binary power spectra. However, this centroid does not in general coincide with the peak of the spectral component, because the term $\zeta_{s}(\omega,\alpha)$ is antisymmetric about $\alpha/(2\pi\Delta)$, so it skews the symmetric kernel $\varphi_{s}(\omega,\alpha)$ and shifts the maximum of $\psi_{\tau}(\omega)$. Consequently, $\alpha/(2\pi\Delta)$ alone does not determine the peak frequency; the physically meaningful quantities, namely the oscillation frequency and width, must be obtained from $\widetilde{\PSD}_{X,\tau}(\omega;\vect{\beta})$ or from the full spectral density.

The kernel $\varphi_{s}(\omega,\alpha)$ is unimodal and symmetric about its center, with $s$ controlling its concentration: as $s\to 0$, the Wrapped Cauchy kernel flattens toward a constant (uniform) spectral contribution, while as $s\to 1$, it becomes sharply peaked at its corresponding center frequency. Being non-negative and symmetric about its center, the Wrapped Cauchy kernel carries the entire spectral mass of the term and is normalized over the frequency domain, satisfying
\begin{equation}\label{int_Ldes}
\int_{-\frac{1}{2\Delta}}^{\frac{1}{2\Delta}}
\frac{\Delta(1-s^2)}{1-2s\cos(2\pi \omega \Delta \pm \alpha)+s^2}\, d\omega = 1.
\end{equation}

In contrast, $\zeta_{s}(\omega,\alpha)$ is antisymmetric about its center and integrates to zero
\begin{equation}\label{int_S}
\int_{-\frac{1}{2\Delta}}^{\frac{1}{2\Delta}} \frac{\Delta\, 2s \sin(2\pi \omega\Delta - \alpha)}{1-2s\cos(2\pi \omega \Delta - \alpha) + s^2} d\omega = 0.
\end{equation}
\end{case}

\begin{case}\label{case:real}
When the root $r$ is real, the class $\tau$ contains a single element, and the mode reduces to a single root-associated component, $\widetilde{\PSD}_{X,\tau}(\omega;\vect{\beta})=\PSD_{X,r}(\omega;\vect{\beta})$, given by the weighted Wrapped Cauchy kernel in \eqref{varphi} with concentration $s=|r|$ and mean direction (centroid) at frequency $0$ or at the Nyquist frequency $1/(2\Delta)$, depending on the sign of $r$, that is,
\begin{equation}\label{aux5}
\widetilde{\PSD}_{X,\tau}(\omega;\vect{\beta})=\PSD_{X,r}(\omega;\vect{\beta}) =
\begin{cases}
\varphi_{s}(\omega, 0), & r > 0, \\
\varphi_{s}(\omega, \pi), & r < 0.
\end{cases}
\end{equation}
\end{case}

Overall, the decomposition provides a direct interpretation of the autoregressive roots in the frequency domain. In particular, the modulus of a root controls the concentration of its associated spectral kernel,
whereas its argument determines the center frequencies of the corresponding oscillatory contributions. Considering Cases~1 and~2, the decomposition of $\PSD_X$ can be summarized as
\begin{align}
\PSD_X(\omega;\vect{\beta}) &= \mathbb{I}_{\{q\geq p\}}\, \PSD_{X,0}(\omega;\vect{\beta})
             + \sum_{\substack{\tau \in \mathcal{R}/\sim \\ |\tau|=1}} \widetilde{\PSD}_{X,\tau}(\omega;\vect{\beta})
             + \sum_{\substack{\tau \in \mathcal{R}/\sim \\ |\tau|=2}} \widetilde{\PSD}_{X,\tau}(\omega;\vect{\beta}) \notag\\
        &= \mathbb{I}_{\{q\geq p\}}\, \PSD_{X,0}(\omega;\vect{\beta})
             + \sum_{\substack{\tau \in \mathcal{R}/\sim \\ |\tau|=1}} \varphi_{s}(\omega, \alpha)
             + \sum_{\substack{\tau \in \mathcal{R}/\sim \\ |\tau|=2}} \big[\psi_{\tau}(\omega) + \psi_{\tau}(-\omega)\big] \label{PSD_decomp4}
\end{align}
where each equivalence class $\tau \in \mathcal{R}/\!\sim$ is either a singleton $\tau=\{r\}$ (a real class, $|\tau|=1$) or a conjugate pair $\tau=\{r,\bar r\}$ (a complex class, $|\tau|=2$), with $r = s e^{i\alpha}$.
When $|\tau|=1$, $\widetilde{\PSD}_{X,\tau}$ is given by \eqref{aux5}, whereas when $|\tau|=2$, $\widetilde{\PSD}_{X,\tau}$ is given by \eqref{Pcomplex}.
Denoting by $n_R$ and $n_C$, respectively, the number of real and complex
classes in $\mathcal{R}/\!\sim$, the total number of terms in $\PSD_X(\omega;\vect{\beta})$
is thus $1+n_R+2n_C$ when $q\geq p$, and $n_R+2n_C$ when $q<p$. Unlike the
decomposition in \eqref{PSD_ARMA_sum}, where each $\PSD_{X,r}$ may be
complex-valued when $r$ is a complex root, all components in
\eqref{PSD_decomp4} are real-valued functions.

\begin{figure}[!t]
\centering
\includegraphics[width=12cm]{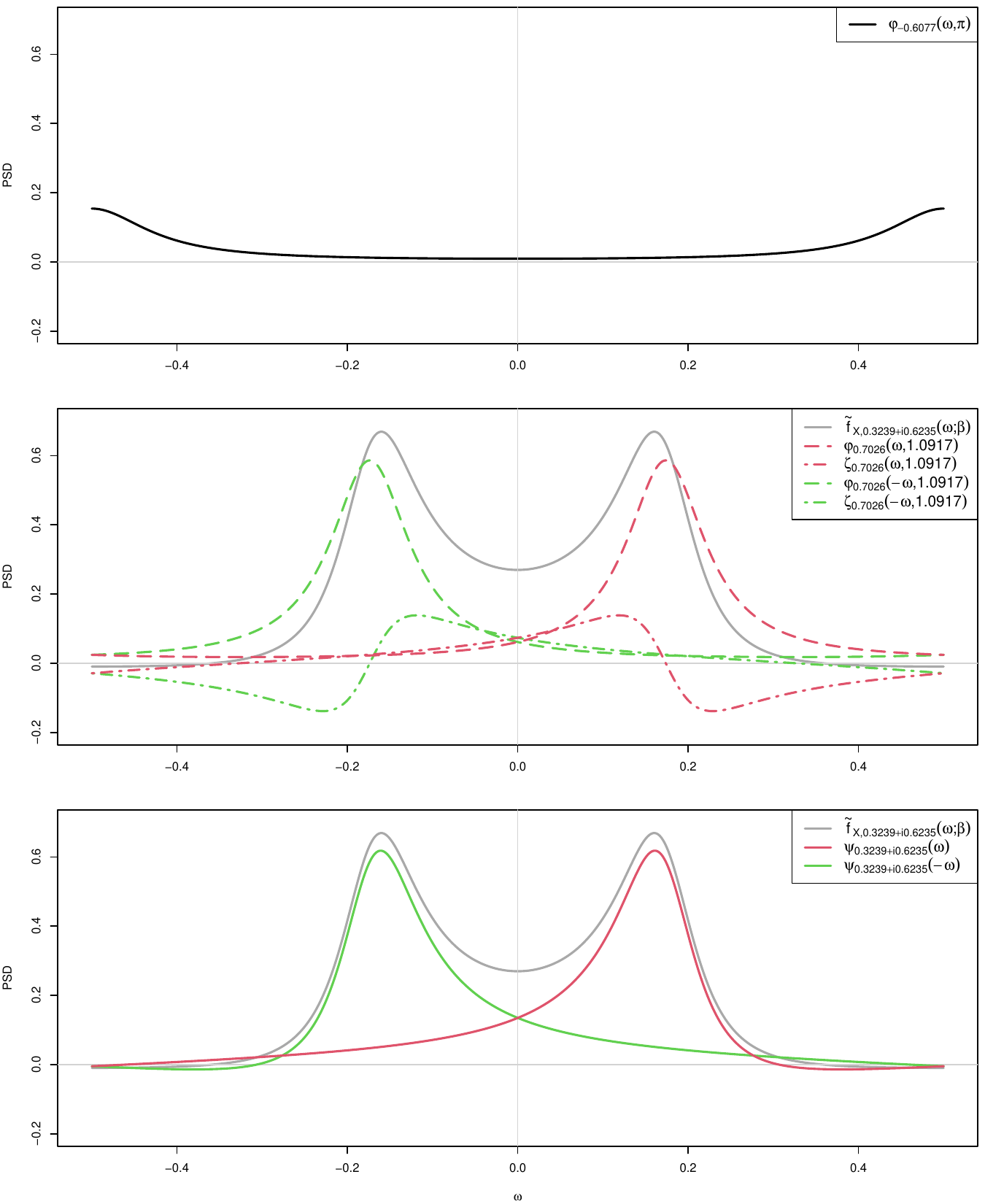}
\caption{Spectral density components for an autoregressive polynomial with one
real root and a complex-conjugate pair. Top panel: the real-root component
$\widetilde{\PSD}_{X,\tau}(\omega;\vect{\beta})$ for $\tau=\{-0.6077\}$ ($s=0.6077$,
$\alpha=\pi$; black). Middle panel: for the complex-conjugate class
$\tau=\{0.3239\pm i0.6235\}$ ($s=0.7026$, $\alpha=1.0917$), the solid gray line is
$\widetilde{\PSD}_{X,\tau}(\omega;\vect{\beta})$, while the dashed and
dot-dashed lines show the kernels $\varphi_s$ and $\zeta_s$ evaluated at
$\alpha=1.0917$ (red) and at $-\alpha=-1.0917$ (green). Bottom panel: the solid
gray line again shows $\widetilde{\PSD}_{X,\tau}(\omega;\vect{\beta})$, while
the red and green solid lines show the two real-valued terms
$\psi_\tau(\omega)=\varphi_s(\omega,\alpha)+\zeta_s(\omega,\alpha)$ and
$\psi_\tau(-\omega)=\varphi_s(\omega,-\alpha)-\zeta_s(\omega,-\alpha)$,
illustrating how they sum to $\widetilde{\PSD}_{X,\tau}(\omega;\vect{\beta})$.}
\label{fig:spectral_components}
\end{figure}

Figure~\ref{fig:spectral_components} illustrates the decomposition results of Case~1 and Case~2 above, using an example with one real root, $\tau_1 = \{-0.6077\}$, and one complex-conjugate pair, $\tau_2 = \{0.3239 \pm i0.6235\}$ ($s = 0.7026$, $\alpha = 1.0917$). Neither the real-root component $\widetilde{\PSD}_{X,\tau_1}(\omega;\vect{\beta})$ nor the complex-pair component $\widetilde{\PSD}_{X,\tau_2}(\omega;\vect{\beta})$ is individually guaranteed to be non-negative; non-negativity holds only for the total spectral density, obtained by summing over all classes. The individual terms $\varphi_s(\omega,\pm\alpha)$ and $\zeta_s(\omega,\pm\alpha)$ underlying $\widetilde{\PSD}_{X,\tau_2}(\omega;\vect{\beta})$ likewise take negative values over part of the frequency range.

\begin{figure}[!t]
\centering
\includegraphics[width=13cm]{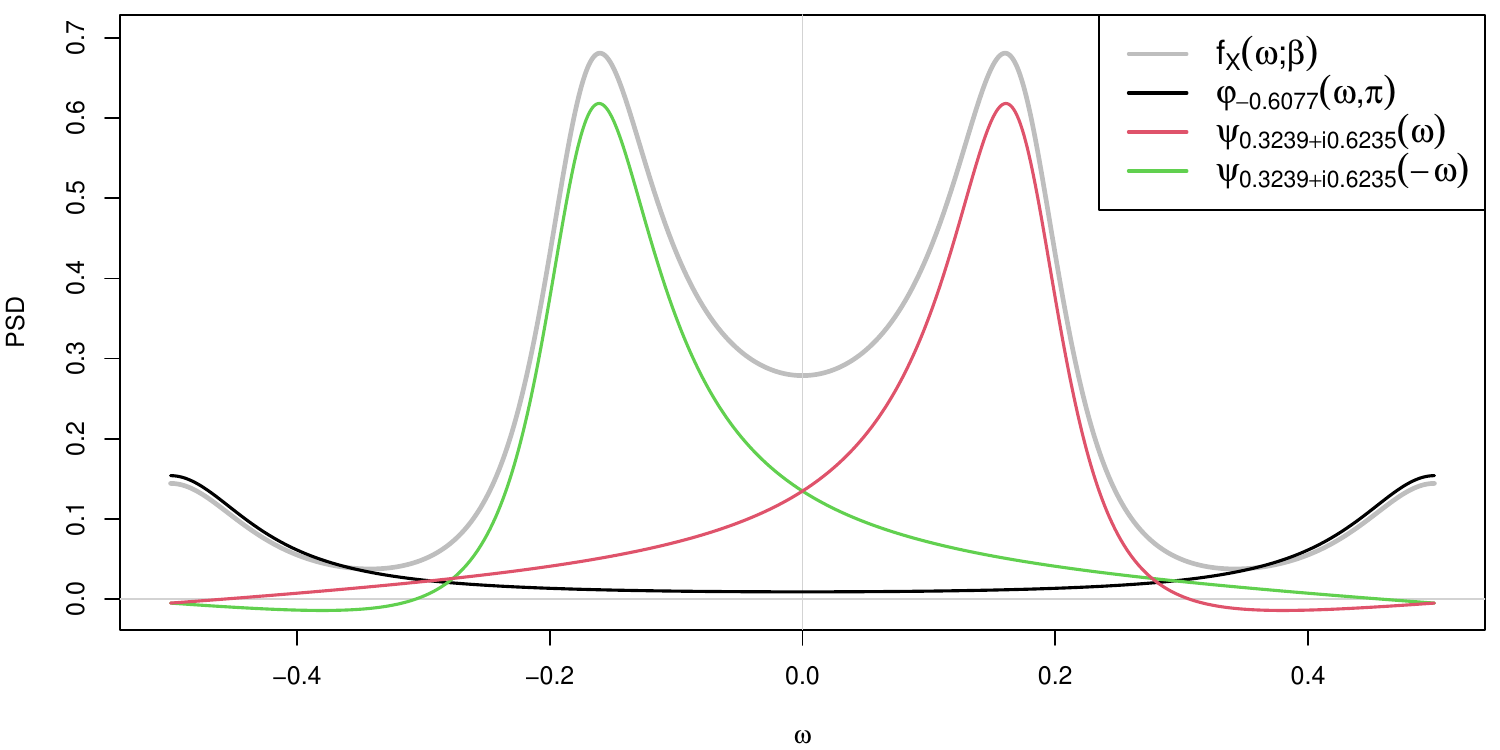}
\caption{Total PSD $\PSD_X(\omega;\vect{\beta})$ (gray line) of an ARMA(3,2) process with coefficients $\phi_1=0.04$, $\phi_2=-0.1$, $\phi_3=-0.3$, $\theta_1=0.5$, $\theta_2=0.3$, $\sigma^2_Z=1$ and $\Delta=1$, decomposed into the real-root component $\varphi_{-0.6077}(\omega,\pi)$ (black line) and the complex-conjugate-pair components $\psi_{0.3239+i0.6235}(\omega)$ (red) and $\psi_{0.3239+i0.6235}(-\omega)$ (green), as described in Figure~\ref{fig:spectral_components}.}
\label{fig:spectral_total}
\end{figure}

Figure~\ref{fig:spectral_total} shows how the components associated with $\tau_1=\{-0.6077\}$ and $\tau_2=\{0.3239\pm i0.6235\}$ combine to yield the total PSD $\PSD_X(\omega;\vect{\beta})$ of the ARMA($3,2$) process without measurement error with coefficients $\phi_1=0.04$, $\phi_2=-0.1$, $\phi_3=-0.3$, $\theta_1=0.5$, $\theta_2=0.3$, $\sigma_Z^2=1$, and $\Delta=1$. Although the individual terms $\psi_{0.3239+i0.6235}(\pm\omega)$ take negative values over parts of the frequency range, their sum with the real-root component $\varphi_{-0.6077}(\omega,\pi)$ recovers a non-negative total density. The two peaks in $\PSD_X(\omega;\vect{\beta})$ arise directly from $\psi_{0.3239+i0.6235}(\pm\omega)$, while the real-root component contributes a broader, low-frequency-dominated background.

Analogous to the centroid frequency in the classical Lorentzian functions, we aim to find the frequency peak of $\psi_\tau(\omega)$ in order to propose candidates to find a QPO. The following proposition shows that we can perform the classical analysis with the decomposition \eqref{PSD_decomp4}.

\begin{proposition}\label{prop:peak}
Let $\tau=\{r,\bar r\}$ be a complex-conjugate class with $r=s\,e^{i\alpha}$, and let
$\psi_\tau(\omega)$ be as in \eqref{Pcomplex_LS}. Also, let $A=\Real\big(K(r)\big)$,
$B=\Imag\big(K(r)\big)$ and $|K|=\sqrt{A^2+B^2}$, and assume $K(r)\neq0$. Then the
range of $\psi_\tau$ over $(-1/(2\Delta),1/(2\Delta)]$ is $[G_{\min},G_{\max}]$, where
\begin{equation}\label{eq:extremes}
G_{\max} = \Delta\,\frac{A(1+s^2)+2s|K|}{1-s^2},
\qquad
G_{\min} = \Delta\,\frac{A(1+s^2)-2s|K|}{1-s^2},
\end{equation}
each attained at a single frequency, namely
\begin{align}
\omega_{\max} &= \frac{1}{2\pi\Delta}\big[\alpha+\lambda^\star\big],
\qquad \lambda^\star=\operatorname{atan2}\big(B,\,G_{\max}/\Delta\big), \label{eq:wmax}\\
\omega_{\min} &= \frac{1}{2\pi\Delta}\Big[\alpha+\operatorname{atan2}\big(-(1-s^2)B,\;2s|K|-(1+s^2)A\big)\Big], \label{eq:wmin}
\end{align}
both reduced modulo $1/\Delta$ into $(-1/(2\Delta),1/(2\Delta)]$. In particular
$\psi_\tau$ is unimodal on that interval.
\end{proposition}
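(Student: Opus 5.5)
The plan is to reduce $\psi_\tau$ to a single ratio of first-order trigonometric polynomials in one angle, and then read off its range and extremizers from a level-set argument on the unit circle.

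\textbf{Reduction.} Put $x=2\pi\omega\Delta-\alpha$. As $\omega$ runs over $(-1/(2\Delta),1/(2\Delta)]$, $x$ runs over one full period modulo $2\pi$. By \eqref{varphi}--\eqref{zeta},
\[
\psi_\tau(\omega)=\Delta\, g(x),\qquad g(x)=\frac{a+b\sin x}{c-d\cos x},
\]
with $a=A(1-s^2)$, $b=2sB$, $c=1+s^2$ and $d=2s$. Note that $c-d\cos x\ge (1-s)^2>0$. A value $G$ is attained by $g$ iff there is an $x$ with
\[
b\sin x+Gd\cos x=Gc-a .
\]
By Cauchy--Schwarz on the unit circle, this is solvable iff $(Gc-a)^2\le b^2+G^2d^2$.

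\textbf{Range.} The solvability condition is the quadratic inequality
\[
(c^2-d^2)G^2-2acG+a^2-b^2\le 0,\qquad c^2-d^2=(1-s^2)^2>0 .
\]
Its discriminant simplifies to $(1-s^2)^2\cdot 4s^2(A^2+B^2)$, because $c^2-(1-s^2)^2=4s^2$. So the roots are exactly $G_{\max}/\Delta$ and $G_{\min}/\Delta$ of \eqref{eq:extremes}. They are distinct because $K(r)\neq0$. Hence the range of $g$ is the closed interval between them, and multiplying by $\Delta$ gives the range of $\psi_\tau$.

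\textbf{Extremizers.} At an endpoint $G$, Cauchy--Schwarz holds with equality. The unit vector $(\cos x,\sin x)$ is then uniquely determined as $\operatorname{sgn}(Gc-a)\,(Gd,b)/\sqrt{b^2+G^2d^2}$, so each extreme is attained at a single $x$. To fix the sign, I compute
\[
Gc-a=\frac{2s\big[2sA\pm(1+s^2)|K|\big]}{1-s^2}.
\]
Since $(1+s^2)|K|>2s|K|\ge 2s|A|$, this is positive for $G_{\max}$ and negative for $G_{\min}$.

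For the maximum, $x=\operatorname{atan2}(b,G_{\max}d)=\operatorname{atan2}(B,G_{\max}/\Delta)$, after dividing both arguments by $2s>0$. This gives \eqref{eq:wmax}.

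For the minimum, $x=\operatorname{atan2}(-B,-G_{\min}/\Delta)$. Multiplying both arguments by $1-s^2>0$ gives \eqref{eq:wmin}, because $-(1-s^2)G_{\min}/\Delta=2s|K|-(1+s^2)A$. Undoing the substitution and reducing modulo $1/\Delta$ gives the stated frequencies.

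\textbf{Unimodality.} For $G$ strictly inside the range, the inequality is strict. So the line $b u+Gd v=Gc-a$ meets the unit circle in exactly two points, except in the degenerate case $b=Gd=0$, which forces $Gc=a$ and would make $g$ constant; that contradicts $G_{\max}\neq G_{\min}$. Thus every interior value has exactly two preimages on the circle. A continuous function on the circle with this property and single-point extremes is strictly monotone on each of the two arcs joining the minimizer to the maximizer. Hence $\psi_\tau$ is unimodal on the period interval.

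The main obstacle I expect is the careful sign bookkeeping. This means the sign of $Gc-a$ at each endpoint, which selects the atan2 branch for $\omega_{\min}$, and the degenerate cases of the two-preimage argument. The rest is routine algebra.
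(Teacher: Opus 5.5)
Your proof is correct and takes essentially the same route as the paper: the same reduction $\psi_\tau=\Delta g$, the same quadratic attainment condition whose roots are $G_{\max}/\Delta$ and $G_{\min}/\Delta$, and the same equality-case argument that makes each extremizer unique. The differences are minor. You fix the sign of $Gc-a$ at the endpoints by direct computation, whereas the paper rules out $C=-R$ by contradiction and treats the minimum by passing to $-g$. You also obtain unimodality from a two-preimage count plus the intermediate value theorem, whereas the paper deduces it from its lemma that every superlevel set is a single arc.
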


The proof is given in Appendix \ref{proof_prop2}.

A second parameter of interest is the width of the peak, which describes how
sharply $\omega_{\max}$ is resolved and which, for the Lorentzian profiles used
to describe QPOs, is the full width at half maximum. The following proposition
shows that, as with $\omega_{\max}$, this width admits a closed-form expression
in terms of the ARMA parameters $\vect{\beta}$.

\begin{proposition}\label{prop:length}
Let $A$, $B$, $|K|$, $G_{\max}$ and $G_{\min}$ be as in Proposition~\ref{prop:peak}, assume
$G_{\max}>0$, and set $k_{\min}=G_{\min}/G_{\max}$. For $k\in\big(\max\{0,k_{\min}\},1\big)$ put
$\ell_k=kG_{\max}$ and
\begin{equation}\label{eq:arc}
\mu_k=\operatorname{atan2}\big(B,\,\ell_k/\Delta\big),
\qquad
\gamma_k=\arccos\!\left(\frac{(\ell_k/\Delta)(1+s^2)-A(1-s^2)}{2s\sqrt{(\ell_k/\Delta)^2+B^2}}\right)\in[0,\pi].
\end{equation}
Then $\psi_\tau$ attains the level $\ell_k$ at exactly two frequencies,
\begin{equation}\label{eq:wLR}
\omega^{L,R}_k = \frac{1}{2\pi\Delta}\big[\alpha+\mu_k\mp\gamma_k\big],
\end{equation}
the set $\{\omega:\psi_\tau(\omega)\ge\ell_k\}$ is the single arc joining
them, it contains $\omega_{\max}$, and the total width is
\begin{equation}\label{eq:width}
W_k=\omega^R_k-\omega^L_k=\frac{\gamma_k}{\pi\Delta}.
\end{equation}
\end{proposition}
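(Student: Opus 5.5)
Set $x=2\pi\omega\Delta-\alpha$ and write $D(x)=1-2s\cos x+s^2>0$. By \eqref{varphi} and \eqref{zeta},
\begin{equation*}
\psi_\tau/\Delta=\frac{A(1-s^2)+2sB\sin x}{D(x)} .
\end{equation*}
For a level $\ell>0$ we have $\psi_\tau(\omega)\ge \ell$ if and only if $A(1-s^2)+2sB\sin x\ge (\ell/\Delta)D(x)$, because $D>0$. Expanding $D$ and collecting the trigonometric terms, this is
\begin{equation*}
2s\big[(\ell/\Delta)\cos x+B\sin x\big]\ \ge\ (\ell/\Delta)(1+s^2)-A(1-s^2).
\end{equation*}
Put $\mu=\operatorname{atan2}(B,\ell/\Delta)$. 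Then $(\ell/\Delta)\cos x+B\sin x=\sqrt{(\ell/\Delta)^2+B^2}\,\cos(x-\mu)$, and the level set $\{\psi_\tau\ge \ell\}$ becomes
\begin{equation*}
\cos(x-\mu)\ \ge\ c_k:=\frac{(\ell_k/\Delta)(1+s^2)-A(1-s^2)}{2s\sqrt{(\ell_k/\Delta)^2+B^2}} .
\end{equation*}
This is a single arc $|x-\mu|\le\gamma$ with $\gamma=\arccos c_k$, provided $c_k\in(-1,1)$. Its endpoints give \eqref{eq:wLR}. Dividing the arc length $2\gamma_k$ by $2\pi\Delta$ gives $W_k=\gamma_k/(\pi\Delta)$.

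The main step is to check that $-1<c_k<1$ exactly for $k\in(\max\{0,k_{\min}\},1)$. I plan to use the equivalence just derived, run in reverse:
\begin{itemize}
\item $c_k\ge 1$ would force the level set to be empty or a single point.
\item $c_k\le-1$ would force $\psi_\tau\ge\ell_k$ everywhere.
\end{itemize}
By Proposition~\ref{prop:peak}, $\psi_\tau$ has range $[G_{\min},G_{\max}]$ and attains $G_{\max}$ at a single frequency. Since $\ell_k<G_{\max}$, the set $\{\psi_\tau>\ell_k\}$ is a nonempty open set, which rules out $c_k\ge1$. Since $\ell_k>G_{\min}$, some frequency has $\psi_\tau<\ell_k$, which rules out $c_k\le -1$.

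As a cross-check I would also verify the boundary cases algebraically. At $\ell=G_{\max}$ one should get $c=1$, which requires $(\ell/\Delta)^2(1-s^2)^2$ to match the value coming from $G_{\max}$. This amounts to the quadratic $(\ell/\Delta)^2(1-s^2)-2A(1+s^2)(\ell/\Delta)+\dots=0$, whose roots are $G_{\max}/\Delta$ and $G_{\min}/\Delta$. That is just a squaring of the inequality and is routine. The positivity $\ell_k>0$, which is guaranteed by $k>0$ and $G_{\max}>0$, is needed so that $\mu_k$ uses the correct branch of atan2. The argument is where I expect the most care to be needed.

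With $c_k\in(-1,1)$ we get $\gamma_k\in(0,\pi)$. So there are exactly two solutions of $\psi_\tau=\ell_k$ on one period. The superlevel set is the single arc between them, and it is an arc in the circular sense after reducing frequencies modulo $1/\Delta$. It contains $\omega_{\max}$ because $G_{\max}\ge\ell_k$.

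To confirm consistency, note that as $k\to1$ we have $\mu_k\to\lambda^\star$ and $\gamma_k\to0$. This recovers \eqref{eq:wmax} and checks the atan2 convention. Finally, the ordering $\omega^L_k<\omega^R_k$ follows because $\gamma_k>0$, so the width formula holds as stated.
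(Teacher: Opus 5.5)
Your proof is correct and follows essentially the same route as the paper: the paper packages your rewriting of $\psi_\tau\ge\ell$ (with $\kappa=\ell/\Delta$) as $2s\sqrt{\kappa^2+B^2}\cos(x-\mu)\ge\kappa(1+s^2)-A(1-s^2)$ into a level-set lemma, and then uses the range $[G_{\min},G_{\max}]$ from Proposition~\ref{prop:peak} to get $|c_k|<1$, just as you do. One small correction: $\ell_k>0$ is needed only to guarantee $\sqrt{(\ell_k/\Delta)^2+B^2}>0$, not to select a branch of $\operatorname{atan2}$, because $\kappa\cos x+B\sin x=\sqrt{\kappa^2+B^2}\,\cos\big(x-\operatorname{atan2}(B,\kappa)\big)$ holds for every $(\kappa,B)\neq(0,0)$.
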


The proof is given in Appendix \ref{proof_prop3}. Together, Propositions~\ref{prop:peak} and \ref{prop:length} give a fully analytical characterization of the complex-conjugate component $\psi_\tau(\omega)$ associated with the class $\tau=\{r,\bar r\}$. Proposition~\ref{prop:peak} provides the location and height of both the peak, $(\omega_{\max}, G_{\max})$, and the trough, $(\omega_{\min}, G_{\min})$, while Proposition~\ref{prop:length} provides the crossing frequencies $\omega^{L,R}_k$. All of these quantities are obtained in closed form directly from the ARMA parameters $\vect{\beta}$, without requiring any numerical search over $\omega$.

Beyond the peak location and height, the pair $(\omega^L_k, \omega^R_k)$ from Proposition~\ref{prop:length} also yields a natural measure of width and asymmetry for the component $\psi_\tau(\omega)$. At level $k$, the width of the peak is $W_k$ as given in \eqref{eq:width}; in the classical case $k=1/2$, this reduces to the FWHM, the standard width parameter used to characterize the Lorentzian profiles employed to describe QPOs in astronomy \citep{Belloni2002,Nowak2000,vanderKlis1989}. The following definition, based on \cite{Heckman1981}, formalizes an asymmetry index built from the half-widths on either side of $\omega_{\max}$.

\begin{definition}\label{def:asym}
Let $\omega^L_k$ and $\omega^R_k$ be the left and right frequencies around $\omega_{\max}$ such that
\begin{equation*}
\psi_\tau(\omega^L_k) = \psi_\tau(\omega^R_k) = k\,\psi_\tau(\omega_{\max}),
\end{equation*}
and let $W^L_k = \omega_{\max} - \omega^L_k$ and $W^R_k = \omega^R_k - \omega_{\max}$ denote the corresponding left and right half-widths. The asymmetry index at level $k$ is
\begin{equation*}
AI_k = \frac{W^L_k - W^R_k}{W^L_k + W^R_k}.
\end{equation*}
\end{definition}

Since $\omega^L_k$ and $\omega^R_k$ are already available in closed form from Proposition~\ref{prop:length}, $AI_k$ inherits a closed-form expression directly in terms of the ARMA parameters $\vect{\beta}$, requiring no numerical evaluation of $\psi_\tau(\omega)$. Because $W^L_k, W^R_k > 0$, the index is bounded, $AI_k \in (-1,1)$, with $AI_k = 0$ corresponding to a symmetric peak, $W^L_k = W^R_k$; this occurs, in particular, when the root of the autoregressive polynomial $\phi(z^{-1})=0$ associated with the component is real. Positive (negative) values of $AI_k$ indicate that the peak decays more slowly toward lower (higher) frequencies than toward higher (lower) ones, i.e., a left- (right-) skewed QPO profile.

\begin{proposition}\label{prop:AIclosed}
Under the hypotheses of Proposition~\ref{prop:length}, with $\gamma_k>0$,
\begin{equation}\label{eq:AIclosed}
AI_k=\frac{\lambda^\star-\mu_k}{\gamma_k}
=\frac{1}{\gamma_k}\arctan\!\left(\frac{-(1-k)\,B\,(G_{\max}/\Delta)}{k\,(G_{\max}/\Delta)^2+B^2}\right).
\end{equation}
In particular $\operatorname{sign}(AI_k)=-\operatorname{sign}(B)$, and $AI_k\equiv0$ for
all $k$ if and only if $B=0$. Moreover $AI_k=O(\sqrt{1-k})$ as $k\uparrow1$.
\end{proposition}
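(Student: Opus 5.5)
The plan is to derive the first equality of \eqref{eq:AIclosed} directly from the closed forms in Propositions~\ref{prop:peak} and~\ref{prop:length}, then simplify the angle difference, and finally study the behaviour of $\gamma_k$ as $k\uparrow1$.

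\emph{Step 1: the first equality.} By Proposition~\ref{prop:length}, the superlevel set $\{\psi_\tau\ge\ell_k\}$ is a single arc with endpoints $\omega^{L,R}_k=(\alpha+\mu_k\mp\gamma_k)/(2\pi\Delta)$, and it contains $\omega_{\max}=(\alpha+\lambda^\star)/(2\pi\Delta)$. I will work with the unreduced angular representatives, so that no $2\pi$ jumps occur inside the arc. This gives
\begin{equation*}
W^L_k=\frac{\lambda^\star-\mu_k+\gamma_k}{2\pi\Delta},\qquad W^R_k=\frac{\mu_k+\gamma_k-\lambda^\star}{2\pi\Delta}.
\end{equation*}
The one point needing care is that $\lambda^\star$ and $\mu_k$ are measured consistently, i.e.\ that $|\lambda^\star-\mu_k|<\gamma_k$. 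This holds because both are atan2 values with positive second argument: $G_{\max}>0$ and $\ell_k>0$. Hence both lie in $(-\pi/2,\pi/2)$, and the containment of $\omega_{\max}$ in the arc fixes the representative. Then $W^L_k+W^R_k=\gamma_k/(\pi\Delta)$ and $W^L_k-W^R_k=(\lambda^\star-\mu_k)/(\pi\Delta)$, which yields $AI_k=(\lambda^\star-\mu_k)/\gamma_k$.

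\emph{Step 2: the arctan form.} Write $G=G_{\max}/\Delta>0$. Then $\lambda^\star=\arctan(B/G)$ and $\mu_k=\arctan(B/(kG))$. Both angles lie in the same open quarter-plane, $(0,\pi/2)$ or $(-\pi/2,0)$ according to the sign of $B$, so their difference lies in $(-\pi/2,\pi/2)$. The subtraction formula $\arctan x-\arctan y=\arctan\big((x-y)/(1+xy)\big)$ therefore applies without a branch correction. The ratio simplifies to $-(1-k)BG/(kG^2+B^2)$, which gives \eqref{eq:AIclosed}.

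The sign statement follows because $\gamma_k>0$, $G>0$ and $1-k>0$, so $\operatorname{sign}(AI_k)=-\operatorname{sign}(B)$. The same fact gives $AI_k=0$ for some (equivalently all) $k$ if and only if $B=0$.

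\emph{Step 3: the rate as $k\uparrow1$ (the main obstacle).} The numerator $\lambda^\star-\mu_k$ is $O(1-k)$, so it suffices to show that $\gamma_k$ is bounded below by a constant times $\sqrt{1-k}$. Let $x(\ell)$ denote the argument of the arccos in \eqref{eq:arc}. The equation $x(\ell)^2=1$ reads $(\ell(1+s^2)-A(1-s^2))^2=4s^2(\ell^2+B^2)$. This is a quadratic in $\ell$ with leading coefficient $(1-s^2)^2$, and I expect its roots to be exactly $G_{\max}/\Delta$ and $G_{\min}/\Delta$; I will verify this by Vieta's formulas against \eqref{eq:extremes}. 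It then follows that
\begin{equation*}
1-x^2=\frac{(1-s^2)^2(G_{\max}/\Delta-\ell/\Delta)(\ell/\Delta-G_{\min}/\Delta)}{4s^2\big((\ell/\Delta)^2+B^2\big)}.
\end{equation*}
Since $s|K|>0$, we have $G_{\max}\neq G_{\min}$. Hence, for $\ell=\ell_k$ near $G_{\max}$, the quantity $1-x$ is of exact order $1-k$, with $x\to1$. Using $\arccos x\sim\sqrt{2(1-x)}$ gives $\gamma_k\asymp\sqrt{1-k}$. Consequently $AI_k=O\big((1-k)/\sqrt{1-k}\big)=O(\sqrt{1-k})$.
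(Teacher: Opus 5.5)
Your proposal is correct and follows essentially the paper's route: the half-width identities give $AI_k=(\lambda^\star-\mu_k)/\gamma_k$, and the angle difference is simplified (the paper via $\arg\big((g_{\max}+iB)/(\kappa_k+iB)\big)$, you via the equivalent $\arctan$ subtraction formula). The rate then comes from $g_{\max}$ being a simple root of the quadratic $\Pi$, which is your equation $x(\ell)^2=1$; here your Step 3 is more careful than the paper, which writes $\gamma_k=O(\sqrt{1-k})$ where the conclusion actually needs the lower bound $\gamma_k\gtrsim\sqrt{1-k}$ that you establish, and your check that $\lambda^\star,\mu_k\in(-\pi/2,\pi/2)$ settles the branch issue the paper passes over.
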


The proof is given in Appendix \ref{proof_prop4}. The first equality in
\eqref{eq:AIclosed} identifies $AI_k$ as the angular offset of the peak from the
centre of the level arc, normalised by the arc half-width, and the rate
$O(\sqrt{1-k})$ explains why the index vanishes as the level approaches the
maximum: near its peak the component is locally quadratic and therefore locally
symmetric.

In numerical work, the three angles $\mu_k\pm\gamma_k$ and $\lambda^\star$ are returned
in different periods by independent $\operatorname{atan2}$ and $\arccos$ evaluations, so
the half-widths must be computed as
$W^L_k=\big[(\lambda^\star-\mu_k+\gamma_k)\bmod2\pi\big]/(2\pi\Delta)$ and
$W^R_k=W_k-W^L_k$. Because the
superlevel set is a single arc, this reduction is unambiguous and always places
$\omega_{\max}$ between the two crossings. When the arc contains the Nyquist frequency,
the two crossings appear at opposite ends of the plotted band.

When $B=0$ the component is the weighted Wrapped Cauchy kernel in \eqref{aux5}, and the results
above remain valid rather than degenerating. For $A>0$ one has $G_{\max}=\Delta A(1+s)/(1-s)$
and $\omega_{\max}=\alpha/(2\pi\Delta)$ with $\alpha\in\{0,\pi\}$, so the peak sits at
zero or at the Nyquist frequency, and $\mu_k=0$ for every $k$, whence $AI_k\equiv0$ by
Proposition~\ref{prop:AIclosed}: the peak is symmetric at every level, as it must be.
The width is $W_k=\gamma_k/(\pi\Delta)$ with
$\gamma_k=\arccos\big[(1+s^2)/(2s)-(1-s)^2/(2ks)\big]$, and $k_{\min}=(1-s)^2/(1+s)^2$,
so the level $kG_{\max}$ is crossed at all only for $k\ge(1-s)^2/(1+s)^2$. In particular the
FWHM of a real-root component exists only when $s\ge3-2\sqrt2\approx0.1716$; for
smaller $s$ the kernel is too weakly peaked to fall to half its maximum within the band.
For $A<0$ the maximum is $G_{\max}=\Delta A(1-s)/(1+s)<0$ and $k\in(0,1)$ is no longer a
meaningful parametrisation of the level.

\begin{figure}[h!]
\centering
\includegraphics[width=13cm]{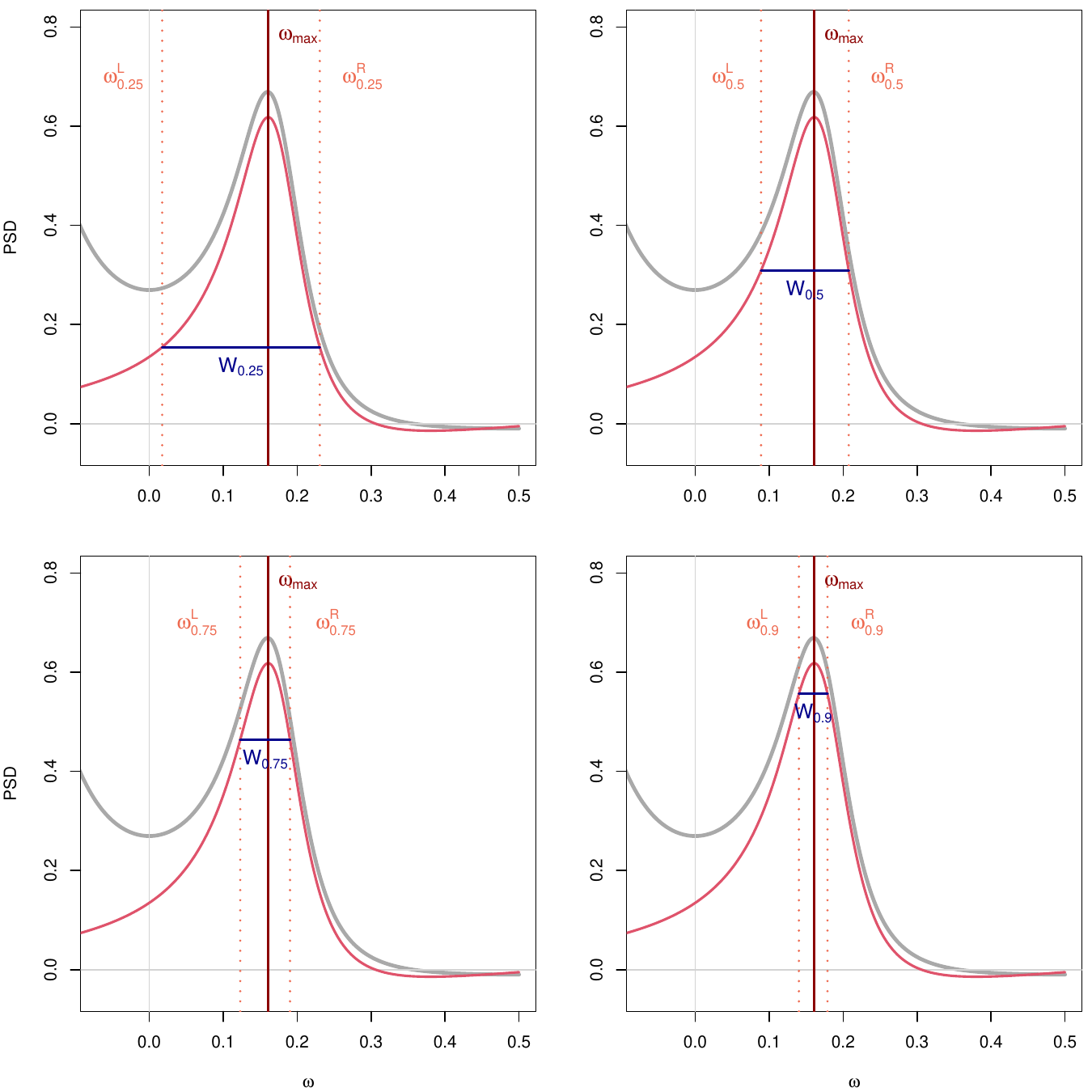}
\caption{Illustration of the peak frequency $\omega_{\max}$, boundary frequencies $\omega^L_k$ and $\omega^R_k$, and width $W_k=\omega^R_k-\omega^L_k$ for the component $\psi_{\tau}(\omega)$ given in \eqref{Pcomplex_LS} (red line), superimposed on the component $\widetilde{\PSD}_{X,\tau}(\omega;\vect{\beta})$ (gray line) with $\tau=\{0.3239 \pm i 0.6235\}$ for different values of $k$. The vertical solid line marks $\omega_{\max}$, the peak frequency of $\psi_{\tau}(\omega)$, while the dotted lines mark $\omega^L_k$ and $\omega^R_k$, the frequencies at which the component falls to a fraction $k$ of its peak value on either side of $\omega_{\max}$, and the horizontal segment indicates the resulting width $W_k$. Panels correspond to $k=0.25$, $0.5$, $0.75$, and $0.9$.}
\label{fig:width_example}
\end{figure}
Figure~\ref{fig:width_example} illustrates Propositions~\ref{prop:peak} and \ref{prop:length} for the component $\psi_\tau(\omega)$ associated with $\tau = \{0.3239 \pm i0.6235\}$ in the ARMA(3,2) example of Figure~\ref{fig:spectral_total}, superimposed on $\widetilde{\PSD}_{X,\tau}(\omega;\vect{\beta})$. Across all panels, the peak location $\omega_{\max}=0.1609$ and the reference level $G_{\max}=0.6181$ remain fixed, since both quantities depend only on the class $\tau$ and not on the level $k$. Although not displayed in the figure, Proposition~\ref{prop:peak} also gives the minimum of the component, $G_{\min}=-0.014$, attained at $\omega_{\min}=0.3801$; hence $\psi_\tau$ takes only slightly negative values away from the peak. As $k$ increases from $0.25$ to $0.9$, the threshold $\ell_k$ increases monotonically from $0.1545$ to $0.5563$, and the crossing frequencies $\omega^L_k$ and $\omega^R_k$ close in around $\omega_{\max}$, though at unequal rates on the two sides: for $k=0.25$, $\omega^L_{0.25}=0.0176$ and $\omega^R_{0.25}=0.2306$, whereas for $k=0.9$, $\omega^L_{0.9}=0.1402$ and $\omega^R_{0.9}=0.1788$. Correspondingly, the total width $W_k$ shrinks from $0.2130$ at $k=0.25$ to $0.0387$ at $k=0.9$, reflecting the narrowing of the resolved feature as the level approaches the peak. Table~\ref{tab:peak_levels} summarizes these quantities for the four levels shown in Figure~\ref{fig:width_example}.

Table~\ref{tab:peak_levels} also reports the half-widths $W^L_k$ and $W^R_k$, which decrease at different rates as $k$ increases, from $(W^L_{0.25},W^R_{0.25})=(0.1434,0.0696)$ at $k=0.25$ to $(W^L_{0.9},W^R_{0.9})=(0.0207,0.0179)$ at $k=0.9$. This asymmetry is captured by the index $AI_k$, also listed in the table, which decreases monotonically from $0.3462$ at $k=0.25$ to $0.2135$ at $k=0.5$, $0.1260$ at $k=0.75$, and $0.0733$ at $k=0.9$. This trend shows that the peak is markedly asymmetric near its base, decaying faster toward higher frequencies than toward lower ones, and becomes increasingly symmetric as $k$ approaches $1$, at the rate $O(\sqrt{1-k})$ established in Proposition~\ref{prop:AIclosed}.

\begin{table}[!t]
\centering
\caption{Peak, boundary frequencies, and asymmetry index for $\psi_\tau(\omega)$ with $\tau=\{0.3239\pm i0.6235\}$, at four levels $k$. The peak location $\omega_{\max}=0.1609$ and reference level $G_{\max}=0.6181$ are common to all levels, as are the minimum $G_{\min}=-0.014$ and its location $\omega_{\min}=0.3801$.}
\label{tab:peak_levels}
\begin{tabular}{lcccc}
\hline
Quantity & $k=0.25$ & $k=0.5$ & $k=0.75$ & $k=0.9$ \\
\hline
$\ell_k$         & 0.1545 & 0.3090 & 0.4636 & 0.5563 \\
$\omega^L_k$     & 0.0176 & 0.0890 & 0.1230 & 0.1402 \\
$\omega^R_k$     & 0.2306 & 0.2075 & 0.1904 & 0.1788 \\
$W^L_k$          & 0.1434 & 0.0719 & 0.0379 & 0.0207 \\
$W^R_k$          & 0.0696 & 0.0466 & 0.0294 & 0.0179 \\
$W_k$            & 0.2130 & 0.1185 & 0.0673 & 0.0387 \\
$AI_k$           & 0.3462 & 0.2135 & 0.1260 & 0.0733 \\
\hline
\end{tabular}
\end{table}

An important implication of the spectral decomposition presented in \eqref{PSD_decomp4} is that it
provides a direct interpretation of the variance of the ARMA process in terms of its
spectral components. Since the variance is given by the integral of the spectral
density over the frequency domain, we have
\begin{equation*}
\gamma_Y(0) = 2\pi \int_{-1/(2\Delta)}^{1/(2\Delta)} \PSD_Y(\omega;\vect{\beta},\sigma^2_{\eta})\,d\omega.
\end{equation*}

Substituting the decomposition of $\PSD_Y(\omega;\vect{\beta},\sigma^2_{\eta})$ given in \eqref{PSD_ARMA_with_error}, with $\PSD_X(\omega;\vect{\beta})$ given in \eqref{PSD_decomp4} and $\PSD_{\eta}(\omega;\sigma^2_{\eta})$ in \eqref{PSD_me}, into the expression for $\gamma_Y(0)$ and integrating term by term, the contribution of each class $\tau$ reduces to a simple closed form. Indeed, by \eqref{int_Ldes} and \eqref{int_S}, the variance of the process $\{Y_t\}$ can be expressed as
\begin{align}
\gamma_Y(0;\vect{\beta},\sigma^2_{\eta}) &= \mathbb{I}_{\{q\geq p\}}\, \sigma_Z^2 c_0
   +  \sum_{\substack{\tau \in \mathcal{R}/\sim \\ |\tau|=1}} 2\pi K(r_{\tau})
   +  \sum_{\substack{\tau \in \mathcal{R}/\sim \\ |\tau|=2}}
   4\pi \Real(K(r_{\tau})) + \sigma^2_{\eta}, \label{gamma0_final}
\end{align}
\noindent where $r_\tau$ denotes the representative root of class $\tau$ (namely $r_\tau = r$ when $\tau = \{r\}$ is real, and $r_\tau = r = s_\tau e^{i\alpha_\tau}$ for either choice of $r \in \{r,\bar r\}$ when $\tau = \{r,\bar r\}$ is complex).

The decomposition of the total variance in \eqref{gamma0_final} shows that it splits additively across the equivalence classes $\tau\in\mathcal{R}/\sim$. For a complex-conjugate pair, the factor of $4\pi\Real(K(r_\tau))$ arises from integrating $\psi_\tau(\omega)+\psi_\tau(-\omega)$: by \eqref{int_Ldes} and \eqref{int_S}, each $\varphi_{s_\tau}(\omega,\pm\alpha_\tau)$ term integrates to $1$, so the pair contributes $2$, weighted by $\Real(K(r_\tau))$ and the $2\pi$ normalization in \eqref{PSD_ARMA_without_error}. Each real root class instead contributes $2\pi K(r_\tau)$, coming from a single copy of this integral. When $q\geq p$, an additional term $\sigma_Z^2 c_0$ enters from the component $\PSD_{X,0}$. In contrast, the components $\zeta_{s_\tau}(\omega,\pm\alpha_\tau)$ do not contribute to the variance, they integrate to zero and only affect the shape of the spectral density, not its total mass.

Note that the quantities $K(r_\tau)$ and $\Real(K(r_\tau))$ are not necessarily positive, and can
take positive or negative values depending on the model parameters. Nevertheless, the
variance $\gamma_X(0)$ is always non-negative, as it is defined as the integral of the
spectral density, which itself is non-negative for all frequencies. Therefore, the above
expression should be interpreted as a decomposition of the variance into contributions
that may have different signs, but whose sum is guaranteed to be non-negative. For example, consider an AR$(2)$ process without measurement error. If both roots are real, forming classes
$\tau_1=\{r_{\tau_1}\}$ and $\tau_2=\{r_{\tau_2}\}$, the variance in
\eqref{gamma0_final} reduces to
\begin{equation*}
\gamma_Y(0) = 2\pi\big[K(r_{\tau_1})+K(r_{\tau_2})\big],
\end{equation*}
so positivity of $\gamma_Y(0)$ requires only that the sum be positive, not each term
individually. If instead the roots form a complex-conjugate pair
$\tau_1=\{r_{\tau_1},\bar r_{\tau_1}\}$, the variance reduces to
\begin{equation*}
\gamma_Y(0) = 4\pi\Real(K(r_{\tau_1})),
\end{equation*}
which forces $\Real(K(r_{\tau_1}))\geq 0$, so each of the two
Wrapped Cauchy components making up $\widetilde \PSD_{X,\tau_1}$ contributes a
non-negative amount to the total variance.

To further enhance the interpretability of the decomposition, we define the relative
contribution of each class as a percentage of the total variance of $\{Y_t\}$,
$\gamma_Y(0;\vect{\beta},\sigma^2_{\eta})$. Specifically, when
$q\geq p$, the constant term contributes
\begin{equation}\label{contribution_0}
\text{Contribution}_{0}(\%) = \frac{D_0}{\gamma_Y(0;\vect{\beta},\sigma^2_{\eta})} \times 100,
\qquad D_0 = \sigma_Z^2 c_0,
\end{equation}
and for each class $\tau\in\mathcal{R}/\sim$, we define
\begin{equation}\label{contribution_tau}
\text{Contribution}_{\tau}(\%) = \frac{D_{\tau}}{\gamma_Y(0;\vect{\beta},\sigma^2_{\eta})} \times 100,
\end{equation}
where $D_{\tau} = 2\pi K(r_{\tau})$ when $|\tau|=1$ and
$D_{\tau} = 4\pi\Real(K(r_{\tau}))$ when $|\tau|=2$. The contribution of the
measurement error is defined analogously as
\begin{equation}\label{contribution_eta}
\text{Contribution}_{\eta}(\%) = \frac{\sigma^2_{\eta}}{\gamma_Y(0;\vect{\beta},\sigma^2_{\eta})} \times 100.
\end{equation}
This
normalization provides a direct measure of how much each spectral component,
together with the measurement error, contributes to the total variance of the
observed process. As discussed above, the quantities
$K(r_\tau)$ and $\Real(K(r_\tau))$ are not necessarily positive,
and therefore these percentages may take negative values or exceed $100\%$ in
magnitude, even though their sum (including $D_0$ when $q\geq p$, and
$\text{Contribution}_{\eta}$) is always equal to
$100\%$. Consequently, these contributions should be interpreted as signed components
of the variance rather than independent proportions.

Table~\ref{tab:variance_contrib_arma32} shows the variance decomposition for the
ARMA$(3,2)$ model without measurement error used in Figure~\ref{fig:spectral_total}, with autoregressive
coefficients $\phi_1=0.04$, $\phi_2=-0.1$, $\phi_3=-0.3$ and moving-average
coefficients $\theta_1=0.5$, $\theta_2=0.3$. Since $q<p$, the constant term
$\PSD_{X,0}$ is absent, and the autoregressive polynomial admits one real root,
associated with class $\tau_1=\{-0.6077\}$, and one complex-conjugate pair,
associated with class $\tau_2=\{0.3239+i0.6235\}$. The real root contributes
$D_{\tau_1}=2\pi K(r_{\tau_1})=0.2361$ to the total variance, while the
complex-conjugate pair contributes $D_{\tau_2}=4\pi\,\Real\big(K(r_{\tau_2})\big)=1.2867$. The total variance
of the process is $\gamma_Y(0)=1.5228$, of which the real root accounts for
$15.5059\%$ and the complex-conjugate pair for $84.4941\%$.

\begin{table}[t]
\centering
\caption{Variance contribution of each class of the ARMA(3,2) model.}
\label{tab:variance_contrib_arma32}
\begin{tabular}{@{}ccc@{}}
\hline
        \multicolumn{1}{c}{Class} & \multicolumn{1}{c}{Variance contribution} & \multicolumn{1}{c}{Percentage (\%)} \\
        \hline
        $\{-0.6077\}$            & $0.2361$ & $15.5059$ \\
         $\{0.3239 \pm i0.6235\}$ & $1.2867$ & $84.4941$ \\
        Total & $1.5228$ & $100$ \\
        \hline
\end{tabular}
\end{table}

\section{State space representation and estimation procedure}\label{statespace_estimation}

This section describes the methodological framework for model estimation and spectral analysis. Section~\ref{state_space} presents the state space representation of ARMA models with measurement error. Section~\ref{param_estim} describes the procedure used to estimate the model parameters $\vect{\beta}$, and Section~\ref{psd_estim} presents the estimation of the spectral density from the fitted model, based on the decomposition introduced in Section~\ref{decomp_PSD_ARMA}.

\subsection{State space representation}\label{state_space}

There are several ways of expressing an ARMA model with measurement error in state-space form \citep{aoki1990statespace,box2015timeseries,brockwell1991,DEJONG2004119}. We consider the representation
\begin{eqnarray}\label{SS_ARMA_with_error}
\vect{x}_{t} &=& \matr{G} \vect{x}_{t-1} + \vect{W}_{t},\nonumber\\
y_{t} &=& \vect{F} \vect{x}_{t} + \eta_{t},
\end{eqnarray}
where $\vect{W}_{t}$ and $\eta_{t}$ are white noise processes with zero mean and covariance matrix $\matr{\Sigma}_w$ and variance $\sigma^2_{\eta}$, respectively. The state vector, transition matrix, and related quantities are given by
\begin{eqnarray}
\vect{x}_t &=&
\begin{bmatrix}
x_t^{(1)} \\
x_t^{(2)} \\
\vdots \\
x_t^{(m-1)} \\
x_t^{(m)}
\end{bmatrix},
\qquad
\matr{G} =
\begin{bmatrix}
\phi_1 &    1   & 0      & \cdots & 0      &   \\
\phi_2 &    0   & 1      & \cdots & 0      &   \\
\vdots & \vdots & \vdots & \ddots &  \vdots \\
\phi_{m-1} &  0 &  0   &   \cdots     & 1 \\
\phi_m & 0      & 0      & \cdots & 0
\end{bmatrix},
\quad
\matr{\Sigma}_w = \vect{R}\vect{R}^{\top}\sigma^2_Z,\nonumber\\
\vect{R}&=&\begin{bmatrix}
1 \\
\theta_1 \\
\vdots \\
\theta_{m-1}
\end{bmatrix},
\quad
\vect{F} = [1,0,\ldots,0,0],\label{ss_parts}
\end{eqnarray}
with $m=\max\{p,q+1\}$, $\phi_j=0$ at $j>p$, $\theta_j=0$ at $j>q$. For more details on this state-space representation see \cite{hagiwara2021time}.

In this formulation, \eqref{SS_ARMA_with_error} consists of two parts, the first defines the state equation, governing the temporal evolution of the unobserved state vector $\vect{x}_t$, while the second defines the observation equation, linking the latent process to the observed series $y_t$. In the absence of measurement error,
the observation equation simplifies to
$$y_{t} = \vect{F}\vect{x}_{t},$$
so that the observations coincide directly with the ARMA state-space representation.

\subsection{Parameter estimation}\label{param_estim}

Let $\vect{y} = (y_1, \ldots, y_n)^\top$, be a realization of the stochastic process $\{Y_{t} \}$. Assuming that the white noise terms in \eqref{ARMA_def_sin_me} and
\eqref{ARMA_def_with_me} are Gaussian, the resulting
ARMA$(p,q)$ process is also Gaussian, and the likelihood function is
\begin{align}\label{Likelihood_ARMA}
\begin{split}
p(\boldsymbol{y} \mid \vect{\beta},\sigma^2_{\eta}) &= \frac{1}{(2\pi)^{n/2} |\boldsymbol{\Sigma}|^{1/2}}
\exp \left\{ -\frac{1}{2} \boldsymbol{y}^\top
\boldsymbol{\Sigma}^{-1} \boldsymbol{y} \right\}, \\
\Sigma_{ij} &= \gamma_X(|t_{i} - t_{j}|) + \delta_{ij}\sigma^{2}_{\eta},
\end{split}
\end{align}
where
$\delta_{ij}$ is the Kronecker delta function, $\sigma^2_Z$ is the
variance of the white noise process in the ARMA model,
$\sigma^2_{\eta}$ is the variance of the measurement error for $y_t$,
and $\gamma_X(|i-j|;\vect{\beta})$ is the ACVF of the ARMA$(p,q)$ process given
by \eqref{ACVF_original}. Throughout this work, we assume
that the measurement error variance $\sigma^2_{\eta}$ is known.
Maximum likelihood estimator $\widehat{\vect{\beta}}$ of  $\vect{\beta}$ is
therefore obtained by maximizing \eqref{Likelihood_ARMA}.

Direct evaluation of \eqref{Likelihood_ARMA} requires inverting the $n \times n$ covariance
matrix $\boldsymbol{\Sigma}$, with computational complexity $O(n^3)$, which becomes computationally intensive and numerically unstable for large $n$. Expressing the ARMA$(p,q)$ process in state-space form allows the Kalman filter to evaluate the likelihood recursively, as a product of Gaussian
densities of the one-step-ahead prediction errors, operating on matrices of fixed dimension
$m \times m$ with $m = \max(p, q+1)$ and avoiding inversion of the full covariance matrix.
This reduces the overall cost to $O(n)$ \citep{Gardner1980}. Full details of the recursions
are provided in Supplementary Material S.1.

\subsection{Spectral density estimation}\label{psd_estim}

Once the parameter $\vect{\beta}$ have been estimated via maximum likelihood, the PSD of the process can be directly obtained by substituting these estimates into the corresponding analytical expression. Specifically, the plug-in estimator of the PSD is given by
\begin{equation}\label{PSD_estimated}
{\PSD}_Y(\omega;\widehat{\vect{\beta}},\sigma^2_{\eta}) ={\PSD}_X(\omega;\widehat{\vect{\beta}}) + \PSD_{\eta}(\omega,\sigma^2_{\eta}),
\end{equation}
where ${\PSD}_X(\omega; \widehat{\vect{\beta}})$ is obtained from \eqref{PSD_ARMA_without_error} by replacing the true parameters with their maximum-likelihood estimator, that is,
\begin{equation*}
\PSD_X(\omega;\widehat{\vect{\beta}})=\frac{\Delta \widehat{\sigma}^2_Z}{2\pi} \frac{ \left | \widehat{\theta}(e^{-i2\pi \omega\Delta}) \right |^2}{ \left | \widehat{\phi}(e^{-i2\pi \omega\Delta}) \right |^2 }
\end{equation*}
with
\begin{align*}
    \widehat{\phi}(z)&=1-\widehat{\phi}_1 z - \dots - \widehat{\phi}_p z^p,\\
    \widehat{\theta}(z)&=1+\widehat{\theta}_1 z +\dots + \widehat{\theta}_q z^q.
\end{align*}

Since the PSD decomposition is derived for the process $\{X_t\}$, we can likewise
construct estimators for each of the individual components of this decomposition. Specifically, let $\widehat{\mathcal{R}}$ be the set of estimated roots obtained from the maximum likelihood estimator,
we propose the plug-in estimators for all terms contributing to the PSD of $\{X_t\}$ in \eqref{PSD_decomp4}, given by
\begin{equation}
    {\PSD}_X(\omega;\widehat{\vect{\beta}}) = \mathbb{I}_{\{q\geq p\}}\, {\PSD}_{X,0}(\omega;\widehat{\vect{\beta}})
             + \sum_{\substack{\tau \in \widehat{\mathcal{R}}/\sim \\ |\tau|=1}} \widetilde{\PSD}_{X,\tau}(\omega;\widehat{\vect{\beta}})
             + \sum_{\substack{\tau \in \widehat{\mathcal{R}}/\sim \\ |\tau|=2}} \widetilde{\PSD}_{X,\tau}(\omega;\widehat{\vect{\beta}}),
             \label{PSD_decomp4_hat}
\end{equation}
where $\PSD_{X,0}(\omega;\widehat{\vect{\beta}})$ denotes the plug-in estimator of the
corresponding term in \eqref{P0}, obtained by evaluating it at the estimated coefficients
$\widehat{\vect{\beta}}$. For a class $\tau$ given by a real root (i.e., $|\tau|=1$), the term
$\widetilde{\PSD}_{X,\tau}(\omega;\widehat{\vect{\beta}})$ is obtained by plugging the
estimated coefficients $\widehat{\vect{\beta}}$ into \eqref{aux5}. Similarly, for a class
$\tau$ given by a complex-conjugate root pair (i.e., $|\tau|=2$),
$\widetilde{\PSD}_{X,\tau}(\omega;\widehat{\vect{\beta}})$ is obtained by plugging
$\widehat{\vect{\beta}}$ into \eqref{Pcomplex}. Here, $\widehat{\mathcal{R}}/\sim$ denotes the
set of equivalence classes of the estimated roots $\widehat{\mathcal{R}}$, obtained from
$\widehat{\vect{\beta}}$ in the same manner as $\mathcal{R}/\sim$ is obtained from the true
roots $\mathcal{R}$.

An alternative nonparametric estimator of the PSD is the Welch estimator \citep{Welch1967}. Let $\{y_1,\dots,y_n\}$ be a realization of a zero-mean stationary process. The data are divided into $K$ (possibly overlapping) segments of length $L$, with starting points separated by $D$ observations, such that $(K-1)D + L = n$. Denote the $k$-th segment by
\begin{equation*}
y_k(j) = y_{j + (k-1)D}, \quad j = 1, \ldots, L, \quad k = 1, \ldots, K.
\end{equation*}

For each segment, a window $w(j)$, $j=1,\ldots,L$, is applied, and the discrete
Fourier transform is computed as
\begin{equation*}
A_k(\omega)=\frac{\Delta^{1/2}}{L}\sum_{j=1}^{L}y_k(j)\, w(j)\, e^{-i 2\pi \omega \Delta j}.
\end{equation*}
The corresponding modified periodogram is
\begin{equation*}
I_k(\omega)=\,\frac{L}{U}\, \left| A_k(\omega) \right|^2,
\qquad
U = \frac{1}{L} \sum_{j=1}^{L} w^2(j).
\end{equation*}

The Welch estimator of the PSD is then the average of these modified periodograms,
\begin{equation}\label{Welch_periodogram}
\widehat{\PSD}_Y^{\,\text{W}}(\omega)
= \frac{1}{K} \sum_{k=1}^{K} I_k(\omega).
\end{equation}

The Welch estimator is asymptotically unbiased as $L\to\infty$, its expectation
converging to the true PSD, though for finite $L$ it is subject to a smoothing bias
arising from spectral leakage across neighboring frequencies. Its variance decreases as
the number of averaged segments $K$ increases, reflecting the classical bias-variance
trade-off of nonparametric spectral estimation. Mean-square consistency therefore
requires both $L$ and $K$ to increase with the sample size $n$.

In contrast to nonparametric methods such as the Welch estimator, the plug-in estimator
defined in \eqref{PSD_estimated} exploits the ARMA model structure, yielding lower
variance and improved interpretability, particularly in the presence of spectral peaks
\citep{kay1981}.

Beyond the PSD itself, the plug-in principle also provides estimators of the quantities used to characterize each spectral component. Given the estimated component $\psi_{\tau}(\omega;\widehat{\vect{\beta}})$, the peak frequency $\omega_{\max}$, the boundaries $\omega_{k}^{L}$, $\omega_{k}^{R}$, and the associated widths $W_{k}^{L}$, $W_{k}^{R}$, and $W_{k}$ are obtained by applying their corresponding definitions to $\psi_{\tau}(\omega;\widehat{\vect{\beta}})$ in place of $\psi_{\tau}(\omega;\vect{\beta})$, yielding the plug-in estimators $\widehat\omega_{\max}$, $\widehat\omega_{k}^{L}$, $\widehat\omega_{k}^{R}$, $\widehat W_{k}^{L}$, $\widehat W_{k}^{R}$, and $\widehat W_{k}$.

The same principle applies to the contribution of each component to the total variance of the process. Replacing $\vect{\beta}$ with $\widehat{\vect{\beta}}$ in the variance decomposition \eqref{gamma0_final}, while keeping the known measurement error variance $\sigma^2_{\eta}$ unchanged, yields the plug-in estimator of the total variance,
\begin{align*}
\gamma_Y(0;\widehat{\vect{\beta}},\sigma^2_{\eta}) = \mathbb{I}_{\{q\geq p\}}\, \widehat\sigma_Z^2 \widehat c_0
   +  \sum_{\substack{\tau \in \widehat{\mathcal{R}}/\sim \\ |\tau|=1}} 2\pi K( r_{\tau})
   +  \sum_{\substack{\tau \in \widehat{\mathcal{R}}/\sim \\ |\tau|=2}}
   4\pi \Real(K(r_{\tau})) + \sigma^2_{\eta}.
\end{align*}

Analogously, the estimated contribution of each term to the total variance is
obtained via the plug-in principle, replacing $D_0$ and $D_\tau$ in the
definitions \eqref{contribution_0} and \eqref{contribution_tau} by their plug-in
counterparts $\widehat D_0$ and $\widehat D_\tau$, and evaluating the denominator
at $\gamma_Y(0;\widehat{\vect{\beta}},\sigma^2_{\eta})$. The contribution of the
measurement-error component retains the form \eqref{contribution_eta}, evaluated
at the same denominator.

\section{Numerical study}\label{simulations}

Propositions~\ref{prop:peak} to \ref{prop:AIclosed} give the location, width and
asymmetry of a spectral peak in closed form as functions of the ARMA parameters.
This section assesses whether the plug-in versions of those closed forms recover
the true descriptors, and how they compare with the two quantities, namely the nominal centroid $\alpha/(2\pi\Delta)$ and a numerical
maximization of the fitted spectral density.

We consider three models, specified through the equivalence classes
$\tau\in\mathcal{R}/\!\sim$ of Section~\ref{decomp_PSD_ARMA} and chosen so that
each exercises a different aspect of the decomposition.

\textbf{M1: AR($2$), one sharply concentrated class.} A single complex class
$\tau_1=\{0.9\,e^{\pm i\,0.4\pi}\}$, so that $\mathcal{R}/\!\sim\,=\{\tau_1\}$
with $n_R=0$ and $n_C=1$, giving $\vect\phi=(0.5562,-0.81)$ and
$\gamma_X(0)=3.2111$. This is the regime that dominates the astronomical
literature, in which the component is narrow.

\textbf{M2: AR($4$), one sharp and one broad class.} Two complex classes,
$\tau_1=\{0.95\,e^{\pm i\,0.2\pi}\}$ and $\tau_2=\{0.70\,e^{\pm i\,0.6\pi}\}$, so
$n_R=0$ and $n_C=2$, giving $\vect\phi=(1.1045,-0.7275,0.3628,-0.4422)$ and
$\gamma_X(0)=5.8496$. The class $\tau_2$ is weakly concentrated, so by
Proposition~\ref{prop:peak} its peak is displaced appreciably from
$\alpha_2/(2\pi\Delta)$, while $\tau_1$ is sharply concentrated; the model also
allows the resolution of two distinct components to be assessed.

\textbf{M3: ARMA($3,2$), one complex and one real class.} A complex class
$\tau_1=\{0.9\,e^{\pm i\,0.3\pi}\}$ and a real class $\tau_2=\{-0.6\}$, so
$n_R=1$ and $n_C=1$, with $\theta_1=0.5$ and $\theta_2=0.3$, giving
$\vect\phi=(0.458,-0.1752,-0.486)$ and $\gamma_X(0)=4.2189$. Since $q>0$, the
weights $K(r)$ depend on the moving-average polynomial through
$\theta(r)\theta(r^{-1})$ and acquire an imaginary part five times larger than in
the pure autoregressive models, so this is the model in which the asymmetry index
of Definition~\ref{def:asym} is informative. The real class contributes a
broadband background, and by \eqref{aux5} it has no interior peak, so the
descriptors are reported for the complex class only.

Table~\ref{tab:sim_models} reports the true descriptors, and
Figure~\ref{fig:sim_psd} shows the corresponding spectral densities. The
displacement $\omega_{\max}-\alpha/(2\pi\Delta)$ is at most of order $10^{-3}$ for the three sharply concentrated classes and of order $10^{-2}$ for $\tau_2$ of M2, a difference of two orders of magnitude; the asymmetry index behaves likewise.

\begin{table}[b]
\centering
\caption{True values of the peak descriptors for each complex class
$\tau\in\mathcal{R}/\!\sim$, at level $k=1/2$ and $\Delta=1$. Here $s$ and
$\alpha$ are the polar coordinates of the representative root of $\tau$, $A$ and
$B$ are the real and imaginary parts of $K(r)$ in \eqref{Kr}, and $\omega_{\max}$,
$W_{1/2}$ and $AI_{1/2}$ follow from Propositions~\ref{prop:peak} and
\ref{prop:length}, and Definition \ref{def:asym}.}
\label{tab:sim_models}
\begin{tabular}{llrrrrrrrr}
\hline
Model & Class & $s$ & $\alpha$ & $A$ & $B$ & $\alpha/(2\pi\Delta)$ &
$\omega_{\max}$ & $W_{1/2}$ & $AI_{1/2}$ \\
\hline
M1 & $\tau_1$ & 0.9000 & 1.2566 & 0.2555 & -0.0087 & 0.2000 & 0.1997 & 0.0336 & 0.0170 \\
M2 & $\tau_1$ & 0.9500 & 0.6283 & 0.4456 & -0.0098 & 0.1000 & 0.0999 & 0.0163 & 0.0110 \\
M2 & $\tau_2$ & 0.7000 & 1.8850 & 0.0199 & -0.0104 & 0.3000 & 0.2863 & 0.1203 & 0.2247 \\
M3 & $\tau_1$ & 0.9000 & 0.9425 & 0.3253 & -0.0482 & 0.1500 & 0.1488 & 0.0337 & 0.0731 \\
\hline
\end{tabular}
\end{table}

\begin{figure}[!t]
\centering
\includegraphics[width=14cm]{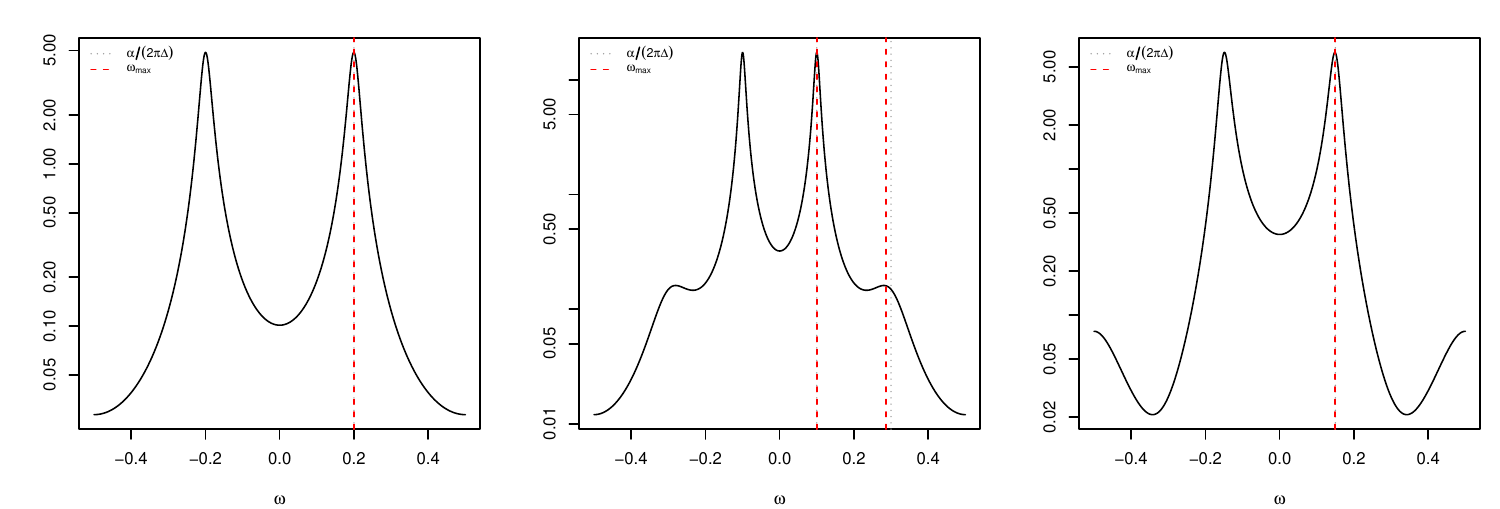}
\caption{Spectral densities of AR($2$) (left panel), AR($4$) (center panel), and ARMA($3,2$) (right panel), with
$\sigma^2_Z=1$ and $\Delta=1$. For each complex class the dotted line marks the
nominal centroid $\alpha/(2\pi\Delta)$ and the dashed line the peak
$\omega_{\max}$ of Proposition~\ref{prop:peak}. For the sharply concentrated
classes the two are indistinguishable at this resolution; they separate visibly
only for the broad class $\tau_2$ of M2.}
\label{fig:sim_psd}
\end{figure}

For each model we simulate $R=1000$ series of length $n\in\{100,1000\}$ at
$\Delta=1$, with $\sigma^2_Z=1$. Measurement error is added at two levels of the
noise-to-signal ratio (nsr) $\sigma^2_\eta/\gamma_Y(0)\in\{0,0.20\}$, the second close
to the value found in the application of Section~\ref{application}. Throughout,
$\sigma^2_\eta$ is treated as known, as in Section~\ref{param_estim}.

Each series is fitted by maximum likelihood through the state-space
representation and the Kalman filter of Section~\ref{statespace_estimation}, over
$p\in\{1,\dots,5\}$ and $q\in\{0,\dots,5\}$, with the order selected by Bayesian Information Criterion (BIC) \citep{Schwarz1978};
results for Akaike Information Criterion (AIC) \citep{Akaike1974} are also recorded.

From each fitted model we take the complex classes of
$\widehat{\mathcal{R}}/\!\sim$ and match each to a true class by nearest
argument, recording a failure when no estimated class lies within $0.05$ of the
true argument. For each matched class we compute three estimators of the same
target $\omega_{\max}$: $\widehat\omega_{\max}$, the closed form \eqref{eq:wmax}
evaluated at $\widehat{\vect\beta}$;
$\widehat\omega_{\mathrm{cen}}=\widehat\alpha/(2\pi\Delta)$, the nominal
centroid; and $\widehat\omega_{\mathrm{grid}}$, the local maximum of
$\widehat\PSD_X$ nearest the true peak, found numerically on a grid of $2001$
points over $(0,1/(2\Delta)]$. We also compute $\widehat W_{1/2}$ from \eqref{eq:width} and $\widehat{AI}_{1/2}$ from \eqref{eq:AIclosed}.

Table~\ref{tab:sim_freq} reports bias and root mean squared error (RMSE) for the three
estimators of $\omega_{\max}$. For the three sharply concentrated classes the
estimators are indistinguishable: at $n=1000$ without measurement error the RMSE
of all three lies between $0.0012$ and $0.0027$, and the bias of the centroid,
$0.0004$ for M1 and $0.0001$ for $\tau_1$ of M2, agrees with the displacements
of $0.0003$ and $0.0001$ in Table~\ref{tab:sim_models}. In this regime the
nominal centroid is an adequate estimator of $\omega_{\max}$, as it is usually
taken to be.

For the broad class $\tau_2$ of M2 the centroid carries a bias of $0.0137$ at
$n=1000$, against a displacement of $0.0137$ predicted by
Proposition~\ref{prop:peak}, and this bias does not diminish with $n$: it is
$0.0141$ at $n=100$ and $0.0137$ at $n=1000$, while the closed form has biases
of $0.0011$ and $-0.0001$ respectively. At $n=1000$ the centroid bias is three
and a half times the standard deviation of the closed-form estimator, and its
RMSE is three and a half times larger. The same agreement holds across the
design: in all sixteen cells, spanning four classes, two sample sizes and two
noise levels, the empirical bias of the centroid differs from the predicted
displacement by at most $0.0021$, against a displacement of $0.0137$ for
$\tau_2$ of M2.

The numerical maximization fails for that same class, and for a different reason.
Its RMSE is $0.0322$ at $n=1000$ and $0.0816$ at $n=100$, an order of magnitude
worse than either analytical estimator, with a standard deviation of $0.0301$.
The maximum of the total spectral density is not the maximum of the class
component, and when the component is broad the two are far apart and the
local-maximum detection becomes unstable. This is precisely the situation in which
a closed form for the component is needed, since no search on $\widehat\PSD_X$
recovers $\omega_{\max}$ for the class.

For M3 the centroid bias of $0.0016$ at $n=1000$ decomposes as the closed-form
bias $0.0004$, which is finite-sample estimation error, plus the predicted
displacement $0.0012$.

\begin{table}[t]
\centering
\caption{Bias and RMSE of the three estimators of $\omega_{\max}$, over $1000$
replications, with the order selected by BIC.}
\label{tab:sim_freq}
\begin{tabular}{llrrrrrrr}
\hline
& & & \multicolumn{2}{c}{$\widehat\omega_{\max}$}
& \multicolumn{2}{c}{$\widehat\omega_{\mathrm{cen}}$}
& \multicolumn{2}{c}{$\widehat\omega_{\mathrm{grid}}$} \\
Class & $n$ & nsr & Bias & RMSE & Bias & RMSE & Bias & RMSE \\
\hline
M1, $\tau_1$ & 100  & 0    & -0.0003 & 0.0041 & 0.0001 & 0.0040 & -0.0003 & 0.0041 \\
M2, $\tau_1$ & 100  & 0    & -0.0003 & 0.0034 & -0.0002 & 0.0035 & -0.0003 & 0.0034 \\
M2, $\tau_2$ & 100  & 0    & 0.0011 & 0.0079 & 0.0141 & 0.0147 & -0.0360 & 0.0814 \\
M3, $\tau_1$ & 100  & 0    & 0.0000 & 0.0043 & 0.0010 & 0.0044 & 0.0000 & 0.0043 \\
\hline
M1, $\tau_1$ & 1000 & 0    & 0.0001 & 0.0016 & 0.0004 & 0.0017 & 0.0001 & 0.0016 \\
M2, $\tau_1$ & 1000 & 0    & -0.0000 & 0.0012 & 0.0001 & 0.0012 & -0.0000 & 0.0012 \\
M2, $\tau_2$ & 1000 & 0    & -0.0001 & 0.0040 & 0.0136 & 0.0140 & -0.0114 & 0.0322 \\
M3, $\tau_1$ & 1000 & 0    & 0.0004 & 0.0022 & 0.0016 & 0.0027 & 0.0004 & 0.0022 \\
\hline
M1, $\tau_1$ & 100  & 0.20 & 0.0000 & 0.0042 & 0.0004 & 0.0042 & 0.0000 & 0.0042 \\
M2, $\tau_1$ & 100  & 0.20 & 0.0005 & 0.0037 & 0.0007 & 0.0038 & 0.0004 & 0.0037 \\
M2, $\tau_2$ & 100  & 0.20 & 0.0137 & 0.0142 & 0.0155 & 0.0160 & 0.0136 & 0.0142 \\
M3, $\tau_1$ & 100  & 0.20 & -0.0008 & 0.0041 & 0.0000 & 0.0039 & -0.0008 & 0.0041 \\
\hline
M1, $\tau_1$ & 1000 & 0.20 & 0.0000 & 0.0020 & 0.0003 & 0.0020 & 0.0000 & 0.0020 \\
M2, $\tau_1$ & 1000 & 0.20 & -0.0004 & 0.0016 & -0.0003 & 0.0017 & -0.0004 & 0.0016 \\
M2, $\tau_2$ & 1000 & 0.20 & 0.0030 & 0.0058 & 0.0136 & 0.0142 & -0.0020 & 0.0182 \\
M3, $\tau_1$ & 1000 & 0.20 & -0.0018 & 0.0034 & -0.0009 & 0.0032 & -0.0018 & 0.0034 \\
\hline
\end{tabular}
\end{table}

Figure~\ref{fig:sim_box} shows the corresponding sampling distributions. For the
three sharply concentrated classes the three estimators overlap almost exactly,
over a range of at most $0.008$; for $\tau_2$ of M2 the centroid is displaced
from the true value by more than the whole interquartile range of the closed
form, and the numerical maximization is both biased and far more dispersed.

\begin{figure}[b]
\centering
\includegraphics[width=14cm]{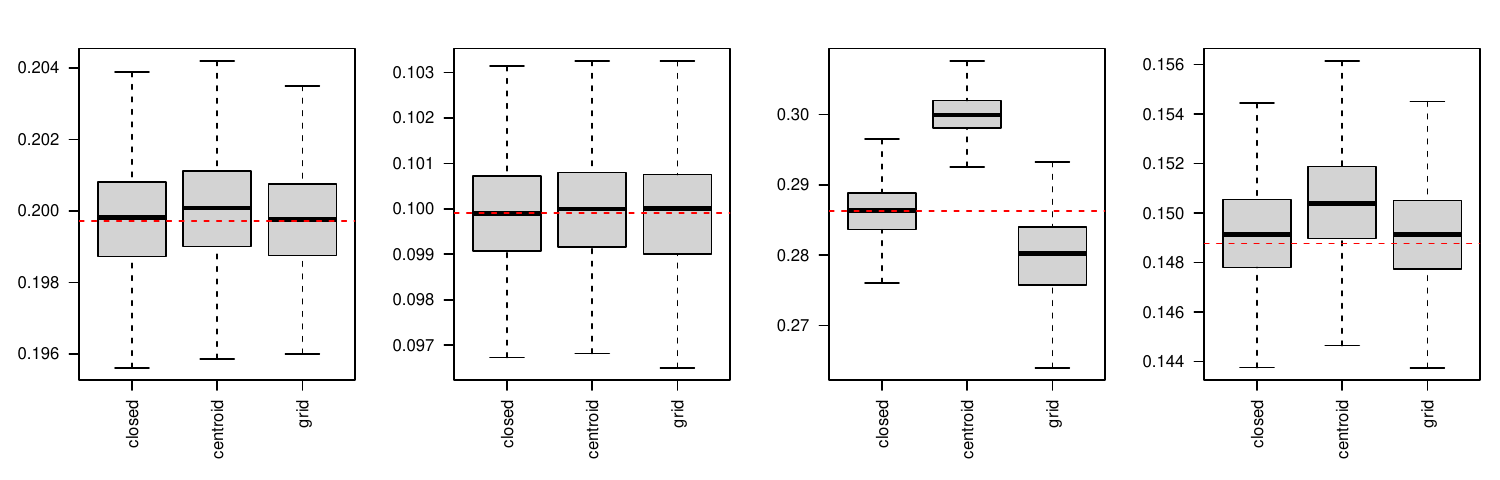}
\caption{Sampling distributions of the three estimators of $\omega_{\max}$ over
$1000$ replications at $n=1000$ without measurement error, with the order
selected by BIC. The dashed line marks the true $\omega_{\max}$ of
Table~\ref{tab:sim_models}. Note the differing vertical scales.}
\label{fig:sim_box}
\end{figure}

We turn to the width and the asymmetry. Table~\ref{tab:sim_shape} reports the
same summaries for $W_{1/2}$ and $AI_{1/2}$. Neither has a counterpart among the
quantities ordinarily read off a fitted root, so the closed forms are the only
estimators available.

At $n=1000$ without measurement error both are recovered with small bias. The
width has a relative bias of at most $2.7\%$ of its true value and a relative
standard deviation (SD) between $8.8\%$ and $16.6\%$. The asymmetry index, which by
Proposition~\ref{prop:AIclosed} is a ratio of two quantities that both vanish as
the level approaches the maximum, is nonetheless estimable: its relative standard
deviation is $7.3\%$ for $\tau_2$ of M2, where $AI_{1/2}=0.2247$, and $17.9\%$
for M3, where $AI_{1/2}=0.0731$, against $13.2\%$ for M1, where the index is an
order of magnitude smaller.

At $n=100$ the width bias reaches $15\%$ of its true value for M3 without
measurement error, and $69\%$ for $\tau_2$ of M2 with it, so the descriptors
should not be read at that sample size.

\begin{table}[t]
\centering
\caption{Estimation of the width and the asymmetry index at $k=1/2$, over $1000$
replications, with the order selected by BIC.}
\label{tab:sim_shape}
\begin{tabular}{llrrrrrr}
\hline
& & & \multicolumn{2}{c}{$W_{1/2}$} & \multicolumn{2}{c}{$AI_{1/2}$} \\
\multicolumn{1}{c}{Class} & \multicolumn{1}{c}{$n$} & \multicolumn{1}{c}{nsr} & \multicolumn{1}{c}{Bias} & \multicolumn{1}{c}{SD} & \multicolumn{1}{c}{Bias} & \multicolumn{1}{c}{SD} \\
\hline
M1, $\tau_1$ & 100  & 0    & 0.0023 & 0.0127 & 0.0011 & 0.0085 \\
M2, $\tau_1$ & 100  & 0    & 0.0019 & 0.0090 & -0.0001 & 0.0143 \\
M2, $\tau_2$ & 100  & 0    & -0.0056 & 0.0293 & -0.0126 & 0.0462 \\
M3, $\tau_1$ & 100  & 0    & -0.0051 & 0.0115 & -0.0162 & 0.0280 \\
\hline
M1, $\tau_1$ & 100  & 0.20 & 0.0019 & 0.0157 & 0.0006 & 0.0124 \\
M2, $\tau_1$ & 100  & 0.20 & 0.0027 & 0.0119 & 0.0107 & 0.0395 \\
M2, $\tau_2$ & 100  & 0.20 & -0.0831 & 0.0193 & -0.1482 & 0.0362 \\
M3, $\tau_1$ & 100  & 0.20 & 0.0003 & 0.0167 & -0.0322 & 0.0193 \\
\hline
M1, $\tau_1$ & 1000 & 0    & 0.0003 & 0.0039 & 0.0002 & 0.0022 \\
M2, $\tau_1$ & 1000 & 0    & 0.0004 & 0.0026 & 0.0003 & 0.0021 \\
M2, $\tau_2$ & 1000 & 0    & -0.0001 & 0.0106 & -0.0005 & 0.0165 \\
M3, $\tau_1$ & 1000 & 0    & -0.0009 & 0.0056 & 0.0000 & 0.0131 \\
\hline
M1, $\tau_1$ & 1000 & 0.20 & 0.0001 & 0.0045 & 0.0000 & 0.0031 \\
M2, $\tau_1$ & 1000 & 0.20 & 0.0012 & 0.0035 & -0.0061 & 0.0118 \\
M2, $\tau_2$ & 1000 & 0.20 & -0.0166 & 0.0157 & -0.0254 & 0.0243 \\
M3, $\tau_1$ & 1000 & 0.20 & -0.0005 & 0.0052 & -0.0237 & 0.0183 \\
\hline
\end{tabular}
\end{table}

The results above are conditional on the true class being identified; we now
report how often that happens. Table~\ref{tab:sim_sel} gives the proportion of
replications in which the true order was selected and in which each true class
was matched. Without measurement error BIC recovers the order in $99.5\%$ of
replications for M1 and $99.4\%$ for M2 at $n=1000$, but in only $76.0\%$ for M3,
and AIC is markedly worse throughout, between $41\%$ and $65\%$. The difficulty
for M3 is the moving-average part: the ARMA$(3,2)$ likelihood is nearly flat in
directions that trade $\theta$ against $\phi$, and BIC's penalty favours the more
parsimonious autoregressive alternatives. With measurement error this becomes
acute, the true order being selected in $1.5\%$ of replications at $n=1000$,
although the complex class is still matched in $97.6\%$ of them.

Matching is essentially perfect at $n=1000$ without measurement error, at $1.000$,
$0.988$ and $0.998$ for the three models. Measurement error affects $\tau_2$ of
M2 most: the match rate falls to $0.622$, and at $n=100$ to $0.408$, at which
point the closed-form estimator of $\omega_{\max}$ acquires a bias of $0.0137$,
indistinguishable from the displacement itself, and the width and asymmetry
estimates break down entirely, with biases of $-0.083$ and $-0.148$. The flat
term \eqref{PSD_me} raises the spectral floor without altering any descriptor,
but it reduces the prominence of a weakly concentrated component to the point
where the corresponding root is no longer identified.

\begin{table}[t]
\centering
\caption{Proportion of replications in which the true order was selected, and in
which each true class was matched, over $1000$ replications.}
\label{tab:sim_sel}
\begin{tabular}{llrrrrrr}
\hline
& & \multicolumn{2}{c}{selection, BIC}
& \multicolumn{2}{c}{selection, AIC}
& \multicolumn{2}{c}{matched, BIC} \\
\multicolumn{1}{c}{Model}
& \multicolumn{1}{c}{nsr}
& \multicolumn{1}{c}{$n=100$}
& \multicolumn{1}{c}{$n=1000$}
& \multicolumn{1}{c}{$n=100$}
& \multicolumn{1}{c}{$n=1000$}
& \multicolumn{1}{c}{$n=100$}
& \multicolumn{1}{c}{$n=1000$} \\
\hline
M1 & 0    & 0.950 & 0.995 & 0.523 & 0.412 & 0.820 & 1.000 \\
M2 & 0    & 0.829 & 0.994 & 0.528 & 0.654 & 0.672 & 0.988 \\
M3 & 0    & 0.083 & 0.760 & 0.162 & 0.565 & 0.671 & 0.998 \\
M1 & 0.20 & 0.938 & 0.992 & 0.612 & 0.642 & 0.707 & 0.999 \\
M2 & 0.20 & 0.072 & 0.592 & 0.218 & 0.644 & 0.408 & 0.622 \\
M3 & 0.20 & 0.001 & 0.015 & 0.045 & 0.194 & 0.658 & 0.976 \\
\hline
\end{tabular}
\end{table}

The estimators considered so far are all derived from a fitted ARMA model. We now
compare them with the two-step procedure they are intended to replace: a Welch
estimate of the spectral density followed by a least-squares fit of a sum of
Lorentzians. For each simulated series the Welch estimator
\eqref{Welch_periodogram} is computed with Hann-windowed segments overlapping by
one half, and a sum of $n_C$ Lorentzians plus a constant floor is fitted to it on
the logarithmic scale, with starting centroids at the highest local maxima of the
estimate. Two choices favour the competitor: the number of Lorentzians is fixed
at the true number of complex classes, so it is given the model order, whereas
the ARMA route must select $p$ and $q$ by BIC; and a fitted component is matched
to a true class when its centroid lies within $0.02$ of $\omega_{\max}$, a more
generous tolerance than the one used for the ARMA classes, since the Welch
resolution is $1/L$. Both routes are applied to the same series, so the
comparison is paired. Results are reported at $n=1000$ without measurement error
for segment lengths $L=124$ and $L=250$, averaging $K=15$ and $K=7$ segments
respectively; at $n=100$ the resolution would be coarser than the width of the
sharply concentrated classes, so the comparison there would measure the
resolution rather than the method.

Table~\ref{tab:sim_comp} reports the outcome. The closed form of
Proposition~\ref{prop:peak} estimates the peak frequency with a root mean squared
error between $1.7$ and $6.9$ times smaller than the Lorentzian centroid, and the
ordering is unchanged when the segment length doubles. On the same series the
closed form is closer to the true $\omega_{\max}$ than the centroid in $90.5\%$
of replications for M1, $99.8\%$ for M3, and $70.1\%$ and $73.4\%$ for the two
classes of M2. The margin is largest for M3, where the centroid carries a bias of
$-0.0137$, an order of magnitude larger than in any other configuration: the
Lorentzian model has no term for the broadband contribution of the real class, so
that contribution is absorbed into the constant floor and displaces the fitted
centroid. Since broadband structure is the rule rather than the exception in
applications, this is the configuration most representative of practice.

The comparison of the widths is subject to the caveat that the two quantities
target slightly different objects, the Lorentzian width being that of a function
fitted to $\PSD_Y$ while $W_{1/2}$ is that of the class component of $\PSD_X$.
Even so, the gap is larger than for the peak frequency, with ratios between $1.1$
and $6.6$. For the broad class $\tau_2$ of M2 the fitted full width at half
maximum underestimates $W_{1/2}$ by $0.057$, roughly half its true value of
$0.120$, the Lorentzian shape being unable to reproduce a broad component whose
wings are partly absorbed by the floor. One fitted width out of $377$ at $L=250$
diverged and was discarded.

A further difference does not appear in the error columns, which are conditional
on a component being found at all. The ARMA route matched the true class in
$97.8\%$ of replications or better in every configuration, whereas the two-step
route matched $\tau_2$ of M2 in only $57.1\%$ of replications at $L=124$ and
$37.7\%$ at $L=250$, and M1 in about $84\%$. Longer segments give finer
resolution, from $0.0081$ to $0.0040$, but halve the number averaged, from $15$
to $7$, so the estimate is noisier. For a component whose width is $0.120$, thirty
times the coarser resolution, the binding constraint is variance rather than
resolution, and the match rate falls accordingly. The errors reported for that
class are therefore computed over the minority of replications in which the
two-step route located it.

\begin{table}[t]
\centering
\caption{Comparison with the two-step procedure at $n=1000$ without measurement
error, over $1000$ replications, for two Welch segment lengths. The closed form
of Proposition~\ref{prop:peak} and the width \eqref{eq:width} are computed from
the ARMA fit with the order selected by BIC; the centroid and full width at half
maximum are obtained by fitting $n_C$ Lorentzians to the Welch estimate. Error
summaries are conditional on the class being matched.}
\label{tab:sim_comp}
\begin{tabular}{llrrrrrrrr}
\hline
& & \multicolumn{2}{c}{matched} & \multicolumn{3}{c}{$\omega_{\max}$}
  & \multicolumn{3}{c}{$W_{1/2}$} \\
\multicolumn{1}{c}{Class} & \multicolumn{1}{c}{$L$} & \multicolumn{1}{c}{ARMA} & \multicolumn{1}{c}{Welch} & \multicolumn{1}{c}{RMSE$_{\mathrm{A}}$} & \multicolumn{1}{c}{RMSE$_{\mathrm{W}}$} & \multicolumn{1}{c}{ratio}
  & \multicolumn{1}{c}{RMSE$_{\mathrm{A}}$} & \multicolumn{1}{c}{RMSE$_{\mathrm{W}}$} & \multicolumn{1}{c}{ratio} \\
\hline
M1, $\tau_1$ & 124 & 1.000 & 0.839 & 0.00171 & 0.00502 & 2.94 & 0.00370 & 0.01217 & 3.29 \\
M2, $\tau_1$ & 124 & 1.000 & 1.000 & 0.00116 & 0.00200 & 1.72 & 0.00269 & 0.01182 & 4.39 \\
M2, $\tau_2$ & 124 & 0.978 & 0.571 & 0.00439 & 0.00896 & 2.04 & 0.01085 & 0.05895 & 5.43 \\
M3, $\tau_1$ & 124 & 1.000 & 0.984 & 0.00202 & 0.01395 & 6.91 & 0.00537 & 0.00597 & 1.11 \\
\hline
M1, $\tau_1$ & 250 & 1.000 & 0.834 & 0.00171 & 0.00510 & 2.99 & 0.00370 & 0.01141 & 3.09 \\
M2, $\tau_1$ & 250 & 1.000 & 1.000 & 0.00116 & 0.00201 & 1.73 & 0.00269 & 0.01773 & 6.59 \\
M2, $\tau_2$ & 250 & 0.978 & 0.377 & 0.00439 & 0.00944 & 2.15 & 0.01085 & 0.06950 & 6.41 \\
M3, $\tau_1$ & 250 & 1.000 & 0.970 & 0.00202 & 0.01313 & 6.51 & 0.00537 & 0.01415 & 2.63 \\
\hline
\end{tabular}
\end{table}

Taken together, the results above support four conclusions. The closed form of
Proposition~\ref{prop:peak} recovers the peak location without bias in every
configuration in which the underlying root is identified. The nominal centroid is
adequate when the component is sharply concentrated, which is the usual situation
in the astronomical literature, but when it is not, it carries a bias equal to
the displacement the proposition predicts: for $\tau_2$ of M2 the empirical bias
is $0.01365$ at $n=1000$, against a predicted $0.01372$, and it does not diminish
with the sample size. A numerical search on the fitted spectral density is not a
substitute, since for a broad component the maximum of the total density is not
the maximum of the class component, and its RMSE exceeds that of the closed form
by an order of magnitude. Finally, the width and the asymmetry index, which have no counterpart among the
quantities read off a fitted root, are recovered with small bias and useful
precision at $n=1000$ whenever the class is identified.

Lastly, both closed forms outperform the two-step procedure of estimating the
spectrum nonparametrically and fitting Lorentzians to it, by a factor between
$1.7$ and $6.9$ in root mean squared error for the peak frequency and by more for
the width, even when that procedure is given the correct number of components and
a more generous matching tolerance.

\section{Application}\label{application}

In this section, we apply the PSD decomposition for ARMA with measurement error
models introduced in Section \ref{decomp_PSD_ARMA} by fitting these models
to astronomical time series data. Specifically, we study the PSD of a light
curve from the X-ray binary XTE 1550-564, observed by the Rossi X-ray Timing
Explorer (RXTE). This light curve was previously analyzed by \cite{Kelly2014}
and was selected because it is regularly sampled at $\Delta t = 1/128$ seconds
and exhibits a complex, well-characterized PSD.

We analyze a segment of approximately $7.8047$ seconds, containing $n = 1000$
observations, from $7.812$ seconds to $15.617$ seconds. Since X-ray binary light
curves are known to follow a lognormal distribution \citep{Uttley2005}, we
apply a logarithmic transformation to obtain data that are Gaussian. It is well established that this source is affected by a measurement error
\citep{Kelly2014}. Therefore, its variance was estimated following \cite{Vaughan2003}, yielding
$\sigma^2_\eta = 0.0107$, which is subsequently incorporated into the modeling framework.

The transformed light curve and its Welch periodogram are shown in
Figure~\ref{Figure:DATA_APPLI}. The Welch estimator in \eqref{Welch_periodogram}
was computed using a Hann window with segments of length $L = 500$ and
50\% overlap ($D = 250$), yielding $K = 3$ segments for the $n = 1000$
observations. The periodogram reveals a dominant
QPO at approximately $2.816$ Hz, consistent
with the strong QPO reported by \cite{Kelly2014} for this source. A second,
weaker QPO is also reported by \cite{Kelly2014} at approximately $6$ Hz;
in our segment this feature appears at $6.144$ Hz, though it is less
prominent, possibly due to the shorter length of the analyzed segment.
The periodogram also shows a flattening at high frequencies attributable
to measurement noise.

\begin{figure}[!b]
\centering
\includegraphics[width=13.5cm]{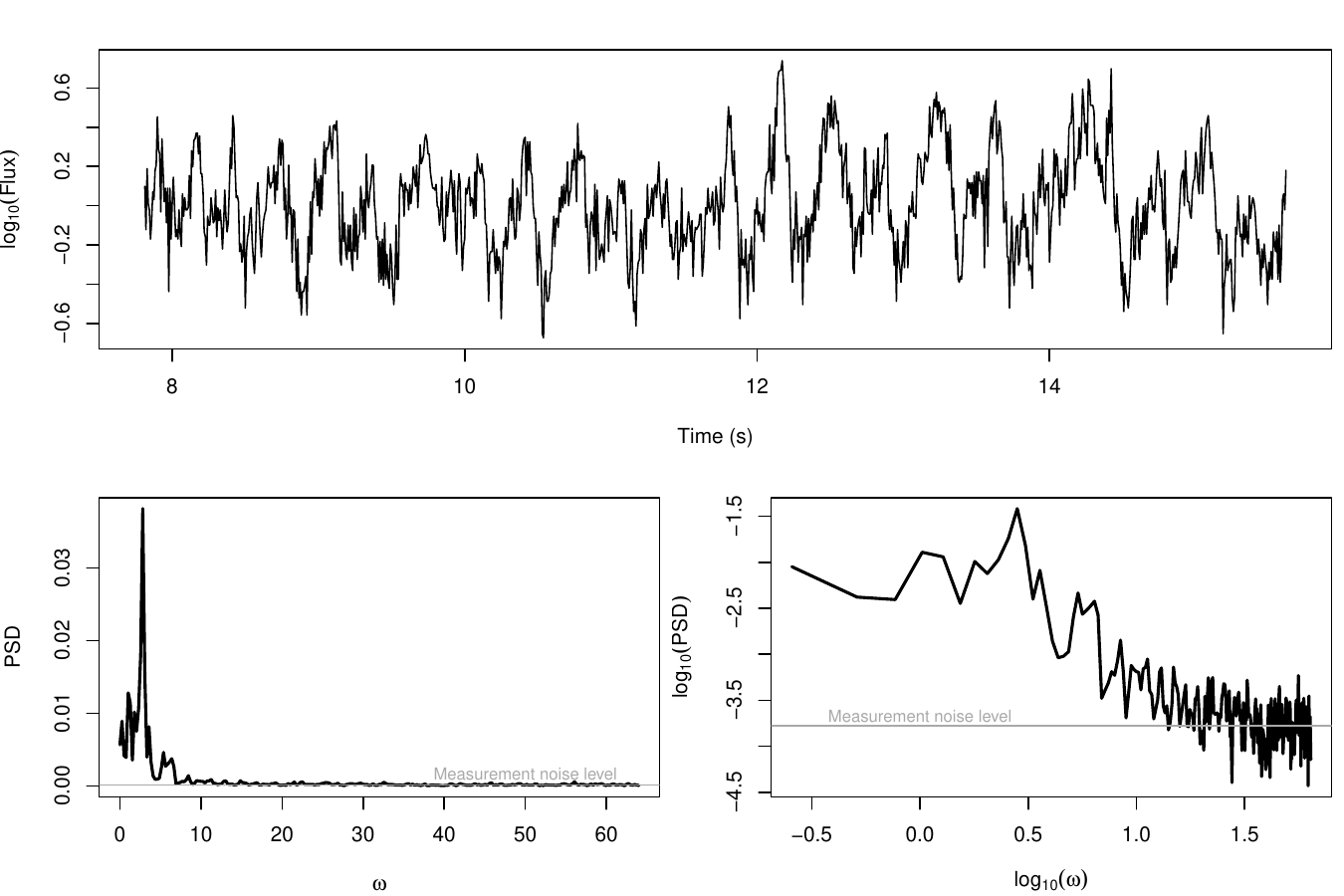}
\caption{RXTE light curve of the X-ray binary XTE~1550-564, comprising
$n=1000$ observations (upper panel). The Welch periodogram defined in \eqref{Welch_periodogram} of the light
curve is shown in linear scale and in log-log scale in the lower panels,
where the horizontal grey line indicates the measurement noise level.}
\label{Figure:DATA_APPLI}
\end{figure}

We fit ARMA$(p,q)$ models with measurement error to the logarithmic light curve of XTE J1550-564, estimating the parameters $\phi_1,\ldots,\phi_p$, $\theta_1,\ldots,\theta_q$, and $\sigma_Z^2$ ($\sigma_\eta^2$ is known) via a state space representation combined with the Kalman filter (see Section~\ref{statespace_estimation}), which enables likelihood-based inference in the presence of measurement error. We consider models with orders $p,q \in \{0,1,\ldots,6\}$, allowing additional flexibility to accommodate a broadband noise component or other spectral structure if needed.

To evaluate the best fitting model, we use AIC and BIC. Under the AIC, the three models with the lowest values are the ARMA$(2,1)$, ARMA$(3,1)$, and ARMA$(3,2)$ specifications, with AIC values of $-988.83$, $-986.91$, and $-984.68$, respectively. Under the BIC, the three models with the lowest values are the ARMA$(2,1)$, ARMA$(3,1)$, and ARMA$(2,0)$, with BIC values of $-964.30$, $-957.47$, and $-955.20$, respectively. However, inspection of the estimated spectral density reveals that neither the top AIC- nor BIC-ranked models succeed in reproducing the dominant QPO at approximately $2.816$~Hz. The first model that successfully captures this spectral feature is the ARMA$(3,2)$ specification, which ranks third under the AIC and seventh under the BIC, while remaining competitive in terms of both criteria. The ARMA$(3,2)$ model is therefore selected as the preferred specification, as it provides the most parsimonious representation that adequately captures the dominant QPO of the source.

Table~\ref{tab:summary_fit_ARMA} summarizes the parameter estimates of the fitted ARMA$(3,2)$ model, along with their standard deviations and $95\%$ confidence intervals (CI) computed via the parametric bootstrap procedure described in Supplementary Material S.2, based on $B = 1000$ replications. Notably, the $95\%$ CI for all parameters $\phi_1, \phi_2, \phi_3,
\theta_1$, and $\theta_2$ exclude zero, providing evidence that all estimates are statistically significant at the $5\%$ level and offering empirical support for the selected ARMA$(3,2)$ specification.

\begin{table}[b]
\centering
\caption{Parameters estimated fitting an ARMA($3,2$) model to the RXTE light curve of the X-ray binary XTE 1550-564.}
\label{tab:summary_fit_ARMA}
\begin{tabular}{@{}crrc@{}}
\hline
\multicolumn{1}{c}{Parameter} & \multicolumn{1}{c}{Estimate}
& \multicolumn{1}{c}{SD} & \multicolumn{1}{c}{CI} \\
\hline
$\phi_1$      &  2.8535 & 0.0156 & $[2.8188, 2.8915]$ \\
$\phi_2$      & -2.7304 & 0.0302 & $[-2.8043,-2.6636]$ \\
$\phi_3$      &  0.8748 & 0.0149 & $[0.8424, 0.9111]$ \\
$\theta_1$    & -1.9303 & 0.0225 & $[-1.9785, -1.8855]$ \\
$\theta_2$    &  0.9486 & 0.0230 & $[0.8999, 0.9971]$ \\
$\sigma^2_Z$  &  0.0077 & ---    & --- \\ \hline
\end{tabular}
\end{table}

\begin{figure}[htb!]
\centering
\includegraphics[width=14cm]{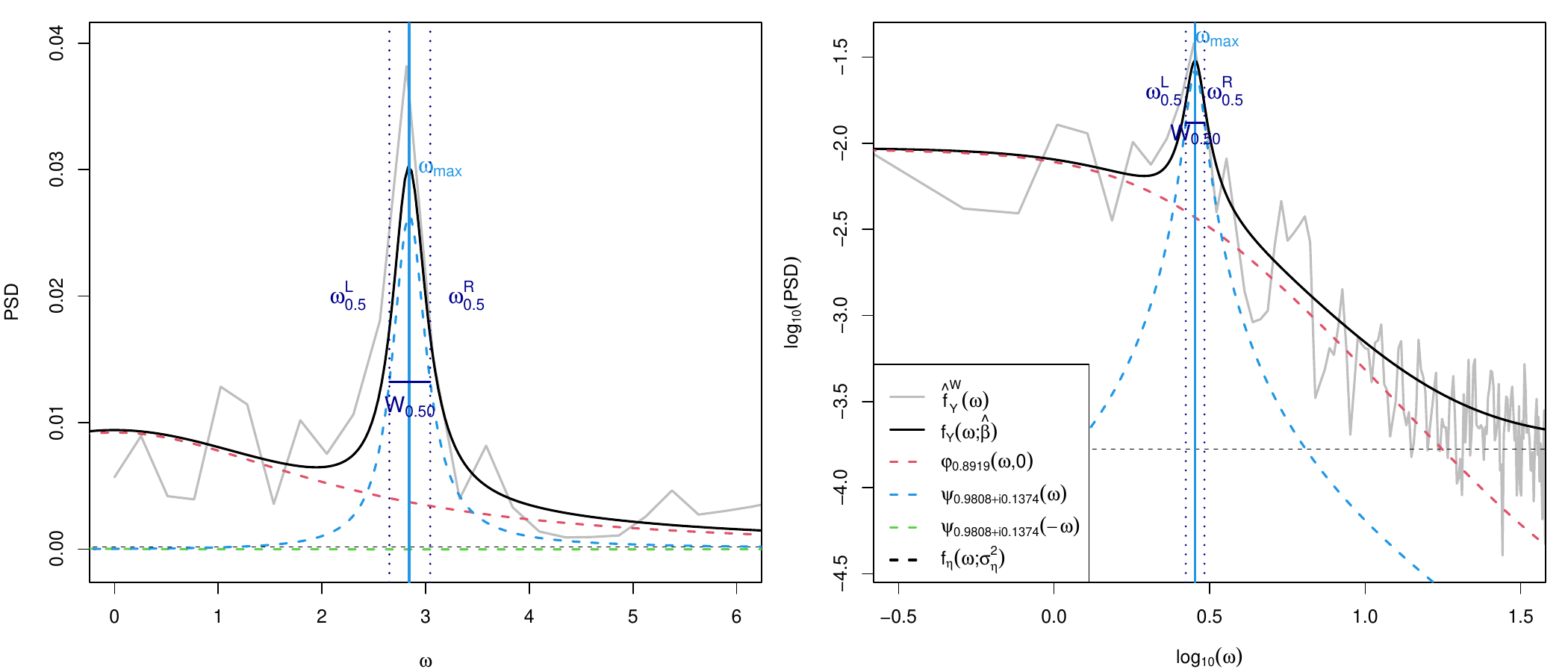}
\caption{Estimated PSD decomposition for the RXTE light curve of the X-ray binary XTE 1550$-$564 under the ARMA$(3,2)$ model. Left: PSD in the frequency domain; right: same quantities on a log-log scale. The gray line is the Welch periodogram $\hat f_Y^{W}(\omega)$ of \eqref{Welch_periodogram}; the solid black line is the fitted ARMA spectral density $f_Y(\omega;\widehat{\vect{\beta}})$; the red dashed line is the real-root component $\varphi_{0.8919}(\omega,0)$; the blue and green dashed lines are the terms $\psi_{0.9808+i0.1374}(\omega)$ and $\psi_{0.9808+i0.1374}(-\omega)$; and the black dashed line is the measurement error spectral density $f_\eta(\omega;\sigma_\eta^2)$ of \eqref{PSD_me}. The solid blue vertical line marks the estimated QPO peak frequency $\widehat{\omega}_{\max} = 2.8417$~Hz, and the dotted navy lines indicate the estimated half-power limits $\widehat{\omega}_{0.5}^{L} = 2.6511$~Hz and $\widehat{\omega}_{0.5}^{R} = 3.0445$~Hz, giving an estimated bandwidth $\widehat{W}_{0.5} = 0.1906$~Hz.}
\label{aux12}
\end{figure}

Figure~\ref{aux12} shows the PSD component decomposition of the fitted ARMA$(3,2)$ model. The autoregressive polynomial $\widehat{\phi}(z^{-1}) = 1 - 2.8535\,z^{-1} + 2.7304\,z^{-2} - 0.8748\,z^{-3}$ has three roots: one real root corresponding to the class $\widehat{\tau}_1 = \{0.8919\}$ and a complex-conjugate pair at $\widehat{\tau}_2 = \{0.9808 \pm i0.1374\}$. Following the decomposition in \eqref{PSD_decomp4}, this yields three spectral components: the real root contributes a broadband component $\varphi_{0.8919}(\omega,0)$ centered at zero frequency, while the complex-conjugate pair gives rise to the pair of components $\psi_{0.9808+i0.1374}(\pm\omega)$, which capture the structure associated with the QPO. The peak frequency of the dominant term $\psi_{0.9808+i0.1374}(\omega)$ is $\widehat{\omega}_{\max} = 2.8417$~Hz, which closely matches the peak of the Welch periodogram at $\omega_{\text{periodogram}} = 2.816$~Hz, consistent with the strong QPO reported by \cite{Kelly2014} for this source, confirming that the ARMA$(3,2)$ model localizes the dominant QPO through its complex root pair.

The same figure also reports the characterization of the dominant QPO derived from the fitted
spectral density. The estimated peak frequency is $\widehat{\omega}_{\max} = 2.8417$~Hz, and the
half-power limits are $\widehat\omega_{0.5}^{L} = 2.6511$~Hz and
$\widehat\omega_{0.5}^{R} = 3.0445$~Hz, yielding a full width at half maximum
$\widehat W_{0.5} = 0.1906$~Hz, evaluated at $k = 0.5$ of the peak amplitude. The corresponding
asymmetry coefficient is $AI_{0.5}=-0.031$, indicating a slightly asymmetric peak in which the right
half-width $\widehat W_{0.5}^{R}$ is marginally larger than the left half-width
$\widehat W_{0.5}^{L}$, so that the component decays somewhat more gradually on the
high-frequency side of $\widehat\omega_{\max}$ than on the low-frequency side. To assess the
uncertainty of these estimates, 95\% bootstrap confidence intervals were computed for
$\omega_{\max}$ and $W_{0.5}$, yielding $[2.6587, 3.1367]$~Hz and $[0.1664, 1.3397]$~Hz,
respectively. The interval for the width is markedly wider than that for the peak frequency,
spanning almost an order of magnitude. This reflects the position of the fitted
class: with $\widehat s_2 = 0.9904$ the root lies close to the unit circle, where
$W_{0.5}$ in \eqref{eq:width} varies rapidly with the modulus, so moderate
variation in $\widehat s_2$ across bootstrap replications translates into large
variation in the estimated width. The peak frequency, which depends chiefly on
the argument, is far better determined.

\begin{table}[b]
\centering
\caption{Variance contribution of each class of the fitted ARMA$(3,2)$ model for the RXTE
light curve of the X-ray binary XTE 1550$-$564.}
\label{tab:variance_contrib_apply}
\begin{tabular}{@{}ccc@{}}
\hline
        \multicolumn{1}{c}{Class/Measurement error} & \multicolumn{1}{c}{Variance contribution} & \multicolumn{1}{c}{Percentage (\%)} \\
        \hline
        $\{0.8919\}$            & $0.0337$ & $55.4745$ \\
         $\{0.9808 \pm i0.1374\}$ & $0.0164$ & $26.8443$ \\
         $\sigma^2_{\eta}$ & $0.0107$ & $17.6812$ \\
        Total & $0.0608$ & $100$ \\
        \hline
\end{tabular}
\end{table}

The variance decomposition of the RXTE light curve of XTE 1550$-$564, obtained from the fitted
ARMA$(3,2)$ model, is summarized in Table~\ref{tab:variance_contrib_apply}. The total estimated
variance is $0.0608$, of which the real root class $\{0.8919\}$ accounts for $0.0337$
($55.47\%$), reflecting the broadband, low-frequency variability of the source; the
complex-conjugate pair $\{0.9808 \pm i0.1374\}$ accounts for $0.0164$ ($26.84\%$), capturing the
variance concentrated around the dominant QPO at $2.8417$~Hz; and the measurement error
component $\sigma^2_\eta$ accounts for the remaining $0.0107$ ($17.68\%$). This decomposition shows that broadband aperiodic variability, captured by the real root
component, is the dominant contributor to the total variance, exceeding the contribution of
the QPO component itself, which nonetheless remains a non-negligible source of variability in
the analyzed segment.

Figure~\ref{Figure:resid_APPLI} presents residual diagnostics for the fitted ARMA$(3,2)$ model. The top panel shows that the residuals fluctuate randomly around zero with no apparent trend or systematic pattern, suggesting adequate model fit. The ACF of the residuals and of the squared residuals show no significant autocorrelation beyond the 95\% confidence bounds at any lag, indicating no evidence of remaining serial dependence or conditional heteroskedasticity. The histogram and the normal Q-Q plot are consistent with the normality assumption, the latter further supported by the Kolmogorov-Smirnov test ($D = 0.0261$, $p\text{-value} = 0.5038$). Finally, the Ljung-Box $p$-values remain well above the 5\% significance level for all lags $h = 1,\ldots,30$, providing no evidence of residual serial correlation. Taken together, these diagnostics support the adequacy of the fitted ARMA$(3,2)$ specification.

\begin{figure}[!b]
\centering
\includegraphics[width=10cm]{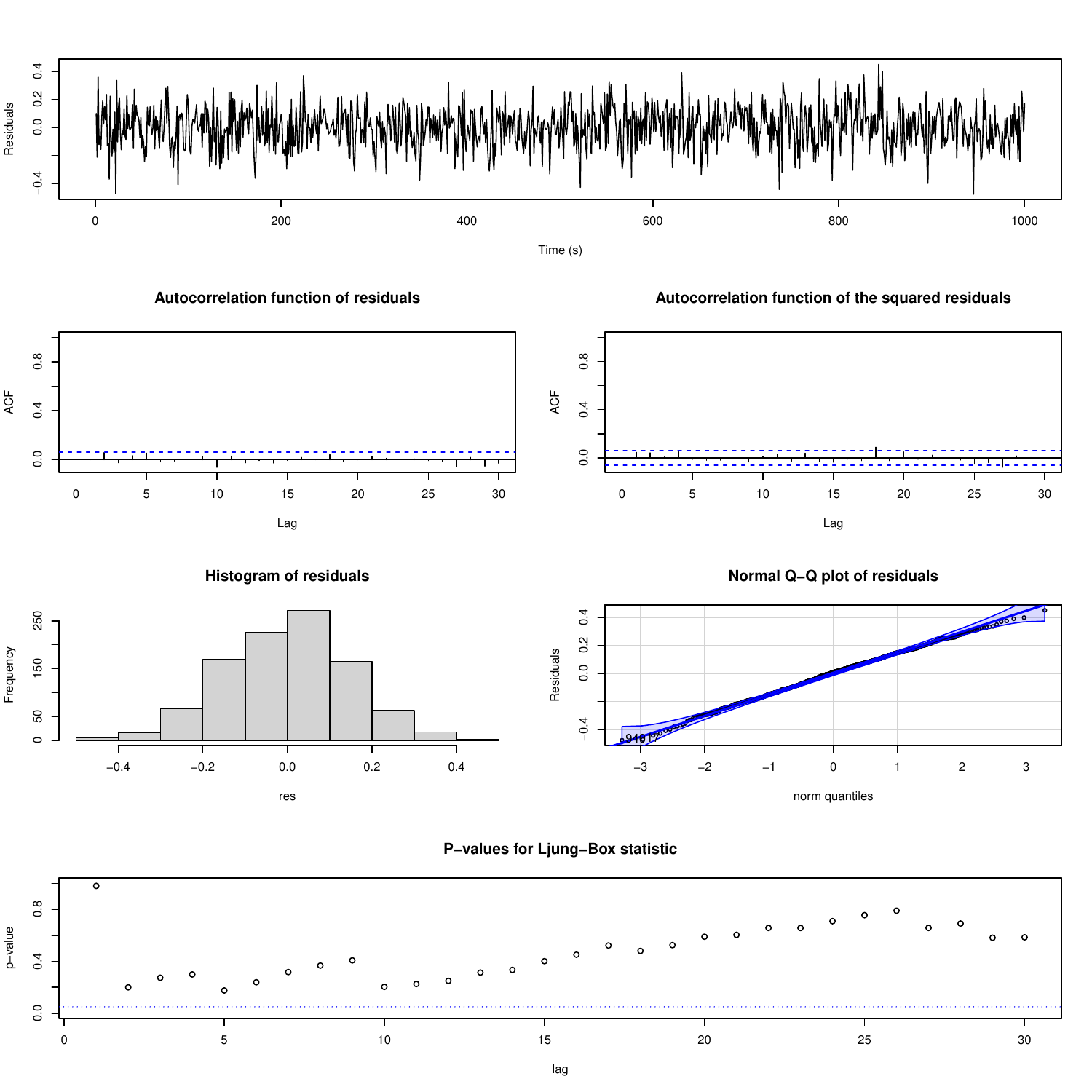}
\caption{Residual diagnostics for the fitted ARMA$(3,2)$ model applied to the RXTE light curve of XTE 1550-564. Top panel: time series of the
residuals. Middle panels: sample autocorrelation function (ACF) of the
residuals (left) and of the squared residuals (right), and histogram of
the residuals alongside the normal Q-Q plot. Bottom panel: p-values of
the Ljung-Box statistic for lags $h = 1,\ldots,30$. The blue dashed line marks the 5\% significance level. All diagnostics support the adequacy
of the fitted model.}
\label{Figure:resid_APPLI}
\end{figure}

\section{Conclusion}\label{conclusion}

This paper provides an exact decomposition of the spectral density of a
discrete-time ARMA process with measurement error into components indexed by the
equivalence classes of the autoregressive roots, together with closed-form
expressions for the location, width and asymmetry of the peak of each complex
class. The decomposition is the discrete-time counterpart of the Lorentzian
representation of a CARMA spectral density, and it extends that representation to
the ARMA processes that do not arise as sampled CARMA processes. Because the peak
descriptors are explicit functions of the model parameters, they are obtained by
substitution once the model is fitted, with no search over frequency and no
separate fit to a non-parametric spectral estimate.

The simulation study of Section~\ref{simulations} supports three conclusions.
First, the closed form of Proposition~\ref{prop:peak} recovers the peak location
without bias whenever the corresponding root is identified. Second, the nominal
centroid $\alpha/(2\pi\Delta)$, is an adequate approximation only when the component is sharply
concentrated: for a class of modulus $0.70$ it carries a bias of $0.0137$ that
does not diminish with the sample size, and across the whole design the empirical
bias of the centroid differs from the displacement predicted by
Proposition~\ref{prop:peak} by at most $0.0023$. Third, a numerical search on the
fitted spectral density is not a substitute, since the maximum of the total
density is not the maximum of an individual class component; for the same broad
class its root mean squared error exceeds that of the closed form by an order of
magnitude. The width and the asymmetry index, which have no counterpart among the
quantities read off a fitted root, are recovered with small bias and useful
precision at $n=1000$.

The application to the RXTE light curve of XTE 1550$-$564 illustrates the
framework on data. The fitted ARMA$(3,2)$ model, selected as the most parsimonious
specification reproducing the dominant oscillation, localises it at approximately
$2.8$~Hz through a single complex-conjugate root pair, and the variance
decomposition of Section~\ref{decomp_PSD_ARMA} attributes $55.5\%$ of the
total variance to the broadband component associated with the real class,
$26.8\%$ to the complex-conjugate class, and the remaining $17.7\%$ to
measurement noise. Residual diagnostics support the adequacy of the fit.

Several limitations remain, and they indicate the natural directions for further
work. The decomposition assumes the autoregressive roots to be simple; repeated
roots produce poles of higher order and components that are no longer weighted
Poisson kernels, and the weights $K(r)$ become numerically unstable as two roots
approach one another. Order selection is the main practical obstacle: for the
ARMA$(3,2)$ model of the simulation, BIC recovered the true order in $76.0\%$ of
replications without measurement error and in only $1.5\%$ with it, the
moving-average parameters being weakly identified relative to the penalty, though
the complex class was still matched in $97.6\%$ of the latter. At $n=100$ the
descriptors are too imprecise to be reported, so the framework is suited to the
long, regularly sampled series for which it was designed rather than to short
records. Uncertainty in the peak descriptors was assessed here by the parametric
bootstrap, which for the light curve gave a wide interval for the width; since
the descriptors are smooth functions of the model parameters away from repeated
roots, standard errors also follow by the delta method from the observed
information, and calibrating either route is left for future work.
Finally, the measurement error variance has been treated as known throughout,
estimated externally in the application \citep{Vaughan2003}; its joint estimation
with the ARMA parameters, and the identifiability of that joint problem, deserve
separate study, as does the behaviour of the method when the data-generating
process is not exactly ARMA.

\section*{Acknowledgments}
The authors would like to thank the anonymous referees, an Associate
Editor and the Editor for their constructive comments that improved the
quality of this paper.

\section*{Funding}
This work was funded by ANID/FONDECYT/INICIACI\'{O}N/N.11240804.

\section*{Supplementary material and code}
The supplement to this paper provides the full Kalman filter recursions used for
likelihood evaluation of the ARMA state-space model (S.1), the parametric bootstrap
procedure used to construct standard deviations and confidence intervals for the
estimated model parameters (S.2), and the R code implementing the proposed methodology,
including the simulation study and the application to the XTE 1550-564 light curve.
Code for replicating the results presented in this paper is available on GitHub:
\url{https://github.com/DarlinSoto/PSD_decomposition_ARMA}.

\appendix
\section{Proofs}

The Appendix contains the proofs of Propositions~\ref{prop:PSD_ARMA},
\ref{prop:peak}, \ref{prop:length} and \ref{prop:AIclosed}, and of
equation~\eqref{Pcomplex}, together with a level-set lemma on which the last
three proofs rest.

\subsection{Proof of Proposition \ref{prop:PSD_ARMA}}\label{proof_prop1}

For a causal ARMA($p,q$) process, \cite{Brockwell2001} showed that if the roots
$\mathcal{R} = \{r_1, \ldots, r_p\}$ of
$\phi(z^{-1}) = 1 - \phi_1 z^{-1} - \cdots - \phi_p z^{-p}$ are distinct, then the
autocovariance function can be written as
\begin{equation}\label{ACF_ARMA}
\gamma_X(h;\vect{\beta})=\sigma^2_Z \left\{ \frac{ \mathbb{I}_{[0, q-p]}(|h|)}{(q-p-|h|)!} \left[ \frac{d^{q-p-|h|}}{dz^{q-p-|h|}} \frac{z^{q-p} \theta(z) \theta(z^{-1})}{\phi(z) \phi(z^{-1})} \right]_{z=0} -\sum_{r \in \mathcal{R}}  \frac{r^{|h|+1} \theta(r) \theta(r^{-1})}{\phi(r) \phi'(r^{-1})} \right \}.
\end{equation}
Here and throughout, $\phi'(r^{-1})$ denotes the derivative of $\phi$ evaluated at
$r^{-1}$, not the derivative of $z\mapsto\phi(z^{-1})$ at $r$. On the other hand,
the PSD of a zero-mean stationary time series is given by
\begin{equation}\label{PSD_def}
\PSD_X(\omega;\vect{\beta})=\frac{\Delta}{2\pi} \sum^{\infty}_{h=-\infty} e^{-ih2\pi \omega \Delta}\gamma_X(h;\vect{\beta}).
\end{equation}
Substituting \eqref{ACF_ARMA} into \eqref{PSD_def}, we obtain
\begin{align}
\PSD_X(\omega;\vect{\beta}) =&\frac{\Delta \sigma^2_Z}{2\pi} \sum^{\infty}_{h=-\infty} e^{-ih2\pi \omega \Delta}\frac{ \mathbb{I}_{[0, q-p]}(|h|)}{(q-p-|h|)!} \left[ \frac{d^{q-p-|h|}}{dz^{q-p-|h|}} \frac{z^{q-p} \theta(z) \theta(z^{-1})}{\phi(z) \phi(z^{-1})} \right]_{z=0} \label{PSD_0}\\
&-\frac{\Delta \sigma^2_Z}{2\pi} \sum^{\infty}_{h=-\infty} e^{-ih2\pi \omega \Delta}\sum_{r \in \mathcal{R}}  \frac{r^{|h|+1} \theta(r) \theta(r^{-1})}{\phi(r) \phi'(r^{-1})}. \label{PSD_rr}
\end{align}
We now develop the first term, given by \eqref{PSD_0}, which is present only when
$q\geq p$; let $m=q-p$. The indicator $\mathbb{I}_{[0,m]}(|h|)$ vanishes for
$|h|>m$, so the sum reduces to the finite range $h=-m,\dots,m$:
$$\frac{\Delta \sigma^2_Z}{2\pi} \sum_{h=-m}^{m} c_h\,e^{-ih2\pi\omega\Delta},$$
where
$$c_h = \frac{1}{(m-|h|)!}\left[\frac{d^{\,m-|h|}}{dz^{\,m-|h|}}\,
\frac{z^{q-p}\,\theta(z)\theta(z^{-1})}{\phi(z)\phi(z^{-1})}\right]_{z=0},\quad |h|\le m.$$
Separating the $h=0$ term and pairing $h$ with $-h$, this sum can be written as
\begin{eqnarray*}
  \frac{\Delta \sigma^2_Z}{2\pi} \sum_{h=-m}^{m} c_h\,e^{-ih2\pi\omega\Delta} & = & \frac{\Delta \sigma^2_Z}{2\pi} \left [ c_0 + \sum_{h=1}^{m} \left \{  c_h e^{-ih2\pi\omega\Delta} +  c_{-h} e^{ih2\pi\omega\Delta} \right \}  \right ].
\end{eqnarray*}
Since $c_h$ depends on $h$ only through $|h|$, we have $c_{-h}=c_h$, and the above
equation simplifies to
\begin{eqnarray*}
  \frac{\Delta \sigma^2_Z}{2\pi} \sum_{h=-m}^{m} c_h\,e^{-ih2\pi\omega\Delta} & = & \frac{\Delta \sigma^2_Z}{2\pi} \left [ c_0 + \sum_{h=1}^{m} c_h\left \{   e^{-ih2\pi\omega\Delta} + e^{ih2\pi\omega\Delta} \right \}  \right ] \nonumber \\
  & = & \frac{\Delta \sigma^2_Z}{2\pi} \left [ c_0 + 2\sum_{h=1}^{m} c_h \cos(2\pi h \omega \Delta)  \right ],
\end{eqnarray*}
which corresponds to the term $\PSD_{X,0}(\omega;\vect{\beta})$ in \eqref{P0}.

Regarding the second term, given by \eqref{PSD_rr}, the sum over $r\in\mathcal{R}$
is finite. Therefore, we can interchange the order of summation and write
$$-\frac{\Delta \sigma^2_Z}{2\pi} \sum_{r \in \mathcal{R}} \frac{r \theta(r) \theta(r^{-1})}{\phi(r) \phi'(r^{-1})} \sum^{\infty}_{h=-\infty} e^{-ih2\pi \omega \Delta} r^{|h|}.$$
The inner summation $\sum^{\infty}_{h=-\infty} e^{-ih2\pi \omega \Delta} r^{|h|}$ can be
computed as
\begin{flalign*}
\sum^{\infty}_{h=-\infty} e^{-ih2\pi \omega \Delta} r^{|h|}&=\sum^{0}_{h=-\infty} e^{-ih2\pi \omega \Delta} r^{-h} + \sum^{\infty}_{h=0} e^{-ih2\pi \omega \Delta} r^{h} - 1\\
&=  \sum^{\infty}_{h=0} (e^{i2\pi \omega \Delta} r)^{h} + \sum^{\infty}_{h=0} (e^{-i2\pi \omega \Delta} r)^{h} - 1.
\end{flalign*}
Since $|r| < 1$, we can apply the geometric series formula
\begin{flalign*}
\sum^{\infty}_{h=-\infty} e^{-ih2\pi \omega \Delta} r^{|h|}&=\frac{1}{1-r e^{i2\pi \omega \Delta}} + \frac{1}{1-r e^{-i2\pi \omega \Delta}} -1\\
&= \frac{1-r^2}{1-2r \cos \left(2\pi \omega \Delta \right)+r^2}.
\end{flalign*}
Therefore, the second component of the PSD in \eqref{PSD_rr} can be rewritten as
$$\sum_{r \in \mathcal{R}} K(r)\,\frac{\Delta(1-r^2)}{1-2r \cos(2\pi \omega \Delta)+r^2},$$
whose summands are the components $\PSD_{X,r}(\omega;\vect{\beta})$ of \eqref{Pr}.

\subsection{Proof of Equation \eqref{Pcomplex}}\label{proof_P_complex}

The component $\PSD_{X,r}(\omega;\vect{\beta})$ defined by \eqref{Pr} can also be expressed as
\begin{equation}\label{psd_aux}
\PSD_{X,r}(\omega;\vect{\beta})= - \frac{\Delta \sigma^2_Z}{2\pi} \frac{r \theta(r) \theta(r^{-1})} {\phi(r) \phi'(r^{-1})} \sum_{h=-\infty}^{\infty} e^{-i h 2\pi \omega \Delta}\, r^{|h|}.
\end{equation}
Consider the case of complex roots, that is, $r = a+ib= s\,e^{i\alpha}$ and
$\overline{r} = a-ib= s\,e^{-i\alpha}$, where $s = \sqrt{a^2 + b^2}=|r|$ and
$\alpha = \arg(r)$.

We first replace $r$ in \eqref{psd_aux} by the root $r = s\,e^{i\alpha}$, obtaining
\begin{align*}
\PSD_{X,{r}}(\omega;\vect{\beta})
&= -\frac{\Delta\sigma^2_Z}{2\pi} \frac{r \theta(r) \theta(r^{-1})}{\phi(r)\, \phi'(r^{-1})} \sum_{h=-\infty}^{\infty} e^{-i h 2\pi \omega \Delta}\, r^{|h|}\\
&=-\frac{\Delta\sigma^2_Z}{2\pi} \frac{r \theta(r) \theta(r^{-1})}{\phi(r)\, \phi'(r^{-1})}  \left\{\sum_{h=0}^{\infty} e^{i h 2\pi \omega \Delta}\, r^h +\sum_{h=0}^{\infty} e^{-i h 2\pi \omega \Delta}\, r^{h}-1\right\}\\
&= \Delta K(r) \left\{ \sum_{h=0}^{\infty} \left( s e^{i(2\pi \omega \Delta+\alpha)} \right)^h+\sum_{h=0}^{\infty} \left( s e^{-i(2\pi \omega \Delta-\alpha)} \right)^h- 1 \right\} \nonumber\\
&= \Delta K(r) \left\{ \frac{1}{1 - s e^{i(2\pi \omega \Delta+\alpha)}}
   + \frac{1}{1 - s e^{-i(2\pi \omega \Delta-\alpha)}} - 1 \right\} \nonumber\\
&=\Delta \left[ \Real(K(r)) + i \Imag(K(r)) \right]
   \left\{ \vphantom{\frac{1}{1 - s e^{i(2\pi \omega \Delta+\alpha)}}} \frac{1}{1 - s e^{i(2\pi \omega \Delta+\alpha)}}+
   \frac{1}{1 - s e^{-i(2\pi \omega \Delta-\alpha)}}- 1\right\}
\end{align*}
where
$$K(r)=-\frac{\sigma^2_Z}{2\pi} \frac{r\,\theta(r)\, \theta\big(r^{-1}\big)}{\phi(r)\, \phi'\big(r^{-1}\big)}.$$

Next, we repeat the procedure replacing $r$ in \eqref{psd_aux} by the root
$\overline{r}=s\,e^{-i\alpha}$, obtaining
\begin{align*}
\PSD_{X,\overline{r}}(\omega;\vect{\beta})
&=\Delta \left[ \Real(K(\overline{r})) + i \Imag(K(\overline{r})) \right]
   \left\{ \vphantom{\frac{1}{1 - s e^{i(2\pi \omega \Delta-\alpha)}}} \frac{1}{1 - s e^{i(2\pi \omega \Delta-\alpha)}}
   + \frac{1}{1 - s e^{-i(2\pi \omega \Delta+\alpha)}} - 1 \right\}
\end{align*}
where
$$K(\overline{r})=-\frac{\sigma^2_Z}{2\pi} \frac{\overline{r}\, \theta(\overline{r})\, \theta\big(\overline{r}^{-1}\big)}{\phi(\overline{r})\, \phi'\big(\overline{r}^{-1}\big)}.$$

Since $\phi$ and $\theta$ have real coefficients, we have $K(\overline{r})=\overline{K(r)}$. Thus,
$$\PSD_{X,\overline{r}}(\omega;\vect{\beta})= \Delta \left[ \Real(K(r)) - i \Imag(K(r)) \right] \left\{ \frac{1}{1 - s e^{i(2\pi \omega \Delta-\alpha)}}+ \frac{1}{1 - s e^{-i(2\pi \omega \Delta+\alpha)}} - 1 \right\}.$$

Summing the terms $\PSD_{X,{r}}(\omega;\vect{\beta})$ and $\PSD_{X,{\overline{r}}}(\omega;\vect{\beta})$ yields
\begin{align*}
    &\widetilde{\PSD}_{X,\tau}(\omega;\vect{\beta})=\PSD_{X,{r}}(\omega;\vect{\beta})+\PSD_{X,{\overline{r}}}(\omega;\vect{\beta})\\
    &= \Delta \left[ \Real(K(r)) + i \Imag(K(r)) \right]
   \left\{ \frac{1}{1 - s e^{i(2\pi \omega \Delta+\alpha)}}+
   \frac{1}{1 - s e^{-i(2\pi \omega \Delta-\alpha)}}- 1\right\} \\
   &+ \Delta  \left[ \Real(K(r)) - i \Imag(K(r)) \right]
   \left\{ \frac{1}{1 - s e^{i(2\pi \omega \Delta-\alpha)}}+
   \frac{1}{1 - s e^{-i(2\pi \omega \Delta+\alpha)}}- 1\right\}\\
   &= \Delta  \Real(K(r))  \left[  \frac{1}{1 - s e^{i(2\pi \omega \Delta+\alpha)}} + \frac{1}{1 - s e^{-i(2\pi \omega \Delta+\alpha)}} - 1 \right]\\
   &+ i \Delta \Imag(K(r))  \left[  \frac{1}{1 - s e^{i(2\pi \omega \Delta+\alpha)}} - \frac{1}{1 - s e^{-i(2\pi \omega \Delta+\alpha)}} \right]\\
   &+ \Delta  \Real(K(r))  \left[  \frac{1}{1 - s e^{i(2\pi \omega \Delta-\alpha)}} + \frac{1}{1 - s e^{-i(2\pi \omega \Delta-\alpha)}} - 1 \right]\\
   &+ i \Delta \Imag(K(r)) \left[  \frac{1}{1 - s e^{i(2\pi \omega \Delta-\alpha)}} - \frac{1}{1 - s e^{-i(2\pi \omega \Delta-\alpha)}} \right]\\
   &=\Real(K(r)) \frac{\Delta (1-s^2)}{1-2s\cos(2\pi \omega \Delta + \alpha) + s^2} -  \Imag(K(r))  \frac{\Delta  2s \sin(2\pi \omega \Delta+\alpha)}{1-2s\cos(2\pi \omega \Delta + \alpha) + s^2}\\
   &+ \Real(K(r)) \frac{\Delta (1-s^2)}{1-2s\cos(2\pi \omega \Delta - \alpha) + s^2} +\Imag(K(r)) \frac{\Delta  2s \sin(2\pi \omega \Delta-\alpha)}{1-2s\cos(2\pi \omega \Delta - \alpha) + s^2}.
\end{align*}

Finally, defining the component functions
\begin{align*}
    \varphi_{s}(\omega,\alpha)&= \Real\!\big(K(r)\big) \frac{\Delta (1-s^2)}{1-2s\cos(2\pi \omega \Delta - \alpha)+s^2},\\
    \zeta_{s}(\omega,\alpha)&= \Imag\!\big(K(r)\big) \frac{\Delta\, 2s \sin(2\pi \omega \Delta-\alpha)}{1-2s\cos(2\pi \omega \Delta - \alpha) + s^2},\\
    \varphi_{s}(-\omega,\alpha)&= \Real\!\big(K(r)\big) \frac{\Delta (1-s^2)}{1-2s\cos(2\pi \omega \Delta + \alpha) + s^2},\\
    \zeta_{s}(-\omega,\alpha)&= -\Imag\!\big(K(r)\big) \frac{\Delta\, 2s \sin(2\pi \omega \Delta+\alpha)}{1-2s\cos(2\pi \omega \Delta + \alpha) + s^2},\\
    \psi_{\tau}(\omega) &= \varphi_{s}(\omega,\alpha) + \zeta_{s}(\omega,\alpha),\\
    \psi_{\tau}(-\omega) &= \varphi_{s}(-\omega,\alpha) + \zeta_{s}(-\omega,\alpha),
\end{align*}
we obtain the expression in \eqref{Pcomplex}.

\subsection{A level-set lemma}\label{proof_lemma_arc}

Throughout the remainder of this appendix, $\tau=\{r,\bar r\}$ is a
complex-conjugate class with $r=s\,e^{i\alpha}$, $s\in(0,1)$, and we write
$A=\Real\big(K(r)\big)$, $B=\Imag\big(K(r)\big)$ and $|K|=\sqrt{A^2+B^2}$,
assuming $K(r)\neq0$. Setting
\begin{equation}\label{app:lambda}
\lambda = 2\pi\omega\Delta-\alpha,
\qquad
D(\lambda)=1+s^2-2s\cos\lambda,
\qquad
N(\lambda)=A(1-s^2)+2Bs\sin\lambda,
\end{equation}
the component \eqref{Pcomplex_LS} reads $\psi_\tau(\omega)=\Delta\,g(\lambda)$ with
$g=N/D$. Since $D(\lambda)\ge(1-s)^2>0$, the function $g$ is smooth, bounded and
$2\pi$-periodic, and $\omega\mapsto\lambda$ maps the frequency band
$(-1/(2\Delta),1/(2\Delta)]$ bijectively onto the circle, so every statement about
$g$ on the circle transfers to $\psi_\tau$ on the band.

\begin{lemma}\label{lem:arc}
Fix $\kappa\in\mathbb{R}$ and set
\begin{equation}\label{app:PQC}
C(\kappa)=\kappa(1+s^2)-A(1-s^2),
\qquad
R(\kappa)=2s\sqrt{\kappa^2+B^2},
\qquad
\mu_\kappa=\operatorname{atan2}(B,\kappa).
\end{equation}
Then $g(\lambda)\ge\kappa$ if and only if
$R(\kappa)\cos(\lambda-\mu_\kappa)\ge C(\kappa)$. Consequently, if
$|C(\kappa)|\le R(\kappa)$ and
$\gamma_\kappa=\arccos\big(C(\kappa)/R(\kappa)\big)\in[0,\pi]$, the superlevel set
$\{\lambda:g(\lambda)\ge\kappa\}$ is the closed arc
$[\mu_\kappa-\gamma_\kappa,\ \mu_\kappa+\gamma_\kappa]$ modulo $2\pi$; it is empty
if $C(\kappa)>R(\kappa)$ and the whole circle if $C(\kappa)<-R(\kappa)$.
\end{lemma}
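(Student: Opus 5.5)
The plan is to clear the positive denominator and reduce the inequality to a single sinusoid in $\lambda$. Since $D(\lambda)\ge(1-s)^2>0$ for every $\lambda$, the inequality $g(\lambda)\ge\kappa$ is equivalent to $N(\lambda)\ge\kappa D(\lambda)$. Substituting the definitions in \eqref{app:lambda} gives
\begin{equation*}
A(1-s^2)+2Bs\sin\lambda\ \ge\ \kappa(1+s^2)-2\kappa s\cos\lambda .
\end{equation*}
After collecting the trigonometric terms on the left, this reads $2s\big(\kappa\cos\lambda+B\sin\lambda\big)\ge \kappa(1+s^2)-A(1-s^2)=C(\kappa)$.

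The next step is to write $\kappa\cos\lambda+B\sin\lambda$ in amplitude--phase form. Let $(\kappa,B)=\sqrt{\kappa^2+B^2}\,(\cos\mu_\kappa,\sin\mu_\kappa)$ with $\mu_\kappa=\operatorname{atan2}(B,\kappa)$. Then the addition formula gives
\begin{equation*}
\kappa\cos\lambda+B\sin\lambda=\sqrt{\kappa^2+B^2}\,\cos(\lambda-\mu_\kappa).
\end{equation*}
Multiplying by $2s$ yields exactly $R(\kappa)\cos(\lambda-\mu_\kappa)\ge C(\kappa)$, which is the first claim.

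The one degenerate case is $\kappa=B=0$, where $\operatorname{atan2}(0,0)$ is undefined. Here both sides of the identity vanish for any choice of $\mu_\kappa$, so the equivalence holds with $R=0$ and the convention is irrelevant. The paper assumes $K(r)\neq0$, but that does not by itself exclude $B=0$ with $\kappa=0$, so this case should be noted explicitly.

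The consequences then follow by reading off the superlevel set of $x\mapsto\cos x$ on the circle.

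\textbf{Case $|C|\le R$ with $R>0$.} The condition is $\cos(\lambda-\mu_\kappa)\ge C/R\in[-1,1]$. Because $\arccos$ is a decreasing bijection from $[-1,1]$ onto $[0,\pi]$, and $\cos$ is even and decreasing on $[0,\pi]$, this holds exactly when $|\lambda-\mu_\kappa|\le\gamma_\kappa$ modulo $2\pi$. That is the closed arc $[\mu_\kappa-\gamma_\kappa,\mu_\kappa+\gamma_\kappa]$.

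\textbf{Case $C>R$.} The set is empty, since $R\cos(\cdot)\le R<C$.

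\textbf{Case $C<-R$.} The set is the whole circle, since $R\cos(\cdot)\ge-R>C$.

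\textbf{Case $R=0$.} This happens only when $\kappa=B=0$. Then $|C|\le R$ forces $C=0$, so the inequality holds everywhere and $C/R$ is undefined. It is cleanest to state that the arc formula presupposes $R>0$ and treat this case separately. Note that $s>0$, so $R=0$ cannot arise from $s$.

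No step is a real obstacle; the work is a direct computation. The only points needing care are the sign bookkeeping when moving $-2\kappa s\cos\lambda$ across the inequality, which is what produces the pair $(\kappa,B)$ and hence $\mu_\kappa=\operatorname{atan2}(B,\kappa)$, and the degenerate boundary case $R=0$.
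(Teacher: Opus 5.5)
Your proof is correct and follows the same route as the paper. You clear the positive denominator $D$, rewrite $2s(\kappa\cos\lambda+B\sin\lambda)$ as $R(\kappa)\cos(\lambda-\mu_\kappa)$, read off the superlevel set of the cosine, and treat $\kappa=B=0$ separately. One sharpening, as the paper notes: in that case $K(r)\neq0$ forces $A\neq0$, so $C=-A(1-s^2)\neq0=R$. Your subcase $|C|\le R=0$ is therefore vacuous, and only the empty or whole-circle clauses can apply.
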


\begin{proof}
As $D>0$, the inequality $g\ge\kappa$ is equivalent to $N-\kappa D\ge0$, that is to
$2\kappa s\cos\lambda+2Bs\sin\lambda\ge C(\kappa)$. Writing the left-hand side as
$R(\kappa)\cos(\lambda-\mu_\kappa)$ with $R$ and $\mu_\kappa$ as in \eqref{app:PQC},
and noting that $\cos(\lambda-\mu_\kappa)\ge C/R$ holds exactly for
$|\lambda-\mu_\kappa|\le\arccos(C/R)$ modulo $2\pi$, gives the claim. The angle
$\mu_\kappa$ is undefined only if $R(\kappa)=0$, that is if $\kappa=0$ and $B=0$, in
which case $C=-A(1-s^2)$ and the last two clauses apply directly.
\end{proof}

Since every superlevel set of $g$ is a single arc, $g$ is unimodal on the circle:
it has exactly one local maximum and one local minimum. The three proofs below rest
on this.

\subsection{Proof of Proposition \ref{prop:peak}}\label{proof_prop2}

\begin{proof}
By Lemma~\ref{lem:arc} the level $\kappa$ is attained by $g$ if and only if
$C(\kappa)^2\le R(\kappa)^2$. Expanding and using $(1+s^2)^2-4s^2=(1-s^2)^2$, this is
$\Pi(\kappa)\le0$ with
\begin{equation}\label{app:quadratic}
\Pi(\kappa)=(1-s^2)^2\kappa^2-2A(1-s^4)\,\kappa+A^2(1-s^2)^2-4s^2B^2 .
\end{equation}
The leading coefficient is positive and the reduced discriminant of $\Pi$ is
$4s^2(1-s^2)^2|K|^2>0$, so $\Pi$ has the distinct real roots
\begin{equation*}
g_{\max}=\frac{A(1+s^2)+2s|K|}{1-s^2},
\qquad
g_{\min}=\frac{A(1+s^2)-2s|K|}{1-s^2},
\end{equation*}
and $\{\Pi\le0\}=[g_{\min},g_{\max}]$ is exactly the range of $g$. Multiplying by
$\Delta$ gives $G_{\max}=\Delta\,g_{\max}$ and $G_{\min}=\Delta\,g_{\min}$ as in
\eqref{eq:extremes}.

At $\kappa=g_{\max}$ we have $|C|=R$. If $C=-R$, then $\gamma_{g_{\max}}=\pi$ and
Lemma~\ref{lem:arc} would make the set the whole circle, so
$g\equiv g_{\max}$ and hence $g_{\max}=g_{\min}$, contradicting $2s|K|>0$.
Therefore $C=R$, so
$\gamma_{g_{\max}}=0$ and the arc collapses to the single point
\begin{equation*}
\lambda^\star=\mu_{g_{\max}}=\operatorname{atan2}\big(B,\,g_{\max}\big),
\end{equation*}
which is the unique maximiser. Undoing \eqref{app:lambda} through
$\omega=(\lambda+\alpha)/(2\pi\Delta)$ and reducing modulo $1/\Delta$ into the band
yields \eqref{eq:wmax}. Applying the same argument to $-g$, whose parameters are
$-A$ and $-B$, gives $C=-R$ at $\kappa=g_{\min}$, so the minimum is attained at the
single point $\lambda_{\min}=\mu_{g_{\min}}+\pi
=\operatorname{atan2}\big(-(1-s^2)B,\ 2s|K|-(1+s^2)A\big)$, which is \eqref{eq:wmin}.
\end{proof}

\subsection{Proof of Proposition \ref{prop:length}}\label{proof_prop3}

\begin{proof}
Let $\kappa_k=\ell_k/\Delta=k\,g_{\max}$. By Proposition~\ref{prop:peak} the level
$\ell_k$ is attained if and only if $g_{\min}\le\kappa_k\le g_{\max}$. Since
$G_{\max}>0$, the upper bound holds for every $k\le1$, and the lower bound reads
$k\ge g_{\min}/g_{\max}=k_{\min}$, which is the hypothesis. By Lemma~\ref{lem:arc}
this is the same as $|C(\kappa_k)|\le R(\kappa_k)$, so $\gamma_k$ in \eqref{eq:arc}
is well defined, and $\gamma_k\in(0,\pi)$ because $g_{\min}<\kappa_k<g_{\max}$.

The level set $g=\kappa_k$ is $R(\kappa_k)\cos(\lambda-\mu_k)=C(\kappa_k)$ with
$\mu_k=\mu_{\kappa_k}$, whose two solutions on the circle are
$\lambda^{L,R}_k=\mu_k\mp\gamma_k$, and the superlevel set is the arc between them.
Undoing \eqref{app:lambda} gives \eqref{eq:wLR}, and
$\lambda^R_k-\lambda^L_k=2\gamma_k$ gives the width
$W_k=2\gamma_k/(2\pi\Delta)=\gamma_k/(\pi\Delta)$ of \eqref{eq:width}. Finally
$g(\lambda^\star)=g_{\max}\ge\kappa_k$, so $\lambda^\star$ belongs to the superlevel
arc, that is, $\omega_{\max}$ lies between $\omega^L_k$ and $\omega^R_k$.
\end{proof}

\subsection{Proof of Proposition \ref{prop:AIclosed}}\label{proof_prop4}

\begin{proof}
By the proof of Proposition~\ref{prop:length},
$2\pi\Delta\,W^L_k=(\lambda^\star-\mu_k)+\gamma_k$ and
$2\pi\Delta\,W^R_k=\gamma_k-(\lambda^\star-\mu_k)$, so
$AI_k=(\lambda^\star-\mu_k)/\gamma_k$. Since
$\operatorname{atan2}(B,x)=\arg(x+iB)$ and $\kappa_k=k\,g_{\max}$,
\begin{equation*}
\lambda^\star-\mu_k=\arg\frac{g_{\max}+iB}{\kappa_k+iB}
=\arg\Big[k\,g_{\max}^{2}+B^{2}-i(1-k)B\,g_{\max}\Big],
\end{equation*}
whose real part is positive, so the argument lies in $(-\pi/2,\pi/2)$ and equals the
$\arctan$ in \eqref{eq:AIclosed}. The sign statement and the case $B=0$ follow at
once. As $k\uparrow1$ the numerator is $O(1-k)$ while $\gamma_k=O(\sqrt{1-k})$,
because $g_{\max}$ is a simple root of $\Pi$ in \eqref{app:quadratic}, whence
$AI_k=O(\sqrt{1-k})$.
\end{proof}

\clearpage
\setcounter{section}{0}
\renewcommand{\thesection}{S.\arabic{section}}
\section*{Supplementary Material}

This supplement provides the full Kalman filter recursions used for likelihood
evaluation of the ARMA state-space model, and the parametric bootstrap procedure, used to construct standard deviations and confidence intervals for
the estimated model parameters.

\section{Kalman filter recursions}
\label{S1}

This section details the Kalman filter recursions used to evaluate the
likelihood of the ARMA$(p,q)$ state-space model with measurement error. 

The following steps summarize the procedure.

\begin{enumerate}
    \item Initialize the state vector $\vect{x}$ and its covariance matrix $\matr{P}$ at time $1$ as
    \begin{align*}
    \vect{x}_{1|0} &= \vect{x}^*_{1|0}\\
    \matr{P}_{1|0} &= \matr{P}^*_{1|0},
    \end{align*}
    where $\vect{x}^*_{1|0}$ and $\matr{P}^*_{1|0}$ are specified initial values. 
    
    \item Calculate the prediction of the next observation as  $$\mean({y}_1|{y}_0)=\vect{F}\vect{x}_{1|0}.$$
    
    \item The innovation, $r_{1}$, is the predictive residual, i.e.,
    $$r_{1}={y}_{1}-\mean({y}_1|{y}_0).$$

    \item The innovation variance of ${y}_{1}$ is
    $$\vari({y}_1|{y}_0)=\vect{F}\matr{P}_{1|0}\vect{F}^\top+\sigma^2_{\eta}.$$

    \item Initialize the Kalman gain:
    $$\vect{K}_{1}=\matr{P}_{1|0}\vect{F}^{\top} \vari({y}_1|{y}_0)^{-1}.$$

    \item Update the estimate of the state vector
    $$\vect{x}_{1|1}=\vect{x}_{1|0}+\vect{K}_{1}r_{1}$$
    with the corresponding covariance matrix being
    $$\matr{P}_{1|1}=\matr{P}_{1|0}-\vari({y}_1|{y}_0)\vect{K}_{1}\vect{K}^{\top}_{1}.$$

After initializing the Kalman filter as above, repet the following steps for $t=2,\dots,n$:

    \item Calculate the one-step ahead predictor
$$\vect{x}_{t|t-1}=\matr{G}\vect{x}_{t-1|t-1}.$$

    \item Calculate the covariance matrix of the preceding predictor
$$\matr{P}_{t|t-1}=\matr{G}\matr{P}_{t-1|t-1}\matr{G}^{\top}+\matr{\Sigma}_w.$$

    \item The predictor of the next observation is 
$$\mean({y}_{t}|\vect{{y}}_{<t})=\vect{F}\vect{x}_{t|t-1}.$$

    \item The innovation, $r_{t}$, is the predictive residual, i.e.,
$$r_{t}={y}_{t}-\mean({y}_{t}|\vect{{y}}_{<t}).$$

    \item The innovation variance of ${y}_{t}$ is
$$\vari({y}_{t}|\vect{{y}}_{<t})=\vect{F}\matr{P}_{t|t-1}\vect{F}^{\top}+\sigma^2_{\eta}.$$

    \item The Kalman gain matrix is
$$\vect{K}_{t}=\matr{P}_{t|t-1}\vect{F}^{\top} \vari({y}_{t}|\vect{{y}}_{<t})^{-1}.$$

    \item Update the estimate of the state vector
$$\vect{x}_{t|t}=\vect{x}_{t|t-1}+\vect{K}_{t}r_{t}$$
with the corresponding covariance matrix being
$$\matr{P}_{t|t}=\matr{P}_{t|t-1}-\vari({y}_{t}|\vect{{y}}_{<t})\vect{K}_{t}\vect{K}^{\top}_{t}.$$

\end{enumerate}

The log-likelihood function of the observations $\vect{{y}}=({y}_{1},\dots,{y}_{n})^\top$ is
    $$L(\sigma^2_Z, \sigma^2_{\eta}, \vect{\phi}, \vect{\theta}; \vect{{y}} )=-\frac{n}{2} \log(2\pi)-\frac{1}{2} \sum^{n}_{t=1} \log[\vari({y}_{t}|\vect{{y}}_{<t})] - \frac{1}{2} \sum^{n}_{t=1} \frac{r^2_{t}}{\vari({y}_{t}|\vect{{y}}_{<t})}.$$


\section{Parametric bootstrap procedure}
\label{S2}

This section describes the parametric bootstrap procedure for the state-space
representation of the ARMA$(p,q)$ model with measurement error, used to assess the uncertainty of the parameter estimates of the
selected model.

The procedure consists of the following steps.

\begin{enumerate}
    \item Compute the standardized innovations from the fitted model:
    \begin{equation*}
        e_t^*(\widehat{\boldsymbol{\tau}}) = 
        \frac{r_t}{\vari({y}_{t}|\vect{{y}}_{<t})^{1/2}},
    \end{equation*}
    where $r_t = y_t - \mean({y}_{t}|\vect{{y}}_{<t})$ and $\vari({y}_{t}|\vect{{y}}_{<t})=\vect{F}\matr{P}_{t|t-1}\vect{F}^{\top}+\sigma^2_{\eta}$ 
    are the innovations and their variances, respectively, obtained by running the 
    Kalman filter recursions.
    
    \item Sample with replacement $n$ times from the set 
    $\{e_1^*(\widehat{\boldsymbol{\tau}}), \ldots, 
    e_n^*(\widehat{\boldsymbol{\tau}})\}$ to obtain a bootstrap sample 
    of standardized innovations.

    \item Construct a bootstrap data set $\{y_1^*, \ldots, y_n^*\}$ 
    using the state space representation. Define the augmented vector 
    $\vect{\xi}_t = (\vect{x}_t^\top, y_t)^\top$ which satisfies the recursion
    \begin{equation*}
        \vect{\xi}_{t+1} = \matr{L}_t \vect{\xi}_t + 
        \matr{H}_t\, e_t^*(\widehat{\boldsymbol{\tau}}),
    \end{equation*}
    where
    \begin{equation*}
        \matr{L}_t = \begin{bmatrix} \matr{G} & \mathbf{0} \\ \vect{F} & 0 \end{bmatrix}, 
        \quad
        \matr{H}_t = \begin{bmatrix} \matr{G} \vect{K}_t \,\vari({y}_{t}|\vect{{y}}_{<t})^{1/2} \\ 
        \vari({y}_{t}|\vect{{y}}_{<t})^{1/2} \end{bmatrix},
    \end{equation*}
    with $\matr{G}$ the transition matrix and $\vect{F}$ the observation matrix, and $K_t$ the Kalman gain. Initial conditions of the Kalman filter remain fixed at their given values, and the parameter vector is held fixed fixed at $\widehat{\boldsymbol{\tau}}$.

    The bootstrap 
    observations $y_t^*$ are then extracted as the last component of 
    $\boldsymbol{\xi}_t$ at each time step.

    \item Using the bootstrap data set $\{y_t^*;\, t = 1, \ldots, n\}$, 
    compute the log-likelihood $\ell(\boldsymbol{\tau}; \boldsymbol{y}^*)$ 
    via the Kalman filter recursions and obtain the maximum likelihood estimator
    \begin{equation*}
        \widehat{\boldsymbol{\tau}}^* = \operatorname*{argmax}_{\boldsymbol{\tau}}\, 
        L(\boldsymbol{\tau}; \boldsymbol{y}^*).
    \end{equation*}

    \item Repeat steps 2.-4. a total of $B$ times, obtaining a bootstrapped set of parameter estimates $\{\widehat{\boldsymbol{\tau}}_b^*;\, b = 1, \ldots, B\}$. The bootstrap standard deviation of each parameter estimate is then  computed as the empirical standard deviation of $\{\widehat{\boldsymbol{\tau}}_b^*;\, b = 1, \ldots, B\}$, and the $95\%$ confidence interval is constructed as the interval between the $2.5$th and $97.5$th empirical percentiles of the bootstrap distribution.
\end{enumerate}

\bibliographystyle{plainnat}
\bibliography{Ref}

\end{document}